\newcites{add}{Additional References}
\tikzstyle{distr}=[inner sep=0mm, minimum size=1mm, draw, circle, fill]
\tikzstyle{state}=[inner sep=0.3mm, minimum size=3mm, draw]
\tikzstyle{trarr}=[semithick, -latex, rounded corners]
\tikzstyle{bad state}=[draw,diamond, inner sep=0mm, minimum size=7mm]
\tikzstyle{good state}=[draw, rectangle, inner sep=0mm, minimum size=5mm]
\tikzstyle{stochastic state}=[draw,circle, inner sep=0mm, minimum size=5mm]
\tikzstyle{target state}=[draw,circle,fill=lightgray, inner sep=0mm, minimum size=5mm]
\tikzstyle{dead state}=[draw,circle, inner sep=0mm, minimum size=5mm]
\tikzstyle{arr}=[-latex', rounded corners]
\newcommand{\arginf}{\operatorname*{arg\,inf}}
\newcommand{\argsup}{\operatorname*{arg\,sup}}
\newcommand{\mydef}{\stackrel{\mbox{\rm {\tiny def}}}{=}}
\newcommand{\startpara}[1]{{%
\vskip5pt\noindent
{\bf #1.}}}
\newcommand{\mdp}{M}
\newcommand{\states}{S}
\newcommand{\actions}{A}
\newcommand{\trans}{T}
\newcommand{\gain}{F}
\newcommand{\interest}{\rho}
\newcommand{\aWork}{\mathit{work}}
\newcommand{\aInvest}{\mathit{invest}}
\newcommand{\aProfit}{\mathit{profit}}
\newcommand{\aLoss}{\mathit{loss}}
\newcommand{\disc}[1]{\mathit{disc}(#1)} 
\newcommand{\calF}{\mathcal{F}}
\newcommand{\Runs}{\mathsf{Runs}}
\newcommand{\Hist}{\mathsf{Hist}}
\newcommand{\fpat}{w}
\newcommand{\Cone}{\mathsf{Cone}}
\newcommand{\Rset}{\mathbb{R}}
\newcommand{\discn}[2]{\mathit{disc}_{#2}(#1)}
\newcommand{\size}[1]{||#1||}
\newcommand{\calL}{\mathcal{L}}
\newcommand{\Succ}{\mathit{Succ}}
\newcommand{\Zset}{\mathbb{Z}}
\newcommand{\tot}{\mathit{tot}}
\newcommand{\NP}{\mathit{NP}}
\newcommand{\coNP}{\mathit{coNP}}
\newcommand{\Val}[2][XXX]{\ifthenelse{\equal{#1}{XXX}}{\mathit{Val}(#2)}{\mathit{Val}_{#1}(#2)}}
\newcommand{\WR}[2][XXX]{\ifthenelse{\equal{#1}{XXX}}{\vec{W}(#2)}{\vec{W}_{#1}(#2)}}
\newcommand{\WRvec}{\vec{W}}
\newcommand{\eps}{\varepsilon}
\newcommand{\sinit}{s_{0}}
\newcommand{\xinit}{x_{0}}
\newcommand{\dist}{\mathit{dist}}
\newcommand{\St}[1][XXX]{\ifthenelse{\equal{#1}{XXX}}{\Sigma}{\Sigma_{#1}}}
\newcommand{\run}[1]{\mathit{Run}(#1)}
\newcommand{\gamerun}[1]{\mathit{run}(#1)} 
\newcommand{\game}{G}
\newcommand{\Prb}{\mathbb{P}}
\renewcommand{\Pr}[3]{\Prb^{#1}_{#2,#3}}   
\newcommand{\Qset}{\mathbb{Q}}
\newcommand{\Nset}{\mathbb{N}}
\newcommand{\Win}{\mathit{Win}}
\newcommand{\calO}{\mathcal{O}}
\newcommand{\Hit}{\mathit{Hit}}
\newcommand{\Cover}[1]{\mathit{Ar}(#1)}
\newcommand{\kernel}{\mathit{ker}}
\newcommand{\LU}{\mathit{LU}}
\newcommand{\norm}[1]{\size{#1}_{\infty}}
\newcommand{\theoremlike}[2]{\par\medskip\penalty-250%
{{\bfseries\noindent
#2 \ref{#1}.}}\it}
\newcommand{\thmhelperpre}[2]{\theoremlike{#1}{#2}}
\newcommand{\thmhelperpost}{\par\medskip}
\newenvironment{reftheorem}[1]{\thmhelperpre{#1}{Theorem}}{\thmhelperpost}
\newenvironment{reflemma}[1]{\thmhelperpre{#1}{Lemma}}{\thmhelperpost}
\newenvironment{refproposition}[1]{\thmhelperpre{#1}{Proposition}}{\thmhelperpost }
\renewcommand{\rho}{\varrho}
\newcommand{\discount}{\beta}
\newcommand{\dmdp}{D}
\newcommand{\thr}{Thr}
\renewcommand{\vec}[1]{\mathbf{#1}}
\newcommand{\vertices}{V}
\theoremstyle{plain}
\newtheorem{proposition}[theorem]{Proposition}
\begin{document}

\title{Solvency Markov Decision Processes with Interest}
\author[1]{Tom\'{a}\v{s} Br\'{a}zdil\footnote{Tom\'{a}\v{s} Br\'azdil is supported by the Czech Science Foundation, grant No P202/12/P612.}}
\author[2]{Taolue Chen}
\author[2]{Vojt\v{e}ch Forejt\footnote{ Vojt\v{e}ch Forejt is also affiliated with FI MU, Brno, Czech Republic.}}
\author[1]{Petr Novotn\'y}
\author[2]{Aistis Simaitis}
\affil[1]{Faculty of Informatics, Masaryk University, Czech Republic}
\affil[2]{Department of Computer Science, University of Oxford, UK}

\subjclass{G.3 Probability and statistics.}
\keywords{Markov decision processes, algorithms, complexity, market models.}

\maketitle

\begin{abstract}

Solvency games, introduced by Berger et al., provide an abstract framework
for modelling decisions of a risk-averse investor, whose goal is to avoid ever going broke.
We study a new variant of this model, where, in addition to stochastic environment and
fixed increments and decrements to the investor's wealth, we introduce
interest, which is earned or paid on the current level of savings or debt, respectively.

We study problems related to the minimum initial wealth sufficient to avoid bankrupt\-cy (i.e. steady decrease of the wealth) with probability at least $p$.
We present an exponential time algorithm which approximates this minimum
initial wealth, and show that a polynomial time approximation is not possible unless P $=$ NP.
For the qualitative case, i.e. $p=1$, we show that the problem whether a given number is larger than or equal to the minimum initial wealth belongs to NP $\cap$ coNP, and show that a polynomial time algorithm would yield a polynomial time algorithm for mean-payoff games,
existence of which is a longstanding open problem.
We also identify some classes of solvency MDPs for which this problem is in P.
In all above cases the algorithms also give corresponding bankruptcy avoiding strategies.

\end{abstract}

\section{Introduction}

Markov decision processes (MDP) are a standard model of complex decision-making where results of decisions may be random. An MDP has a set of {\em states}, where each state is assigned a set of {\em enabled actions}. Every action determines a distribution on the set of successor states. A run starts in a state; in every step, a {\em controller} chooses an enabled action and the process moves to a new state chosen randomly according to the distribution assigned to the action. The functions that describe decisions of the controller are called \emph{strategies}. They may depend on the whole history of the computation and the choice of actions may be randomized.

MDPs form a natural model of decision-making in the financial world.
To model nuances of financial markets, various MDP-based models have been developed (see~e.g.~\cite{Schal:MDP-finance,BU:book,BKSV:Solvency-games}).
A common property of these models is that actions correspond to investment choices and result in (typically random) payoffs for the controller.
One of the common aims in this area is to find a {\em risk-averse} controller (investor) who strives to avoid undesirable events~\cite{hamilton1994time,hull2009options}.

In this paper we consider a model based on standard reward structures for MDPs, which is closely related to solvency games studied in~\cite{BKSV:Solvency-games}. The model is designed so that it captures essential properties of risk-averse investments. We assume finite-state MDPs and assign a (real) reward to every action which is collected whenever the action is chosen. The states of the MDP capture the global situation on the market, prices of assets, etc. 
Note that it is usually plausible to model the prices by a finite-state stochastic process 
(see~e.g.~\cite{Schal:MDP-finance}).
Rewards model
money received (positive rewards) and money spent
 (negative rewards) by the controller.
Controllers are then compared w.r.t. their ability to collect the reward over finite or infinite runs.

Standard objectives such as the {\em total reward}, or the {\em long-run average reward} are not suitable for modelling the behaviour of a risk-averse investor as they allow temporary loss of an arbitrary amount of money (i.e., a long sequence of negative rewards), which is undesirable, because normally the controller's access to credit is limited.
The authors of~\cite{BKSV:Solvency-games} consider a ``bankruptcy-avoiding'' objective defined as follows: Starting with an initial amount of wealth $W_0$,
in the $n$-th step, the current wealth $W_n$ is computed from $W_{n-1}$ by adding the reward collected in the $n$-th step. The goal is to find a controller which maximizes the probability of having $W_n>0$ for all $n$.

Although the model of~\cite{BKSV:Solvency-games} captures basic behaviour of a risk-averse investor, it lacks one crucial aspect usually present in the financial
environment, i.e., the {\em interest}. Interests model the value that is received from
holding a certain amount of cash, or conversely, the
cost of having a negative balance.
To accommodate interests, we propose the following extension of the bankruptcy-avoiding objective: Fix an interest rate $\varrho>1$.\footnote{%
For notational convenience, we define the interest rate to be the number $1+r$, where $r>0$ is the usual interest rate, i.e. the percentage of money paid/received over a unit of time.
}
Starting with an initial wealth $W_0$, in the $n$-th step, compute the current wealth $W_n$ from $W_{n-1}$ by adding not only the collected reward but also the interest $(\varrho-1) W_{n-1}$.
The economical motivation for such a model is that the controller
can earn additional amount of wealth by lending its assets for a fixed interest,
and conversely, when the controller is in debt,
it has to pay interest to its creditors (for the clarity of presentation, we suppose the interest earned from positive wealth is the same as the interest paid on debts).

Hence, the objective is to ``manage'' the wealth
so that it stays above some threshold and
does not keep decreasing to negative infinity. More precisely, we want to maximize the probability of having $\liminf_{n\rightarrow \infty} W_n>-\infty$. %
Intuitively, $\liminf_{n\rightarrow \infty} W_n\geq 0$ means that
the controller ultimately does not need to borrow money, and
$-\infty<\liminf_{n\rightarrow \infty} W_n<0$ means that the controller is able to sustain interest payments from
its income.
If $\liminf_{n\rightarrow \infty} W_n=-\infty$, then the
controller cannot sustain interest payments and bankrupts.

An important observation is that this objective is closely related to
another well-studied objective concerning the {\em discounted total reward}.
Concretely, given a discount factor $0<\beta<1$, the discounted total reward $T$ accumulated on a run is defined to be the weighted sum of rewards of all actions on the run where 
the weight of the $n$-th action is $\beta^{n}$. In particular, the \emph{threshold problem} asks to maximize the probability of $T\geq t$ for a given threshold
 $t$. This problem has been 
considered in, e.g.,~\cite{Sob82,CS87,White93,WL99}. 
A variant of the threshold problem is the \emph{value-at-risk} problem \cite{BodaF06} which asks, for a given probability $p$, 
what is the infimum threshold, such that maximal probability of discounted reward surpassing the threshold is at least $p$? 
We show that for every controller, the probability of $T\geq t$ with discount factor $\beta$ is equal to the probability of $\liminf_{n\rightarrow \infty} W_n>-\infty$ with $W_0=-t$
for the interest rate $\varrho\mathop{:=}\frac{1}{\beta}$.
This effectively shows interreducibility of these problems. Note that the interpretation of the discount factor as the inverse of the interest is natural in financial mathematics.

\smallskip

\noindent
\textbf{Contribution.}
We introduce a model of solvency MDPs with interests (referred to as {\em solvency MDPs} for brevity), which allows
to capture the complex dynamics of wealth management under uncertainty.
We show that for every solvency MDP there
is a bound on wealth such that above this bound the
bankruptcy is surely avoided (no matter what the controller is doing), and another bound on wealth
below which the bankruptcy is inevitable.
Nevertheless, we also show that there still might be infinitely many reachable values of wealth between these two bounds.

The main results of our paper concentrate on the complexity of computing minimal wealth with which the controller can stay away from bankruptcy. 
Let $\WR{s_0,p}$ be the {\em infimum} of all initial wealths $W_0$ such that starting in the state $s_0$ with $W_0$ the controller can avoid bankruptcy (i.e., $\liminf_{n\rightarrow \infty} W_n>-\infty$)
with probability at least $p$.
Our overall goal is to compute this number $\WR{s_0,p}$. Solution to this problem is important for a risk-averse investor, 
whose aim is to keep the risk of bankruptcy below some %
acceptable level.

First we consider the \emph{qualitative case}, i.e. $\WR{s_0,1}$. For this case we show a connection with two-player (non-stochastic) games with discounted total reward objectives.
Then, using the results of~\cite{ZP:games-graphs} we show that
there is an {\em oblivious} strategy (i.e., the one that looks only at the current state but is independent of
the wealth accumulated so far) which starting in some state $s_0$ with wealth $\WR{s_0,1}$ avoids bankruptcy with probability one.
The problem whether $W\geq \WR{s_0,1}$ for a given $W$ (encoded in binary) is in $\NP\cap\coNP$ (we also obtain a reduction from discounted total reward games,
showing that improving this complexity bound might be difficult). In addition, the number $\WR{s_0,1}$ can be computed in pseudo-polynomial time.
Further it follows that for a restricted class of solvency Markov chains (i.e. when there is only one enabled action in every state)
the value $\WR{s_0,1}$ can be computed in polynomial time.

The main part of our paper concerns the {\em quantitative case}, i.e. $\WR{s_0,p}$ for an arbitrary probability bound $p$.
\begin{itemize}
\item We give an exponential-time algorithm that approximates  $\WR{s_0,p}$ up to a given absolute error $\eps>0$.
 We actually show that the algorithm runs in time polynomial in the number of control states and exponential in
 $\log(1/(\rho-1))$, $\log(1/\eps)$ and  $\log(r_{\max})$, where $\rho$ is the interest rate and $r_{\max}$ is the maximal $|r|$ where $r$ is a reward associated to some action.
\item Employing a reduction from the Knapsack problem, we show that the above complexity cannot
 be lowered to polynomial in either $\log(1/\eps)$ or $\log(\rho-1)+ \log(r_{\max})$ unless P$=$NP.
\item We give an exponential-time algorithm that for a given $\varepsilon>0$ and initial wealth $W_0$ computes $v$ such that 
if the initial
wealth is increased by $\varepsilon$, then
the probability of avoiding bankruptcy is at least $v$ (i.e.
 $W_0+\varepsilon\geq \WR{s_0,v}$) and $v\geq \sup\{v'\mid W_0\in \WR{s_0,v'}\}$.
\end{itemize}
Moreover, via the aforementioned interreducibility between discounted and solvency MDPs we establish new complexity bounds
for value-at-risk approximation in discounted MDPs.

We note that the aforementioned algorithms employ a careful rounding of numbers representing the current wealth $W_n$. Choosing the right precision for this rounding is quite an intricate step, since a naive choice would only yield a doubly-exponential algorithm.

The paper is organized as follows:
after introducing necessary definitions and clarifying the relation with the discounted MDPs in Section~\ref{sec:prelims}
we summarise the results for qualitative problem in Section~\ref{sec:qualitative}.
In Section~\ref{sec:quantitative}
we give the contributions for the quantitative problem.

\smallskip
\noindent
\textbf{Related work.}
Processes involving interests and their formal models naturally emerge in the field of financial mathematics. An MDP-based model of a financial market is presented, e.g., in Chapter 3 of \cite{BU:book}. There, in every step the investor has to allocate his current wealth between riskless bonds, on which he receives an interest according to some fixed interest rate, and several risky stocks, whose price is subject to random fluctuations. Optimization of the investor's portfolio with respect to various utility measures was studied. However, this portfolio optimization problem was considered only in the finite-horizon case, where the trading stops 
after some fixed number of steps. In contrast, we concentrate on the long-term stability of the investor's wealth. Also, the model in \cite{BU:book} was analysed mainly from the mathematical perspective (e.g., characterizing the form of optimal portfolios), while we focus on an efficient algorithmic computation of the optimal investor's behaviour.

The issues of a long-term stability and algorithms were considered for other related models,
all of which concern 
total accumulated reward properties.
Our model is especially close to
{\em solvency games}~\cite{BKSV:Solvency-games},
which are in fact MDPs with a single control state, where the investor aims to keep the total accumulated reward non-negative.
In {\em energy games}~(see e.g.~\cite{CdAHS03,CHD:energy-games,CHD:energy-MDPs}),
there are two competing players, but no stochastic behaviour.
In \emph{one-counter} MDPs~\cite{BBEK:OC-games-termination-approx-journal},
the counter can be seen as a storage for the current
value of wealth. All these models differ from the topic
studied in this paper in that they do not consider interest on wealth.
This makes them fundamentally different in terms of their properties,
e.g. in our setting the set of all wealths reachable from a given initial
wealth can have nontrivial limit points.
Also, in all the three aforementioned models,
the objective is to stay in the positive wealth.
Here we focus on a different objective to capture the idea that
it is admissible to be in debt as long as it is possible
to maintain the debt above some limit.

As mentioned before, our work is also related to the threshold discounted total reward  objectives, 
which were considered in \cite{Sob82,CS87,White93,WL99}, %
where the authors studied finite- and infinite-horizon
cases.
In the finite-horizon case, in particular \cite{WL99} gave an algorithm to compute the probability, but a careful analysis shows that their algorithm has a doubly-exponential worst-case complexity when the planning horizon (i.e., the number of steps after which the process halts) is encoded in binary. In \cite{BodaFLS04} they proposed to approximate the probability through the discretisation of wealth, but in the worst the error of approximation is $1$, no matter how small discretisation step is taken.
In \cite{WL99}, the optimality equation characterising optimal probabilities has been provided for the infinite-horizon case, but
no algorithm was proposed. 
Moreover, \cite{BodaF06} considered the ``value-at-risk'' problem, but again only for the finite-horizon
case, giving a doubly-exponential approximation algorithm.
Although we consider only infinite-horizon MDPs, the
exponential-time upper bound for the
$\WR{s,p}$ approximation and the NP-hardness lower bound can be easily carried over to the finite-horizon case. Thus, we establish new complexity bounds for value-at-risk approximation in both finite and infinite-horizon discounted MDPs. %
We also mention \cite{FKR95} which introduced the percentile performance criteria where 
the controller aims to find a strategy achieving a specified value of the long-run limit average reward 
at a specified probability level (percentile). %

\section{Preliminaries}
\label{sec:prelims}

We denote by $\Nset$, $\Zset$, $\Qset$ and $\Rset$ the sets of all natural, integer, rational and real numbers, respectively.
For an index set $I$, its member $i$ and vector $\vec{V}\in \Rset^I$ we denote by $\vec{V}(i)$ the $i$-component of $\vec{V}$. 
The encoding size of an object $B$ is denoted by $\size{B}$. We use $\log x$ to refer to the binary logarithm of $x$. We assume that all numbers are represented in binary and that rational numbers are 
represented as fractions of binary-encoded integers.

We assume familiarity with basic notions of probability theory.
Given an at most countable set $X$, we use $\dist(X)$ to denote all
probability distributions on $X$.

\begin{definition}[MDP]
A \emph{Markov decision process} (MDP) is a tuple
$\mdp = (\vertices, \actions, \trans)$ where $\vertices$ is at most countable set of \emph{vertices},
$\actions$ is a finite set of \emph{actions}, and
$\trans:\vertices\times \actions \rightarrow \dist(\vertices)$ is a partial \emph{transition function}. We assume that for every $v \in V$ the set $\actions(v)$ of all actions available at $v$ (i.e., the set off all actions $a$ s.t. $\trans(v,a)$ is defined) is nonempty.
\end{definition}

We denote by $\Succ(v,a)=\{u\mid \trans(v,a)(u)>0\}$ the \emph{support} of $\trans(v,a)$.
A {\em Markov chain} is an MDP with 
one action per vertex, i.e., $|\actions(v)|=1$
for all $v \in \vertices$.

From a given initial vertex $v_0\in\vertices$ the MDP evolves as follows.
An {\em infinite path} (or {\em run}) is a sequence $v_0a_1v_1a_2v_2\dots\in (\vertices\times \actions)^\omega$ 
such that $a_{i+1}\in \actions(v_i)$ and $v_{i+1}\in \Succ(v_i,a_{i+1})$ for all $i$.
A {\em finite path} (or \emph{history}) is a prefix
of a run ending with a vertex, i.e. a word of the form  $(\vertices\times \actions)^*\vertices$.
We refer to the set of all runs as $\Runs_\mdp$ and to the set of all histories as $\Hist_{\mdp}$.
For a finite or infinite path $\omega=v_0a_1v_1a_2v_2\dots$ and $i \in \Nset$ we denote by $\omega_i$ the 
finite path
$v_0a_1\cdots a_{i}v_i$.

A \emph{strategy} in $\mdp$ is a function that to every history $w$
assigns a distribution on actions available in the last vertex of $w$.
A strategy is {\em deterministic} if it always assigns distributions that choose some action with probability $1$, and {\em memoryless}
if it only depends on the last vertex of history. %
We use $\St[\mdp]$ (or just $\St $) for the set of all strategies of $\mdp$.

Each history $\fpat\in \Hist_{\mdp}$ determines the set $\Cone(\fpat)$ consisting of
all runs having $\fpat$ as a prefix. To an MDP $\mdp$, its vertex $v$ and strategy $\sigma$
we associate the probability space
$(\Runs_\mdp,\calF,\Pr{\sigma}{\mdp}{v})$, where $\calF$ is the $\sigma$-field generated by all $\Cone(\fpat)$,
and $\Pr{\sigma}{\mdp}{v}$ is the unique probability measure such that for every history $w=v_0a_1\dots a_{k}v_k$ we have
$\Pr{\sigma}{\mdp}{v}(\Cone(w)) =
\mu(v_0) \cdot \prod_{i=1}^{k} x_i$, where $\mu(v_0)$ is $1$ if $v_0=v$ and $0$ otherwise, and where
$x_i = \sigma(w_{i-1})(a_i)\cdot T(v_{i-1},a_i)(v_i)$ for all \mbox{$1 \leq i \leq k$} (the empty
product is equal to $1$). We drop $M$ from the subscript when the MDP is clear from the context.

\begin{definition}[Solvency MDP]
A \emph{solvency Markov decision process}
is a tuple $(\states, \actions, \trans, \gain, \interest)$ where $\states$ is a finite set of \emph{states}, $A$ and $T$ are such that $(\states, \actions, \trans)$ is an MDP,
$\gain: \states\times \actions \rightarrow \Qset$ is a partial \emph{gain function}  and $\interest\in\Qset \cap (1,\infty)$ is an \emph{interest rate}.
\end{definition}
We stipulate that for every $(s,a)\in \states\times\actions$ the value $\gain(s,a)$ is defined iff $a\in \actions(s)$. 
A {\em solvency Markov chain} is a solvency MDP with one action per state, i.e. $|\actions(s)|=1$
for all $s \in \states$.
A \emph{configuration} of a solvency MDP $\mdp=(\states, \actions, \trans, \gain,\interest)$
is represented as a state-wealth pair $(s,x)$ where $s\in \states$ and $x\in \Qset$.
The semantics of $\mdp$ is given by an infinite-state MDP
$\mdp_\interest=(\states\times \Qset, \actions, \trans_\interest)$ where
for every $(s,x)\in \states\times\Qset$ and $a\in \actions(s)$ we
define $\trans_\interest((s,x),a)(s',\interest\cdot x + \gain(s,a)) = p$ whenever
$\trans(s,a)(s') = p$.
We sometimes do not distinguish between $\mdp$ and $\mdp_\interest$ and refer
to strategies or runs of $\mdp$ where strategies or runs of $\mdp_\interest$
are intended.
A strategy $\sigma$ for $\mdp_\interest$ is {\em oblivious} if it is memoryless and
does not make its decision based on the current wealth, i.e. for all $w\cdot (s,x)$ and
$(s,x')$ we have $\sigma(w\cdot (s,x)) = \sigma((s,x'))$.

\startpara{Objectives}
 Given an solvency MDP $\mdp$ and its initial configuration $(s_0, x_0)$, we are interested in the set of runs in which
 the wealth always stays above some finite bound, denoted by
 $\Win = \Runs_\mdp \setminus \{(s_0, x_0)a_1(s_1, x_1)\cdots \in \Runs_\mdp \mid
 \liminf_{n\rightarrow \infty} x_n = -\infty\}$.
 Intuitively, this objective models the ability of the
investor not to go bankrupt, i.e. to compensate for the incurred interest by 
obtaining sufficient gains.
 We denote $\Val[\mdp]{s_0,x_0} = \sup_{\sigma} \Pr{\sigma}{\mdp}{(s_0,x_0)}(\Win)$
 the maximal probability of winning with a given wealth, and
 $\WR[\mdp]{s,p} = \inf \{x \mid \Val[\mdp]{s,x} \ge p\}$ the infimum of wealth sufficient for winning with probability
 $p$.
In this paper we are mainly interested in the problems of computing or approximating
the values of $\WR[\mdp]{s,p}$.
We also address the problem of computing a convenient
risk-averse strategy for an investor with a given initial wealth $x_0$. 
A precise definition of what we mean by a convenient strategy is 
given in Section \ref{sec:quantitative} (Theorem \ref{thm:quantitative-algorithm}).
We say that a strategy is $p$-winning (in an initial configuration $(s_0,x_0)$)
if $\Prb^{\sigma}_{\mdp,(s_0,x_0)}(\Win)\ge p$. A $1$-winning strategy is called
{\em almost surely winning}, and strategy $\sigma$ with  $\Prb^{\sigma}_{\mdp,(s_0,x_0)}(\Win)= 0$ is called {\em almost surely losing}. %

\begin{example}\label{ex:running}
 Consider the following solvency MDP $\mdp=(\states, \actions, \trans, \gain, \interest)$:
\begin{center}
\begin{tikzpicture}
\tikzstyle{every node}=[font=\footnotesize]
 \node[good state] at (0,0) (s0) {$s_0$};
 \node[distr] at (3,0) (d0a) {};
 \node[distr] at (-1,0) (d0b) {};
 \node[good state] at (5,0.6) (s1) {$s_1$};
 \node[good state] at (5,-0.6) (s2) {$s_2$};
 \node[distr] at (2,0.6) (d1) {};
 \node[distr] at (2,-0.6) (d2) {};

 \draw[trarr] (s0) -- (d0a) node[midway, above] {$\aInvest$, $\mathbf{-10}$} -- (s1) node[pos=0.8, below] {$0.1$};
 \draw[trarr] (d0a) -- (s2) node[pos=0.8, above] {$0.9$};
 \draw[trarr] (s1) -- (d1) node[pos=0.65, above] {$\aProfit$, $\mathbf{60}$} -| (s0) node[pos=0.25, above] {$1$} ;
 \draw[trarr] (s2) -- (d2) node[pos=0.65, above] {$\aLoss$, $\mathbf{0}$} -| (s0) node[pos=0.25, above] {$1$};
 \draw[trarr] (s0) -- +(-1,-0.5) -- (d0b) node[midway, left] {$\aWork$, $\mathbf{2}$} -- +(0,0.5) node[midway, left] {$1$} -- (s0);
 
\end{tikzpicture}
\end{center}
 Here $\states=\{s_0,s_1,s_2\}$,
 $\actions=\{\aWork, \aInvest, \aProfit, \aLoss\}$,
 $\trans$ is depicted by the arrows in the figure, for example $\trans(s_0,\aInvest) = [s_1\mapsto 0.1, s_2 \mapsto 0.9]$, the function $\gain$ is given by the
 bold numbers next to the actions, e.g. $\gain(s,\aWork)=2$, and $\varrho=2$ (we take this extremely large value to keep the example computations simpler).
 The MDP models the choices of a person who can either work,
 which ensures certain but relatively small income, 
 or can invest a larger amount of money but take a significant risk.
 Starting in the configuration $(s_0,-10)$ (i.e. in debt), an example strategy $\sigma$ is the strategy which always chooses $\aWork$ in $s_0$,
 but as can be easily seen, we get $\Pr{\sigma}{\mdp}{(s_0,-10)}(\Win) = 0$ since the constant gains are not high enough to cover the interest incurred by the debt.
 An optimal strategy here is to pick $\aWork$ only in histories ending with a configuration $(s_0,x)$ for $x\ge -2$, and to pick $\aInvest$ otherwise.
 Such strategy shows that $\Val[\mdp]{s_0,-10} = 0.1$. Now suppose that the investor wants to find out what is the wealth needed to make sure the probability of
 winning is at least $0.7$, i.e. wants to compute $\WR[\mdp]{s_0,0.7}$.
 This number is equal to $-2$. To see this, observe that for any configuration $(s_0,y)$ where $y<-2$ the optimal strategy must pick $\aInvest$, which with probability
 $0.9$ results in a debt from which it is impossible to recover. Finally, observe that $\Val[\mdp]{s_0,-2}=1$ since a strategy that always chooses $\aWork$ is $1$-winning in $(s_0,-2)$. This demonstrates that the function $\Val{s,\cdot}$ for a given state $s$ may not be continuous.
\end{example}

\noindent
\textbf{Relationship with discounted MDPs.}
The problems we study for solvency MDPs are closely related to another risk-averse decision making model, so called \emph{discounted MDPs with threshold objectives}.
A discounted MDP is a tuple $\dmdp=(\states,\actions,\trans,\gain,\discount)$, where the first four components are as in a solvency MDP and $0 < \beta < 1$ is a \emph{discount factor}. The semantics of a discounted MDP is given by a finite-state MDP $\dmdp^\discount=(\states,\actions,\trans)$ and a reward function $\disc{\cdot}$ which to every run $\omega=s_0 a_1 s_1 a_2 \dots$ in $\dmdp^\discount$ assigns its \emph{total discounted reward} $\disc{\omega}=\sum_{i=1}^{\infty}\gain(s_{i-1},a_i)\cdot \beta^{i}$. The threshold objective asks the controller to maximize, for a given threshold $t \in \Qset$, 
the probability of the event $\thr(t)=\{\omega\in\run{\dmdp^{\discount}}\mid \disc{\omega}\geq t\}$.

Now consider a solvency MDP $\mdp=(\states, \actions, \trans, \gain, \interest)$ with an initial configuration $(s_0,x_0)$ and a discounted MDP $\dmdp=(\states, \actions, \trans, \gain, 1/\interest)$ with a threshold objective $\thr(-x_0)$.
Note that once an initial configuration $(s_0,x_0)\in \states\times\Qset$ is fixed, there is a natural one-to-one correspondence between runs in $\mdp_\interest$ initiated in $(s_0,x_0)$ and runs in $\dmdp^{1/\interest}$ initiated in $s$: we identify a run $(s_0,x_0)a_1(s_1,x_1)a_2 \dots$ in $\mdp_{\interest}$ with a run $s_0a_1s_1a_2\dots$ in $\dmdp^{1/\interest}$. This correspondence naturally extends to strategies in both MDPs, so we assume that these MDPs have identical sets of runs and strategies.

\begin{proposition}
\label{prop:discounted}
Let $\mdp$, $\dmdp$ be as above. Then $\Pr{\sigma}{\mdp}{(s,x)}(\Win) = \Pr{\sigma}{\dmdp}{s}(\thr(-x))$
for all $\sigma {\in} \St$.
\end{proposition}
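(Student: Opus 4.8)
The plan is to reduce the statement to a purely deterministic fact about individual runs: under the run correspondence fixed just before the proposition, the event $\Win\subseteq\Runs_{\mdp_\interest}$ (for the fixed start $(s,x)$) is carried \emph{exactly} onto the event $\thr(-x)\subseteq\run{\dmdp^{1/\interest}}$. Once this is shown, the proposition follows from the fact that the correspondence is a measure isomorphism. Indeed, with $(s,x)$ fixed the wealth component of a run is a function of its state--action sequence, so the correspondence is a bijection on runs; it maps each cylinder to a cylinder, and for corresponding histories $w'$ (in $\mdp_\interest$) and $w$ (in $\dmdp^{1/\interest}$) the probabilities $\Pr{\sigma}{\mdp}{(s,x)}(\Cone(w'))$ and $\Pr{\sigma}{\dmdp}{s}(\Cone(w))$ are by definition the same product of strategy values and transition probabilities. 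By uniqueness of the measure generated by the cylinders, the two measures agree on corresponding events, in particular on $\Win$ and $\thr(-x)$.

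For the deterministic core, write $\beta=1/\interest$ and $g_i=\gain(s_{i-1},a_i)$ for a run $\omega=s_0a_1s_1a_2\cdots$ started in $(s_0,x_0)=(s,x)$. Unfolding the wealth recurrence $x_n=\interest\cdot x_{n-1}+g_n$ gives
\[
  x_n \;=\; \interest^{\,n}\Bigl(x+\sum_{i=1}^{n}\beta^{\,i} g_i\Bigr)\;=\;\interest^{\,n}\,(x+D_n),\qquad D_n:=\sum_{i=1}^{n}\beta^{\,i} g_i .
\]
Since $\actions$ is finite the $g_i$ are bounded by some $M$, so $(D_n)$ converges absolutely to $\disc{\omega}$ and moreover $|D_n-\disc{\omega}|\le M\beta^{\,n+1}/(1-\beta)$. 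Now split on the sign of $x+\disc{\omega}$. If $x+\disc{\omega}>0$ then $x+D_n$ is eventually bounded below by a positive constant, hence $x_n=\interest^{\,n}(x+D_n)\to+\infty$ and $\omega\in\Win$. If $x+\disc{\omega}<0$ then symmetrically $x_n\to-\infty$ and $\omega\notin\Win$. In the boundary case $x+\disc{\omega}=0$ the expression $\interest^{\,n}(x+D_n)$ is a priori an $\infty\cdot 0$ indeterminate form, but the tail estimate gives $|x_n|=\interest^{\,n}|D_n-\disc{\omega}|\le \interest^{\,n}M\beta^{\,n+1}/(1-\beta)=M\beta/(1-\beta)$, a constant, so $(x_n)$ stays bounded and $\omega\in\Win$. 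Collecting the three cases, $\omega\in\Win$ iff $\disc{\omega}\ge -x$, i.e.\ iff $\omega\in\thr(-x)$.

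The main obstacle is precisely this boundary case: it is what forces the non-strict inequality in the definition of $\thr$, and it cannot be dispatched by the crude ``wealth explodes / collapses geometrically'' reasoning used in the other two cases; one must use the exact geometric rate at which $D_n$ approaches $\disc{\omega}$, which cancels the factor $\interest^{\,n}$ exactly. The remaining ingredients (absolute convergence of the discounted sum, and the cylinder-by-cylinder identification of the two probability measures) are routine.
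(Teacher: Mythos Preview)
Your proof is correct and follows essentially the same route as the paper: both reduce to the run-level equivalence $\omega\in\Win\Leftrightarrow\disc{\omega}\ge -x$ via the identity $x_n=\interest^{\,n}(x+D_n)$, treat the strict-inequality cases by the obvious limit argument, and handle the boundary case $\disc{\omega}=-x$ by bounding $|x_n|=\interest^{\,n}|D_n-\disc{\omega}|$ with the geometric tail estimate, obtaining the same uniform bound $\max_{s,a}|F(s,a)|/(\interest-1)$. Your measure-isomorphism paragraph is a bit more explicit than the paper (which simply declares the identification of runs and strategies), but the content is the same.
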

\begin{proof}
It suffices to show that for every run $\omega$ we have $\omega\in\Win \Leftrightarrow \disc{\omega}\geq -x$.
Fix a run $\omega=(s_0,x_0)a_1(s_1,x_1)a_2\dots$, and define, for every $n\geq 0$,
$\discn{\omega}{n}\mydef\sum_{i=1}^{n}\gain(s_{i-1},a_{i})\cdot\frac{1}{\rho^{i}}$ (an empty sum is assumed to be equal to 0). Obviously, for every $n \geq 0$ we have $x_n = \rho^n\cdot(\discn{\omega}{n}+x_0)$. Thus, if $\disc{\omega}=\lim_{n \rightarrow \infty}\discn{\omega}{n}> -x_0$, then $\lim_{n\rightarrow \infty} x_n$ exists and it is equal to $+\infty$. Similarly, if $\disc{\omega}<-x_0$, then $\lim_{n\rightarrow \infty} x_n = - \infty$. If $\disc{\omega}=-x_0$, the infimum wealth $x_n$ along $\omega$ is finite (see Appendix~\ref{app:discounted}), and so $\omega\in \Win$.
\end{proof}
It follows that many natural problems for solvency MDPs (value computation etc.) are polynomially equivalent to similar natural problems for discounted MDPs with threshold objectives. 
In particular, our problem of computing/approximating $\WR[\mdp]{\sinit,p}$ is interreducible with the \emph{value-at-risk} problem in discounted MDPs, where the aim is to compute/approximate the supremum threshold $t$ such that under suitable strategy the probability (risk) of the 
discounted reward being $\leq t$ is at most $1-p$. 

\section{Qualitative Case}
\label{sec:qualitative}

In this section we establish a connection between the qualitative problem for solvency MDPs (i.e., determining whether $x \geq \WR[\mdp]{s,1}$ for a given state $s$ and number $x$) and
the problem of determining the winner in non-stochastic discounted games.

\begin{definition}[Discounted game]
A finite \emph{discounted game} is a tuple $\game = (S_1, S_2, s_0, T, R, \beta)$ where
 $S_1$ and $S_2$ are sets of player 1 and 2 states, respectively; $s_0\in S_1$
 is the initial state; $T\subseteq (S_1 \times  S_2) \cup (S_2\times S_1)$ is
 a transition relation; $R: (S_1\cup S_2) \rightarrow \Rset$ is a reward function; and
 $0<\beta<1$ is a discount factor.
\end{definition}

A strategy for player $i\in \{1,2\}$ in a discounted game is a function $\zeta_i:(S_1\cup S_2)^*\cdot S_i \rightarrow (S_1\cup S_2)$
such that $(s,\zeta_i(ws))\in T$ for every $s$ and $w$. A strategy is \emph{memoryless} if it only depends
on the last state.
A pair of strategies $\zeta_1$ and $\zeta_2$ for players 1 and 2 yields a unique run $\gamerun{\zeta_1,\zeta_2}=s_0s_1\ldots$ in the game,
given by $s_j = \zeta_i(s_0\ldots s_{j-1})$ where $i$ is $1$ or $2$ depending on whether $s_{j-1}\in S_1$ or
$s_{j-1}\in S_2$. The discounted total reward of the run is defined to be $\disc{s_0s_1\ldots} := \sum_{i=0}^\infty \beta^{i+1} R(s_i)$.
The \emph{discounted game} problem asks, given a game $\game$ and a value $x$, whether there is a strategy $\zeta_1$ for player 1
such that for all strategies $\zeta_2$ of player 2 we have $\disc{\gamerun{\zeta_1,\zeta_2}} \ge x$. Such a strategy $\zeta_1$ is then called \emph{winning}.

By Proposition~\ref{prop:discounted} the problem of determining whether $x\ge\WR[\mdp]{s,1}$ for a state $s$ of a solvency MDP $\mdp$ is
interreducible (in polynomial time)
with the problem of determining whether there is $\sigma\in \St[\dmdp]$ 
such that $\Pr{\sigma}{\dmdp}{s}(\thr(-x)) = 1$ in the corresponding discounted MDP $\dmdp$.
We show that
the latter is interreducible\footnote{Actually, we use slightly different variants of the discounted game problem in reductions \emph{from} and \emph{to} the discounted MDPs problem, respectively. Nevertheless, 
they establish the desired complexity bounds.} with the discounted game problem.

Let us first fix a discounted MDP
$\dmdp=(\states,\actions,\trans,\gain,\discount)$. We say that a run $\omega=s_0a_1s_1\ldots$ of $\dmdp$ is \emph{realisable}
under a strategy $\sigma$
if $\sigma(s_0a_1\ldots s_n)(a_{n+1})>0$, and $\trans(s_n,a_{n+1})(s_{n+1}) >0$ for all $n$.
The idea of the reduction relies on the following lemma, which is proved in Appendix~\ref{app:qualitative}.
\begin{lemma}\label{lem:all}
If $\sigma\in \St[\dmdp]$ satisfies $\Pr{\sigma}{\dmdp}{s}(\thr(x))=1$, 
then \emph{all} runs realisable under $\sigma$ are in~$\thr(x)$.
\end{lemma}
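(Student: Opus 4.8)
The plan is to argue by contradiction: suppose some run $\omega = s_0 a_1 s_1 a_2 \dots$ is realisable under $\sigma$ but $\omega \notin \thr(x)$, i.e.\ $\disc{\omega} < x$. Since $\disc{\omega} = \sum_{i=1}^\infty \gain(s_{i-1},a_i)\beta^i$ is the limit of its partial sums and the tails $\sum_{i>n}\gain(s_{i-1},a_i)\beta^i$ are bounded in absolute value by $\beta^{n+1} r_{\max}/(1-\beta)$, there is a finite prefix $w = s_0 a_1 \dots a_n s_n$ of $\omega$ such that \emph{every} run $\omega'$ extending $w$ already satisfies $\disc{\omega'} < x$; concretely it suffices to take $n$ large enough that $\sum_{i=1}^n \gain(s_{i-1},a_i)\beta^i + \frac{\beta^{n+1} r_{\max}}{1-\beta} < x$. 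This is the one genuinely quantitative step, and it is routine.

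Next I would observe that because $\omega$ is realisable under $\sigma$, the prefix $w$ has strictly positive measure: by the definition of $\Pr{\sigma}{\dmdp}{s}$ we have $\Pr{\sigma}{\dmdp}{s}(\Cone(w)) = \prod_{i=1}^n \sigma(w_{i-1})(a_i)\cdot \trans(s_{i-1},a_i)(s_i) > 0$, since each factor is positive precisely by the realisability hypothesis (and $s_0 = s$). Hence $\Cone(w)$ is a set of positive probability all of whose runs lie outside $\thr(x)$, so $\Pr{\sigma}{\dmdp}{s}(\thr(x)) \le 1 - \Pr{\sigma}{\dmdp}{s}(\Cone(w)) < 1$, contradicting the assumption $\Pr{\sigma}{\dmdp}{s}(\thr(x)) = 1$.

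The main (and essentially only) obstacle is the first step — extracting from a single run violating the threshold a whole \emph{cone} of runs that all violate it. This works because the discounted reward is "continuous" in the run: the contribution of the suffix beyond step $n$ is uniformly small, so once the partial sum up to step $n$ is bounded away from $x$ by more than this uniform tail bound, no extension can recover. I would make this precise using $r_{\max} = \max_{s,a}|\gain(s,a)|$ and the geometric bound $\sum_{i>n}\beta^i r_{\max} = \beta^{n+1} r_{\max}/(1-\beta)$, choosing $n$ with $n > \log_\beta\!\big((x - \sum_{i=1}^\infty \gain\cdot\beta^i + \text{slack})(1-\beta)/r_{\max}\big)$ or, more simply, just noting such $n$ exists by convergence. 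Everything else is bookkeeping with the definition of the path measure.
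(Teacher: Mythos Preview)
Your proposal is correct and follows essentially the same approach as the paper: argue by contradiction, use the geometric tail bound on discounted rewards to find a finite prefix of the violating run after which \emph{every} extension still falls below the threshold, and then observe that this prefix has positive probability under $\sigma$ by realisability, contradicting $\Pr{\sigma}{\dmdp}{s}(\thr(x))=1$. The only cosmetic difference is that the paper bounds $|\disc{\omega'}-\disc{\omega}|$ by twice the tail bound (accounting for both the removed tail of $\omega$ and the new tail of $\omega'$), whereas you bound the partial sum plus the maximal tail directly; both formulations yield the needed cone.
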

Using the lemma above we can construct a game $\game$ from $\dmdp$
by stipulating that the results of actions are chosen by player 2 instead of being chosen randomly, %
and vice versa. The technical details of the reduction are presented in  Appendix~\ref{app:qualitative}.
The next theorem follows from the reduction and the fact that memoryless (deterministic) strategies suffice in discounted games.
\begin{theorem}
\label{cor:md_strats}
For every solvency MDP $\mdp$ there exists an oblivious deterministic strategy which is almost-surely winning in every configuration $(s,x)$ with $x\geq \WR[\mdp]{s,1}$.
\end{theorem}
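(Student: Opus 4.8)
The plan is to leverage the interreducibility with discounted games together with the well-known existence of memoryless deterministic optimal strategies in such games. First I would spell out the reduction sketched after Lemma~\ref{lem:all}: from the discounted MDP $\dmdp=(\states,\actions,\trans,\gain,1/\interest)$ build a finite discounted game $\game$ in which, from a state controlled by player~1, player~1 selects an enabled action $a\in\actions(s)$, and then from the resulting auxiliary ``action vertex'' player~2 selects a successor $s'\in\Succ(s,a)$; the reward structure is arranged so that the discounted total reward of a play of $\game$ coincides (up to the bookkeeping of the $\disc{\cdot}$ versus $\discn{\cdot}{n}$ conventions) with the discounted total reward $\disc{\omega}$ of the corresponding run $\omega$ of $\dmdp$. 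By Lemma~\ref{lem:all}, a strategy $\sigma$ of $\dmdp$ with $\Pr{\sigma}{\dmdp}{s}(\thr(-x))=1$ is exactly one all of whose realisable runs lie in $\thr(-x)$, i.e.\ one that, viewed as a player-1 strategy in $\game$, guarantees $\disc{\gamerun{\zeta_1,\zeta_2}}\ge -x$ against every player-2 strategy; hence $x\ge\WR[\mdp]{s,1}$ (equivalently, by Proposition~\ref{prop:discounted}, $\Val[\mdp]{s,x}=1$) precisely when player~1 wins $\game$ from the vertex corresponding to $s$ for the threshold $-x$.

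Next I would invoke the classical result of Zwick and Paterson~\cite{ZP:games-graphs} (and the fact that discounted games are positionally determined) to obtain a \emph{memoryless deterministic} winning strategy $\zeta_1$ for player~1 in $\game$ whenever player~1 wins at all. I then translate $\zeta_1$ back across the reduction: since $\zeta_1$ depends only on the current player-1 vertex, and player-1 vertices correspond bijectively to states $s$ of $\mdp$, it induces a strategy $\sigma$ of $\mdp_{\interest}$ that in configuration $(s,x)$ always plays the action $\zeta_1$ prescribes at $s$ — so $\sigma$ is oblivious and deterministic. Because $\zeta_1$ guarantees $\disc{\gamerun{\zeta_1,\zeta_2}}\ge -x$ for \emph{all} player-2 responses, every realisable run of $\sigma$ from $(s,x)$ satisfies $\disc{\omega}\ge -x$, so by Proposition~\ref{prop:discounted} (the $\thr$ side) it lies in $\Win$; as only realisable runs occur with positive probability, $\Pr{\sigma}{\mdp}{(s,x)}(\Win)=1$.

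Finally I would handle the uniformity-in-$x$ point. The game $\game$ and the positional strategy $\zeta_1$ do not depend on the particular threshold $-x$: a single positional strategy is optimal for \emph{all} thresholds simultaneously (one first computes the positional optimal strategy and then reads off, at each vertex, the guaranteed discounted value). Concretely, take $x^\*_s:=\WR[\mdp]{s,1}$; the optimal positional $\zeta_1$ guarantees at the vertex for $s$ a discounted value of exactly $-x^\*_s$ in the limit, hence guarantees $\disc{\omega}\ge -x$ for every $x\ge x^\*_s$, so the induced oblivious deterministic $\sigma$ is almost-surely winning in every configuration $(s,x)$ with $x\ge\WR[\mdp]{s,1}$, as required. (One mild subtlety, already flagged in the excerpt, is that the equality case $\disc{\omega}=-x$ still gives $\omega\in\Win$ because the infimum wealth along the run is finite; this is exactly what makes the non-strict inequality $x\ge\WR[\mdp]{s,1}$ — rather than strict — correct, and it is covered by Proposition~\ref{prop:discounted} together with the appendix computation referenced there.)

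The main obstacle I anticipate is making the reduction-and-translation completely faithful at the boundary: I must be careful that the auxiliary action vertices introduced for player~1's move do not distort the discount bookkeeping (they should carry reward $0$ and shift indices consistently), and that ``realisable run of $\sigma$'' matches ``play of $\game$ consistent with the translated $\zeta_1$ and some $\zeta_2$'' exactly, so that Lemma~\ref{lem:all} can be applied in both directions without slack. Everything else — positional determinacy of discounted games and the obliviousness of the back-translated strategy — is then routine.
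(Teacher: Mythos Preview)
Your proposal is correct and follows essentially the same route as the paper: reduce to a two-player discounted game where player~2 resolves the stochastic choices, invoke Lemma~\ref{lem:all} to identify almost-sure winning with winning against all player-2 responses, appeal to positional determinacy of discounted games~\cite{ZP:games-graphs}, and pull the memoryless strategy back to an oblivious deterministic one in $\mdp$. The only cosmetic discrepancy is that in the paper's concrete construction the auxiliary action vertices carry the reward (not the state vertices) and the game uses discount $\sqrt{\beta}$ to compensate for the doubled step count---precisely the ``bookkeeping'' you flag as needing care---and your explicit treatment of uniformity in $x$ and of the boundary case $\disc{\omega}=-x$ is a welcome elaboration of points the paper leaves implicit.
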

The discounted game problem is in $\NP\cap \coNP$ and there exists a pseudopolynomial algorithm computing the optimal value~\cite{ZP:games-graphs}. Also, when
one of the players controls no states in a game, the problem can be solved in polynomial 
time~\cite{ZP:games-graphs}. Hence, we get the following theorem.
\begin{theorem}
 The qualitative problem for solvency MDPs is in $\NP\cap\coNP$. Moreover, there is a pseudopolynomial algorithm that 
 computes $\WR[\mdp]{s,1}$ for every state $s$ of $\mdp$. For the restricted class of solvency Markov chains, to compute $\WR[\mdp]{s,1}$ and to decide the qualitative problem 
 can be done in polynomial time.
\end{theorem}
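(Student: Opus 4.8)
The plan is to obtain all three assertions from the already-prepared correspondence with non-stochastic discounted games, together with the three facts about such games recalled above (membership of the discounted game problem in $\NP\cap\coNP$, existence of a pseudopolynomial algorithm computing the optimal value, and polynomial-time solvability when one player controls no states). Fix a solvency MDP $\mdp$, a state $s$, and a binary-encoded rational $x$. By Proposition~\ref{prop:discounted}, $x\ge\WR[\mdp]{s,1}$ holds iff there is $\sigma\in\St[\dmdp]$ with $\Pr{\sigma}{\dmdp}{s}(\thr(-x))=1$ in the associated discounted MDP $\dmdp=(\states,\actions,\trans,\gain,1/\interest)$; by Lemma~\ref{lem:all} this is equivalent to the existence of a strategy under which \emph{every} realisable run of $\dmdp$ lies in $\thr(-x)$; and, making Player~2 resolve the successor choices that were random in $\dmdp$ (the reduction of Appendix~\ref{app:qualitative}), this says exactly that $(\game,-x)$ is a positive instance of the discounted game problem, where $\game$ is the discounted game built from $\dmdp$. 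All the transformations involved are computable in polynomial time (in particular $\beta=1/\interest$ and the rewards of $\game$, which are gains of $\mdp$, have bit-size polynomial in $\size{\mdp}$), so deciding $x\ge\WR[\mdp]{s,1}$ reduces in polynomial time to the discounted game problem.

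Membership of the qualitative problem in $\NP\cap\coNP$ now follows because $\NP\cap\coNP$ is closed under polynomial-time many-one reductions. For the pseudopolynomial algorithm, let $v^{*}(s)$ denote the optimal value of $\game$ at the vertex corresponding to $s$, i.e.\ $v^{*}(s)=\sup_{\zeta_1}\inf_{\zeta_2}\disc{\gamerun{\zeta_1,\zeta_2}}$; this value is finite (discounted rewards are bounded in absolute value by $r_{\max}\beta/(1-\beta)$) and attained by memoryless strategies of both players. From the equivalences above, $(\game,-x)$ is positive iff $v^{*}(s)\ge -x$, hence $\{x\in\Qset\mid \Val[\mdp]{s,x}\ge 1\}=\{x\in\Qset\mid x\ge -v^{*}(s)\}$ and therefore $\WR[\mdp]{s,1}=-v^{*}(s)$, with the infimum attained, consistently with Theorem~\ref{cor:md_strats}. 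Computing $v^{*}(s)$ with the pseudopolynomial algorithm of~\cite{ZP:games-graphs} and negating yields $\WR[\mdp]{s,1}$ in pseudopolynomial time, and a single game run gives this for all states at once. For the last claim, if $\mdp$ is a solvency Markov chain then $|\actions(s)|=1$ for every $s$, so in $\game$ every vertex owned by Player~1 has a unique outgoing edge; contracting these edges turns $\game$ into a one-player (Player~2) game, for which~\cite{ZP:games-graphs} computes the value in polynomial time, and negating (resp.\ comparing with $x$) decides the computation (resp.\ decision) problem in polynomial time.

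The only non-routine part is the reduction invoked in the first paragraph, carried out in Appendix~\ref{app:qualitative}; the points needing care there are that the ``from'' and ``to'' directions use slightly different variants of the discounted game problem, so each of the three cited results must be checked to apply to the variant actually used; that the rational discount factor $1/\interest$ and the possibly large gains are handled so that ``pseudopolynomial'' genuinely means polynomial in the state count and in the numeric magnitudes; and that the infimum defining $\WR[\mdp]{s,1}$ is attained, which is exactly Theorem~\ref{cor:md_strats} and is what licenses the clean identity $\WR[\mdp]{s,1}=-v^{*}(s)$ rather than a mere bound. Everything else is a direct substitution of the reduction into the known bounds for discounted games.
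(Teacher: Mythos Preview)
Your proposal follows the paper's approach and is essentially correct: reduce via Proposition~\ref{prop:discounted} and Lemma~\ref{lem:all} to a two-player discounted game, then read off the three claims from~\cite{ZP:games-graphs}.

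There is, however, one concrete misstatement that matters. You write that ``$\beta=1/\interest$ and the rewards of $\game$ \dots\ have bit-size polynomial in $\size{\mdp}$'', but the game actually constructed in Appendix~\ref{app:qualitative} has discount factor $\sqrt{\beta}$, not $\beta$: the game alternates between Player~1 vertices $s\in S$ and Player~2 vertices $(s,a)\in S\times A$, so one step of $\dmdp$ corresponds to two steps of $\game$, and matching the discounted sums forces the game discount to be $\sqrt{1/\interest}$. In general this is irrational, so the game does \emph{not} have rational parameters of polynomial bit-size, and the $\NP\cap\coNP$ membership and pseudopolynomial algorithm of~\cite{ZP:games-graphs} do not apply out of the box. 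The paper closes this gap by representing $\sqrt{\beta}$ symbolically as a root of $x^2-\beta$ and invoking a result on linear programming with algebraic coefficients (and checking that the value-iteration arithmetic stays inside the degree-$2$ extension). Your final paragraph rightly flags that ``each of the three cited results must be checked to apply to the variant actually used'', but this particular point is not just a routine check---it is precisely where extra work beyond~\cite{ZP:games-graphs} is needed, and your explicit claim about $\beta$ contradicts the construction you are citing.
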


Note that the  existence of a reduction from mean-payoff games to discounted games~\cite{Andersson2009} suggests that improving the above complexity to polynomial-time is difficult, since a polynomial-time algorithm for solvency MDPs
would give a polynomial-time algorithm for mean-payoff games, existence of which is a longstanding open problem in the area of graph games.

\section{Quantitative Case}
\label{sec:quantitative}

This section formulates results on quantitative questions
for solvency MDPs.
We start with a proposition showing that we can restrict our attention to some subset of $\states\times\Qset$, since for every state there are
two values below and above which
all strategies are almost-surely winning or 
losing, respectively.
Intuitively, these values represent wealth (positive or negative)
for which losses/gains from the interest dominate gains/losses
from the gain function $\gain$.
An important consequence of the proposition, when combined with~\cite{Martin:Blackwell-determinacy}, is that deterministic strategies suffice to maximize
the probability of winning. Therefore, in the rest of this section we consider only deterministic strategies. The proposition is proved in Appendix~\ref{app:lgug}.

\begin{proposition}
\label{prop:lgug}
For every state $s$ of the solvency MDP $\mdp$ there are rational numbers
 \[
  U(\mdp,s) \mydef \arginf_{x\in \Rset} \forall \sigma \,.\,\mathbb{P}^{\sigma}_{\mdp,(s,x)}(\!\Win) = 1
  \quad\text{and}\quad
  L(\mdp,s) \mydef \argsup_{x\in \Rset} \forall \sigma \,.\,\mathbb{P}^{\sigma}_{\mdp,(s,x)}(\!\Win) = 0,
 \]
 of encoding size polynomial in $\size{\mdp}$, and they can be computed in polynomial time using linear programming techniques.
 Moreover, we have $\mathbb{P}^{\sigma}_{\mdp,(s,U(M,s))}(\Win) = 1$ for every strategy~$\sigma$.
\end{proposition}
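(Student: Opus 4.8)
The plan is to reduce $U(\mdp,s)$ and $L(\mdp,s)$ to extremal values of the total discounted reward taken over individual \emph{runs} of the discounted MDP $\dmdp=(\states,\actions,\trans,\gain,1/\interest)$ associated to $\mdp$ by Proposition~\ref{prop:discounted}, and then to compute those values by linear programming. Write $\discount\mydef 1/\interest\in(0,1)$ and $r_{\max}\mydef\max\{|\gain(s',a)|\mid s'\in\states,\,a\in\actions(s')\}$. Since $\states$ and $\actions$ are finite, the set of runs of $\dmdp^\discount$ starting in a fixed vertex is a closed, hence compact, subspace of $(\states\times\actions)^{\omega}$, and the map $\omega\mapsto\disc{\omega}$ is continuous on it (a uniformly convergent series whose $n$-step tail is bounded in absolute value by $r_{\max}\discount^{n+1}/(1-\discount)$). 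Hence $m(s')\mydef\min\{\disc{\omega}\mid\omega\text{ a run of }\dmdp^\discount\text{ from }s'\}$ and $M(s')\mydef\max\{\disc{\omega}\mid\omega\text{ a run of }\dmdp^\discount\text{ from }s'\}$ are well defined for every $s'$, with $|m(s')|,|M(s')|\le r_{\max}/(\interest-1)$. I will show $U(\mdp,s)=-m(s)$ and $L(\mdp,s)=-M(s)$.

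For $U$: by Proposition~\ref{prop:discounted}, the property ``$\Pr{\sigma}{\mdp}{(s,x)}(\Win)=1$ for all $\sigma$'' is equivalent to ``$\Pr{\sigma}{\dmdp}{s}(\{\omega\mid\disc{\omega}<-x\})=0$ for all $\sigma$''. If $x\ge-m(s)$ then every run $\omega$ from $s$ has $\disc{\omega}\ge m(s)\ge-x$, so the event is empty and the property holds. If $x<-m(s)$ there is a run $\omega^{\ast}$ from $s$ with $\disc{\omega^{\ast}}<-x$; since $\{\omega\mid\disc{\omega}<-x\}$ is open it contains a cylinder $\Cone(\fpat)$ for some finite prefix $\fpat$ of $\omega^{\ast}$, and the deterministic strategy that plays the actions of $\fpat$ assigns $\Cone(\fpat)$ positive probability, so the property fails. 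Thus $\{x\mid\Pr{\sigma}{\mdp}{(s,x)}(\Win)=1\text{ for all }\sigma\}=[-m(s),\infty)$, which gives $U(\mdp,s)=-m(s)$, shows that this infimum is attained, and yields $\Pr{\sigma}{\mdp}{(s,U(\mdp,s))}(\Win)=1$ for all $\sigma$ — the ``Moreover'' claim. For $L$: $\Val[\mdp]{s,x}=0$ iff $\Pr{\sigma}{\dmdp}{s}(\thr(-x))=0$ for all $\sigma$. If $x<-M(s)$ then no run from $s$ satisfies $\disc{\omega}\ge-x$, so $\thr(-x)$ is empty and $\Val[\mdp]{s,x}=0$; if $x>-M(s)$ then some run from $s$ has $\disc{\omega}>-x$, and the open set $\{\omega\mid\disc{\omega}>-x\}\subseteq\thr(-x)$ again contains a cylinder of positive probability under a suitable deterministic strategy, so $\Val[\mdp]{s,x}>0$. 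Hence $\{x\mid\Val[\mdp]{s,x}=0\}$ coincides with $(-\infty,-M(s))$ up to at most the single endpoint $-M(s)$, and therefore $L(\mdp,s)=\sup\{x\mid\Val[\mdp]{s,x}=0\}=-M(s)$.

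It then remains to bound the encoding size of $m(s)$ and $M(s)$ and to compute them. A standard unrolling argument shows that $m$ is the pointwise-largest vector $V\in\Rset^{\states}$ satisfying $V(s')\le\discount\,(\gain(s',a)+V(s''))$ for all $s'\in\states$, $a\in\actions(s')$ and $s''\in\Succ(s',a)$, and dually that $M$ is the pointwise-smallest vector satisfying the reversed inequalities. Consequently $m$ is the \emph{unique} optimum of the linear program ``maximise $\sum_{s'\in\states}V(s')$ subject to $V(s')-\discount\,V(s'')\le\discount\,\gain(s',a)$ for all such $s',a,s''$'' (any feasible $V$ is pointwise $\le m$, hence has a strictly smaller objective unless $V=m$), and symmetrically $M$ is the unique optimum of the analogous minimisation. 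Each program has $|\states|$ variables, $O(|\states|^{2}|\actions|)$ constraints and coefficients of size polynomial in $\size{\mdp}$, so it is solvable in polynomial time and its unique optimal solution has polynomial encoding size. (Equivalently, selecting for each $s'$ one constraint tight at $m$ yields a linear system $(\vec{I}-\discount\vec{P})V=\discount\vec{g}$ with $\vec{P}$ a $0/1$ row-stochastic matrix, invertible because $\discount<1$, so Cramer's rule bounds the bit-size of $m$; likewise for $M$.) Together with the previous paragraph this establishes every claim of the proposition.

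I expect the only genuinely delicate point to be the passage from the probabilistic conditions (``probability $1$, resp.\ $0$, under every strategy'') to the deterministic, run-wise descriptions in terms of $m(s)$ and $M(s)$: this uses the observation that an open subset of the run space with positive probability must contain a cylinder of positive probability, and it requires carefully tracking strict versus non-strict inequalities — which is exactly why the infimum defining $U(\mdp,s)$ is attained while the supremum defining $L(\mdp,s)$ need not be. The remaining ingredients — the Bellman/unrolling characterisation of $m$ and $M$, the linear-programming formulation, and the polynomial bit-size bound — are routine.
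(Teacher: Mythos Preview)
Your proof is correct and arrives at the same linear programs as the paper (up to the sign change $U(\mdp,s)=-m(s)$, $L(\mdp,s)=-M(s)$), but the route is genuinely different. The paper argues directly in the solvency MDP: it verifies that the LP optimum for $U$ is a ``safe'' vector (starting above it the wealth can never drop below $\min_{s'}U(\mdp,s')$ under any strategy) and, for the converse, explicitly constructs a strategy that, started strictly below $U(\mdp,s)$, picks at each step the tight LP constraint so that the gap to $U$ grows by the factor $\interest$, eventually pushing the wealth below $L(\mdp)$; the argument for $L$ is symmetric. You instead invoke Proposition~\ref{prop:discounted} to translate the conditions into statements about discounted rewards of individual runs, then use compactness of the run space and continuity of $\disc{\cdot}$ to show the extrema $m(s),M(s)$ are attained, and a cylinder argument on the open sets $\{\disc{\omega}<-x\}$ and $\{\disc{\omega}>-x\}$ to pass between ``measure zero under every strategy'' and ``empty''. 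Your approach is more conceptual and cleanly isolates why the infimum for $U$ is attained while the supremum for $L$ need not be (the event $\thr(-x)$ is closed, its complement open); the paper's approach is more elementary and self-contained, avoiding both Proposition~\ref{prop:discounted} and any topology on runs. Both reach the same polynomial-size LP characterization.
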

To illustrate the proposition, we return to Example~\ref{ex:running} and note that $U(\mdp,s_0) = \frac{20}{3}$ and $L(\mdp, s_0) = -\frac{40}{3}$.
Obviously, for every $s$ we have $K \geq U(\mdp,s) \geq L(\mdp,s) \geq -K$ where $K = \max_{(s,a)\in S\times A}\frac{|F(s,a)|}{\rho-1}$, but as Example~\ref{ex:running} shows, using $U(\mdp,s)$ and $L(\mdp,s)$ we can restrict the set of interesting configurations more than with the trivial bounds $K$ and $-K$. 

We also define the global versions of the bounds, i.e.,
$L(\mdp) \mydef \min_{s\in\states} L(\mdp,s)$ and
$U(\mdp) \mydef \max_{s\in\states} U(\mdp,s)$.
In accordance with the economic interpretation of our model,
we call any configuration of the form $(s,x)$ with
$x\geq U(M,s)$ a \emph{rentier configuration}. From Proposition~\ref{prop:lgug} it follows that every run which visits a rentier configuration belongs to $\Win$.

Note that although Proposition~\ref{prop:lgug} suggests that we can restrict
our analysis to the configurations $(s,x)$ where $L(\mdp,s)\le x \le U(\mdp,s)$,
the set of reachable configurations  between these bounds
is still infinite in general as the following example shows.

\begin{example}\label{ex:infstate}
  Consider a solvency MDP $\mdp=(\{s\}, \{a,b\}, \trans, \gain, \frac{3}{2})$ with $\trans(s,a)=\trans(s,b)=s$,
  and $\gain(s,a)=\frac{1}{2}$ and $\gain(s,b)=-\frac{1}{2}$.
  We have $L(\mdp)=-1$ and $U(\mdp)=1$.
  We will show that for any $n \in \Nset$ there
  is a configuration $(s,x_n)$ where $x_n = k/2^n$
  that is reachable in exactly
  $n$ steps from an initial configuration $(s,\frac{1}{2})$ and satisfies
  $k \in \Nset_0$,
  $0\leq k < 2^n$,
  $2\nmid k$. 
  Hence the reachable state space from $(s,\frac{1}{2})$ is infinite
  as the numbers $x_n$ are pairwise different.

  We set $x_0 = \frac{1}{2}$, and let $(s,x_n)$ be a reachable configuration where $x_n$ is of the form $k/2^n$
  satisfying the above conditions. In one step we
  can reach configurations
  $(s,x')$ where $x'= \interest x_n \pm \frac{1}{2} = \frac{3k\pm 2^n}{2^{n+1}}$. Clearly $2 \nmid 3k \pm 2^n$; otherwise we would
  have $2 \mid 3k$ and thus $2 \mid k$ which contradicts the definition of $x_n$. It remains to
  show that one of the values of $x'$ again satisfies the above conditions; this is a simple exercise, and
  we give a proof in Appendix~\ref{app:infstate}.
\end{example}
  Note that if the interest $\rho$ is restricted to
  be an integer, the reachable configuration space between $L(\mdp)$ and $U(\mdp)$ is finite, 
   because for the initial configuration $(s,x)$ it holds
    $x=\frac{p}{q}$ where $p,q\in \Zset$,
   and $\interest \cdot x+y = \frac{\interest \cdot p+y\cdot q}{q}$. Hence, any reachable wealth
   is a multiple of $\frac{1}{q}$, and there are only finitely many such numbers between
   $L(\mdp)$ and $U(\mdp)$. This means that one can use off-the-shelf algorithms for finite-state MDPs, i.e., minimising the probability
   to reach configuration with $(s,x)$, where
   $x< L(\mdp,s)$. 
   However, for the general case, this is not possible and we need to devise new techniques.

\subsection{Approximation Algorithms}

In this subsection we show how to approximate $\WR{s,p}$.
Our algorithm depends
on the following theorem, which allows us, in~a certain sense that will be explained soon, to approximate the function 
$\Val[M]{s_0,\cdot}$.

\begin{theorem}
\label{thm:quantitative-algorithm}
 There is an algorithm that computes, for a solvency MDP $M$ with initial configuration $(s_0,x_0)$ and a given $\eps>0$, a rational number $v$ and a strategy $\sigma$ such that:
\begin{enumerate}
 \item $v\geq \Val[M]{s_0,x_0}$.
 \item %
 Strategy $\sigma$ is $v$-winning from configuration $(s_0,x_0+\eps)$.
\end{enumerate}
The running time of the algorithm is polynomial in
$|S|\cdot|A| \cdot\log\left(p_{\min}^{-1}\right)$
where
$p_{\min}=\min_{(s,s',a)\in S^2\times A}T(s,a)(s')$,
and exponential in $\log(|r_{\max}|/(\rho-1))$ and $\log(1/\eps)$
where $r_{\max}=\max_{(s,a)\in S\times A}|F(s,a)|$.
\end{theorem}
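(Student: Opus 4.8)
The plan is to first reduce the infinite-state MDP $\mdp_\interest$ to a finite-state MDP by a carefully chosen rounding of wealth values, then solve a reachability problem on the finite MDP, and finally lift the resulting strategy back to $\mdp_\interest$ while controlling the error introduced by rounding. By Proposition~\ref{prop:lgug} we may restrict attention to configurations $(s,x)$ with $L(\mdp,s)\le x\le U(\mdp,s)$: above $U(\mdp,s)$ every strategy wins, and below $L(\mdp,s)$ every strategy loses, so winning is equivalent to avoiding the ``losing region'' $\{(s,x)\mid x<L(\mdp,s)\}$ forever, or equivalently (since $\mdp_\interest$ is transient outside the band) to the question of reaching a rentier configuration versus reaching the losing region. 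The point of fixing an additive slack $\eps$ in the initial wealth is exactly to absorb the accumulated rounding error: if at each step we round the wealth down to the nearest multiple of some grid size $\delta$, then after $n$ steps the true wealth dominates the rounded one, and the discrepancy, amplified by the interest, stays bounded by $\delta\cdot\frac{\interest}{\interest-1}$ because $\sum_{i\ge 1}\interest^{-i}$ converges — this is the place where having interest $\interest>1$ (discount $1/\interest<1$) is essential and is what makes the error \emph{not} blow up. Choosing $\delta$ of order $\eps(\interest-1)/\interest$ (times a small constant) then guarantees that a strategy winning on the rounded process from $(s_0,x_0)$ is $v$-winning on the true process from $(s_0,x_0+\eps)$.

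Concretely, I would proceed as follows. (i) Compute $L(\mdp,s)$ and $U(\mdp,s)$ for every state via Proposition~\ref{prop:lgug}; set $L=L(\mdp)$, $U=U(\mdp)$. (ii) Fix a grid spacing $\delta = c\cdot\eps\cdot(\interest-1)$ for a suitable absolute constant $c$, and define a finite-state MDP $\mdp^\delta$ whose states are pairs $(s,x)$ with $x$ a multiple of $\delta$ lying in $[L-1,U]$, together with two absorbing states $\mathit{win}$ and $\mathit{lose}$; a transition of $\mdp_\interest$ from $(s,x)$ under $a$ to $(s',\interest x+\gain(s,a))$ is replaced in $\mdp^\delta$ by a transition to $(s',\lfloor \interest x+\gain(s,a)\rfloor_\delta)$ (round the wealth \emph{down} to the grid), and if that rounded wealth falls below $L(\mdp,s')$ the successor is $\mathit{lose}$, while if it reaches $U(\mdp,s')$ the successor is $\mathit{win}$. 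The number of grid points per state is $O\!\big((U-L)/\delta\big) = O\!\big(r_{\max}/((\interest-1)^2\eps)\big)$, which is exponential in $\log(r_{\max}/(\interest-1))$ and $\log(1/\eps)$ but only polynomially many states, and transition probabilities are unchanged so their encoding size is $O(\log(1/p_{\min}))$ per transition. (iii) On $\mdp^\delta$ compute, by linear programming / value iteration for finite-state MDPs, the maximal probability $v$ of reaching $\mathit{win}$ (equivalently, of the winning reachability objective) from $(s_0,\lfloor x_0\rfloor_\delta)$, together with an optimal deterministic memoryless strategy $\sigma^\delta$; the running time is polynomial in the size of $\mdp^\delta$, giving the claimed bound. (iv) Lift $\sigma^\delta$ to a strategy $\sigma$ on $\mdp_\interest$ that, along any history, tracks the corresponding rounded configuration and plays what $\sigma^\delta$ prescribes there.

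The two correctness claims are then established as follows. For claim~1, $v\ge\Val[M]{s_0,x_0}$: since we always round wealth \emph{down}, for any strategy the true wealth after $n$ steps dominates the rounded wealth, so any run winning in $\mdp_\interest$ from $(s_0,x_0)$ has its rounded image winning in $\mdp^\delta$ from $(s_0,\lfloor x_0\rfloor_\delta)$ (it never hits the $\mathit{lose}$ region earlier than the true run hits $L$, and it stays in $\Win$); hence $\Val[\mdp^\delta]{(s_0,\lfloor x_0\rfloor_\delta)}\ge\Val[M]{s_0,x_0}$, and $v$ is exactly the former. For claim~2, that $\sigma$ is $v$-winning from $(s_0,x_0+\eps)$: here one shows the reverse domination, namely that starting the true process $\eps$ higher keeps the true wealth \emph{above} the rounded wealth of the run in $\mdp^\delta$ under $\sigma^\delta$ from $(s_0,\lfloor x_0\rfloor_\delta)$, for all time. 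The key inequality is that if at step $n$ the true wealth exceeds the rounded wealth by at least $e_n$, then after applying interest and gain and re-rounding-down, the gap at step $n+1$ is at least $\interest e_n - \delta \ge e_n$ provided $e_n \ge \delta/(\interest-1)$; starting from a gap of $\eps\ge\delta/(\interest-1)$ (which holds by the choice of $\delta$) the invariant is maintained, so whenever the $\mdp^\delta$-run is winning the true run is in $\Win$, giving $\Prb^{\sigma}_{\mdp,(s_0,x_0+\eps)}(\Win)\ge v$.

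The main obstacle is step~(ii)/(iv) together with the error analysis: one must pick the grid spacing $\delta$ and the truncation window precisely, and verify the geometric-series bound on the amplified rounding error, being careful that the rounding is always \emph{downward} (so the two one-sided domination arguments both go through), that clipping at $L(\mdp,s)$ and $U(\mdp,s)$ is sound (justified by Proposition~\ref{prop:lgug}, in particular $\Prb^{\sigma}_{\mdp,(s,U(\mdp,s))}(\Win)=1$ for \emph{every} $\sigma$), and that a \emph{naive} choice of $\delta$ — e.g. making $\delta$ polynomially small in the input rather than in $\eps(\interest-1)$ — would make the grid exponentially finer and the algorithm doubly exponential; the subtlety flagged in the introduction is precisely that $\delta$ must be tied to $\eps$ and $\interest-1$ and nothing smaller. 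The remaining ingredients — finiteness of $\mdp^\delta$, polynomial-time solution of reachability in finite MDPs, and the domination lemmas — are routine.
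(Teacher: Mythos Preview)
Your claim~1 argument is backwards. With downward rounding you correctly have $x_n \geq \tilde{x}_n$ (true dominates rounded), but this means the \emph{rounded} process can drop below $L(\mdp,s)$ while the true process stays above it, not the other way round. So a winning run in $\mdp_\interest$ from $(s_0,x_0)$ need not have a winning rounded image in $\mdp^\delta$, and you cannot conclude $v\ge\Val[M]{s_0,x_0}$. Relatedly, your opening heuristic that the accumulated rounding error ``stays bounded by $\delta\cdot\interest/(\interest-1)$ because $\sum_{i\ge1}\interest^{-i}$ converges'' is a statement about the \emph{discounted} picture; in the solvency picture the wealth error after $n$ steps is amplified by $\interest^n$ and is unbounded.

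The idea is salvageable, but with two changes: round \emph{up}, and take the objective in $\mdp^\delta$ to be \emph{safety} (avoid $\mathit{lose}$) rather than reachability of $\mathit{win}$. With upward rounding one gets $\tilde{x}_n\ge x_n$, so any winning run in $\mdp_\interest$ maps to a run in $\mdp^\delta$ that never enters $\mathit{lose}$ (but possibly never enters $\mathit{win}$ either---think of a run whose true wealth stays forever in the band $(L,U)$), giving $v\ge\Val[M]{s_0,x_0}$ for the safety value $v$. For claim~2, your fixed-point invariant is exactly the right tool, but for upward rounding: $e_{n+1}\ge\interest e_n-\delta$, so $e_n\ge\delta/(\interest-1)$ is preserved once $e_0=x_0+\eps-\lceil x_0\rceil_\delta\ge\delta/(\interest-1)$, which holds for $\delta\le\eps(\interest-1)/\interest$; hence the true wealth from $(s_0,x_0+\eps)$ stays above the rounded wealth and the lifted safety strategy is $v$-winning.

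This corrected route is genuinely different from the paper's and arguably simpler. The paper instead first proves (via a contraction argument on a Bellman operator over $\Rset^{S\times[0,1]}$) that from $(s_0,x_0+\eps/2)$ some strategy hits a rentier configuration within a bounded horizon $n\approx(\log(U{-}L)+\log\eps^{-1})/\log\interest$ with probability $\ge\Val[M]{s_0,x_0}$; it then unfolds $M$ into a DAG of depth $n$ with upward rounding at granularity $\lambda$, and chooses $\lambda$ so that the \emph{finite}-horizon error $n\lambda\interest^{n}\le\eps/2$. The bounded horizon is what lets the paper work with a reachability objective; your stationary grid needs the safety objective instead.
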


We will prove Theorem~\ref{thm:quantitative-algorithm} later, but first we argue that
the theorem is important in its own right.
Consider the following
scenario.
Suppose that an investor starts with wealth $x_0$. 
It is plausible to assume that this initial 
wealth is not strictly fixed.
Instead, one can assume that the investor is willing to acquire
some small additional amount of wealth (represented by $\eps$), 
in exchange for some substantial benefit. 
Here, the benefit consists of the fact that the 
small difference in the initial wealth allows
the investor to compute and execute a strategy, 
under which the risk of bankruptcy is provably
no greater than the lowest risk achievable with the original wealth. 
Note that the strategy $\sigma$ may not be $\Val[M]{s_0,x_0+\eps}$-winning 
from $(s_0,x_0+\eps)$. %
We now proceed with the theorem providing the approximation of $\WR{s,x}$

\begin{theorem}\label{thm:approx}
For a given solvency MDP $M$, its state $s$ and rational numbers $\delta>0$, $p\in[0,1]$, it is possible to approximate $\WR{s,p}$ up to the absolute error $\delta$ in time polynomial in $(|S|\cdot|A|)^{\calO(1)}\cdot\log\left(p_{\min}^{-1}\right)$, and exponential in $\log(|r_{\max}|/(\rho-1))$ and $\log(1/\delta)$, where $p_{\min}$ an $r_{\max}$ are as in Theorem~\ref{thm:quantitative-algorithm}.
\end{theorem}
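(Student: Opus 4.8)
The plan is to reduce the approximation of $\WR{s,p}$ to repeated calls of the algorithm from Theorem~\ref{thm:quantitative-algorithm}, combined with a bisection search over the wealth axis. First observe that by Proposition~\ref{prop:lgug} the value $\WR{s,p}$, if it is not trivially $\pm\infty$-like, lies in the interval $[L(\mdp,s),U(\mdp,s)]$, whose endpoints have encoding size polynomial in $\size{\mdp}$ and are computable in polynomial time; moreover $\Val[\mdp]{s,x}=1$ for $x\ge U(\mdp,s)$ and $\Val[\mdp]{s,x}=0$ for $x< L(\mdp,s)$, so for $p\in(0,1]$ the infimum defining $\WR{s,p}$ is attained in this bounded range. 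I would therefore run a binary search on $x$ in $[L(\mdp,s),U(\mdp,s)]$ with target precision $\delta$; this costs $\calO(\log((U(\mdp,s)-L(\mdp,s))/\delta))$ iterations, which is polynomial in $\size{\mdp}$ and linear in $\log(1/\delta)$, hence fits inside the claimed time bound.

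At each probe point $x$, I would invoke the algorithm of Theorem~\ref{thm:quantitative-algorithm} with initial configuration $(s,x)$ and error parameter $\eps$ chosen to be a suitable fraction of $\delta$ (say $\eps=\delta/4$), obtaining a rational $v$ with $v\ge \Val[\mdp]{s,x}$ and a strategy that is $v$-winning from $(s,x+\eps)$. The decision made by the binary search is: if $v<p$, then since $v\ge\Val[\mdp]{s,x}$ we have $\Val[\mdp]{s,x}<p$, so $x<\WR{s,p}$ and we move the search upward; if $v\ge p$, then the $v$-winning strategy from $(s,x+\eps)$ witnesses $\Val[\mdp]{s,x+\eps}\ge v\ge p$, hence $\WR{s,p}\le x+\eps$, and we move downward (recording $x+\eps$ as a current upper estimate). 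Accumulating these one-sided guarantees across the search, after $\calO(\log(1/\delta))$ steps the search interval has length below $\delta/2$ and, together with the $\eps$-slack, yields a rational $\tilde x$ with $\WR{s,p}-\delta \le \tilde x \le \WR{s,p}+\delta$ (possibly after a final additive shift by $\eps$); rescaling the constants absorbs the slack into the stated error $\delta$. Each call to Theorem~\ref{thm:quantitative-algorithm} runs in time polynomial in $|S|\cdot|A|\cdot\log(p_{\min}^{-1})$ and exponential in $\log(|r_{\max}|/(\rho-1))$ and $\log(1/\eps)=\log(4/\delta)$, and multiplying by the polynomial number of search iterations preserves exactly the complexity claimed.

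The one genuinely delicate point, and the step I expect to be the main obstacle, is the bookkeeping that converts the \emph{one-sided} and \emph{$\eps$-shifted} guarantees of Theorem~\ref{thm:quantitative-algorithm} into a clean two-sided $\delta$-approximation of $\WR{s,p}$: the algorithm never certifies $\Val[\mdp]{s,x}\ge p$ directly at the point $x$, only an over-approximation $v$ and a winning strategy at a nearby point $x+\eps$, and $\Val[\mdp]{s,\cdot}$ need not be continuous (cf.\ Example~\ref{ex:running}). One must argue that the monotonicity of $\Val[\mdp]{s,\cdot}$ in $x$ still makes the binary-search invariant valid, namely that every probe either safely rules out everything below $x$ or exhibits a feasible wealth at most $x+\eps$, and that these invariants pin $\WR{s,p}$ to an interval of length at most $2\delta$. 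A secondary technical nuisance is handling the boundary probability values $p=0$ (answer $-\infty$ or $L(\mdp,s)$ depending on convention) and $p=1$ (where one could alternatively invoke the qualitative pseudopolynomial algorithm, but the quantitative procedure also applies), and ensuring all intermediate rationals produced by the search have polynomially bounded encoding size so that arithmetic stays within the stated bound.
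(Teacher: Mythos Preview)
Your proposal is correct and follows the same bisection-via-Theorem~\ref{thm:quantitative-algorithm} approach as the paper. The one refinement in the paper that dissolves the bookkeeping you flag as delicate is to let $\eps$ shrink with the search interval (taking $\eps=(b-a)/4$ when probing at $x_0=a+(b-a)/2$), so that each step cleanly maintains the invariant $a\le\WR{s,p}\le b$ (the case $v>p$ yields $\WR{s,p}\le x_0+\eps=a+\tfrac{3}{4}(b-a)$, the case $v\le p$ yields $\WR{s,p}\ge x_0$) and the interval contracts by a factor of at least $3/4$; with your fixed $\eps=\delta/4$ the same argument goes through but, as you note, needs the extra additive slack tracked by hand.
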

\begin{proof}
Suppose that we already know that $a \leq \WR{s,p} \leq b$, for some $a,b$. We can use the algorithm of Theorem \ref{thm:quantitative-algorithm} for $s_0=s$, $x_0 =a+ (b-a)/2$ and $\eps=\mbox{$(b-a)/4$}$. If the algorithm returns $v\leq p$, we know that $a+(b-a)/2 \leq \WR{s,p} \leq b$, otherwise we can conclude that $a \leq \WR{s,p} \leq a+3(b-a)/4$. Initially we know that $L(M)\leq \WR{s,p} \leq U(M)$, so in order to approximate $\WR{s,p}$ with absolute error $\delta$ it suffices to perform $\calO(\log((U(M)-L(M))/\delta))$ iterations of this procedure, finishing when $\eps\leq\delta/4$.
\end{proof}
Later we will show that the time complexity of the algorithm cannot be improved to polynomial in either $\log(|r_{\max}|/(\rho-1))$ or $\log(1/\delta)$ unless P$=$NP.

\begin{proof}[Proof of Theorem \ref{thm:quantitative-algorithm}]

For the rest of this section we fix a solvency MDP $M=(S,A,T,F,\rho)$ and its initial configuration $(s_0,x_0)$.
First we establish the existence of a strategy that, given a small additional amount of wealth, reaches a rentier configuration in at most exponential number of steps with probability at least $\Val[M]{s_0,x_0}$. Then, we will show how to compute such a strategy in exponential time.

To establish the proof of the following proposition, we use a suitable Bellman functional whose unique fixed point is  equal to $\WRvec$. The proof can be found in Appendix~\ref{app:bellman-gen}. 

\begin{proposition}
\label{prop:finite-memory-reach}
For every initial configuration $(s,x)$ and every $\eps>0$ 
there is a strategy $\sigma_{\eps}$ such that starting in
$(s,x+\eps/2)$, $\sigma_{\eps}$ ensures hitting of a rentier 
configuration in at most 
$n=\big\lceil \frac{\log{(U(M)-L(M))}+\log{\eps^{-1}}+2}{\log \rho} \big\rceil$
steps with probability at least $\Val[M]{s,x}$. 
In particular, $\Prb^{\sigma_{\eps}}_{(s,x+\eps/2)}(\Win)\geq \Val[M]{s,x}$.
\end{proposition}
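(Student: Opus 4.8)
\smallskip
\noindent\textbf{Proof plan.}\quad
I would first dispatch the ``in particular'' clause, which is immediate: by Proposition~\ref{prop:lgug} every run that visits a rentier configuration lies in $\Win$, so a lower bound on the probability of reaching a rentier configuration within $n$ steps is automatically a lower bound on $\Prb^{\sigma_\eps}_{(s,x+\eps/2)}(\Win)$. Hence the whole work lies in the reachability statement, which I would attack by value iteration combined with a wealth-shift argument.

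The first step is to introduce the ``finite-horizon reach-a-rentier'' values: for $k\in\Nset$ and a configuration $(s,y)$ let $g_k(s,y)$ be the supremum, over all strategies, of the probability of visiting a rentier configuration within the first $k$ steps when started in $(s,y)$. These obey $g_0(s,y)=\ind{y\geq U(M,s)}$, and for $y<U(M,s)$, $g_{k+1}(s,y)=\max_{a\in A(s)}\sum_{s'}T(s,a)(s')\cdot g_k(s',\rho y+F(s,a))$, while $g_{k+1}(s,y)=1$ for $y\geq U(M,s)$. Since $A(s)$ is finite the maximum is attained, so backward induction over the horizon produces a \emph{deterministic} strategy $\sigma_\eps$ (depending only on $M$ and on the horizon $n$, through a step counter) with $\Prb^{\sigma_\eps}_{(s,y)}(\text{rentier within }n)=g_n(s,y)$ for \emph{every} $(s,y)$. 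I take this $\sigma_\eps$ with $n$ the number in the statement; it then remains to prove $g_n(s,x+\eps/2)\geq\Val[M]{s,x}$. (This value iteration is also the one that can be recast as a Bellman functional whose unique fixed point is $\WRvec$, which is the route of the appendix.)

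The core is a \emph{shift} argument. Fix an arbitrary strategy $\sigma$ and let $\sigma'$ be the strategy that, from $(s,x+\eps/2)$, at a history $(s,y_0)a_1(s_1,y_1)\cdots(s_k,y_k)$ plays what $\sigma$ plays at $(s,y_0-\eps/2)a_1(s_1,y_1-\rho\eps/2)\cdots(s_k,y_k-\rho^k\eps/2)$. Since transition probabilities do not depend on wealth, $\sigma'$ from $(s,x+\eps/2)$ and $\sigma$ from $(s,x)$ induce the same distribution on state/action sequences (a measure-preserving coupling of the two runs), and a one-line induction shows that the $\sigma'$-wealth at step $k$ equals the corresponding $\sigma$-wealth plus $\rho^k\eps/2$. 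Next I would show that almost surely on $\Win$ the $\sigma$-run from $(s,x)$ never drops below $L(M)$: if $\tau$ is the first step at which the wealth $x_n$ satisfies $x_n<L(M,s_n)$, then the strong Markov property together with the defining property of $L$ (namely $\Val[M]{s',x'}=0$ whenever $x'<L(M,s')$) gives $\Prb^\sigma_{(s,x)}(\Win\cap\{\tau<\infty\})=0$. Consequently the event $\{\Win\}\cap\{x_n\geq L(M,s_n)\text{ for all }n\}$ has $\sigma$-probability $\Prb^\sigma_{(s,x)}(\Win)$, and along the coupling it is carried to a set of $\sigma'$-runs on which the wealth at step $n$ is at least $L(M)+\rho^n\eps/2$. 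The horizon $n=\big\lceil\frac{\log(U(M)-L(M))+\log\eps^{-1}+2}{\log\rho}\big\rceil$ is chosen exactly so that $\rho^n\eps/2\geq U(M)-L(M)$; hence this wealth is at least $U(M)\geq U(M,s_n)$, i.e.\ a rentier configuration has been visited by step $n$. Transporting probabilities back gives $g_n(s,x+\eps/2)\geq\Prb^{\sigma'}_{(s,x+\eps/2)}(\text{rentier within }n)\geq\Prb^\sigma_{(s,x)}(\Win)$, and taking the supremum over $\sigma$ yields $g_n(s,x+\eps/2)\geq\Val[M]{s,x}$. The degenerate cases ($U(M)=L(M)$, or $x<L(M,s)$ so that $\Val[M]{s,x}=0$) are trivial, and the ``$+2$'' leaves slack to absorb using $U(M),L(M)$ in place of $U(M,s_n),L(M,s_n)$.

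The main obstacle is the step asserting that winning runs may be assumed to stay above $L(M)$. Pointwise this is false — a $\Win$-run can dip arbitrarily low in finite time — so it must be argued at the level of measures, via the stopping time $\tau$ and the strong Markov property, and it leans crucially on the precise characterisation of $L(M,s)$ (and symmetrically of $U(M,s)$) from Proposition~\ref{prop:lgug}, including the boundary fact $\Prb^\sigma_{(s,U(M,s))}(\Win)=1$ that makes $U(M,s_n)$ a legitimate closed target. Everything else — the finite-horizon recursion and existence of the optimal $\sigma_\eps$, the wealth-shift induction, the measure-preserving coupling, and the arithmetic pinning down $n$ — is routine.
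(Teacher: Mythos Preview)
Your argument is correct and takes a genuinely different route from the paper's.

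The paper works on the ``dual'' side: it introduces a Bellman operator $\calL$ on the uncountably-indexed space $\Rset^{S\times[0,1]}$, whose $(s,p)$-component encodes the minimal wealth needed to win with probability $\geq p$ after one optimal step. It shows that $\WRvec$ is a fixed point of $\calL$, that $\calL$ is a $\rho^{-1}$-contraction on the box $\LU$, and that whenever $y>\calL^n(\vec{V}^0)(s,p)$ (with $\vec{V}^0(s,p)=U(M,s)$) some strategy reaches a rentier configuration in $\leq n$ steps with probability $\geq p$. The conclusion then comes from $\norm{\WRvec-\calL^n(\vec{V}^0)}\leq\rho^{-n}(U(M)-L(M))\leq\eps/4$ for the stated $n$, giving $x+\eps/2>\calL^n(\vec{V}^0)(s,\Val[M]{s,x})$.

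You work on the ``primal'' side, with the finite-horizon hitting probabilities $g_k(s,y)$, and replace the contraction machinery by a direct probabilistic coupling: the wealth shift identifies the law of $\sigma'$ from $(s,x+\eps/2)$ with that of $\sigma$ from $(s,x)$, and the stopping-time argument pins the $\sigma$-wealth above $L(M)$ on a set of full $\Win$-probability, so that the deterministic boost $\rho^n\eps/2\geq U(M)-L(M)$ overshoots $U(M)$ at step $n$. (Your use of ``strong Markov property'' is slightly loose---strategies are history-dependent---but the intended argument, conditioning on finite histories ending below $L(M,s_n)$ and invoking $\Val[M]{s',x'}=0$ for $x'<L(M,s')$, is perfectly sound.) This is more elementary: no Banach fixed-point theorem, no uncountably-dimensional functional, and it makes the role of $L(M)$ and $U(M)$ in the horizon formula completely explicit. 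The paper's route, in exchange, yields the side result that $\WRvec$ is the \emph{unique} fixed point of $\calL$ in $\LU$, an intrinsic characterisation your argument neither produces nor needs.
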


\noindent
The previous proposition shows that the number $v$ and strategy $\sigma$ of Theorem \ref{thm:quantitative-algorithm} can be computed by examining the possible behaviours of $M$ during the first $n$ steps. 
However, since $\log\rho \approx \rho - 1$ for $\rho$ close to $1$, the number $n$ can be exponential in $\size{M}$. 
Thus, the trivial algorithm, that unfolds the MDP from the initial configuration $(s_0,x_0+\eps/2)$ into a tree of depth $n$, and on this tree computes a strategy maximising the probability of reaching a rentier configuration, has a doubly-exponential complexity. 
The key idea allowing to reduce this complexity to singly-exponential is
to round the numbers representing the wealth in the configurations of $M$ to numbers of polynomial size. If the size
is chosen
carefully, the error introduced by the rounding is not large enough to thwart the computation.
In the following we assume that
$\log\rho< \log(U(M)-L(M))+\log(\eps^{-1})+2$,
since otherwise $n=1$ and we can compute the strategy $\sigma$ and number $v$ by computing an action that maximizes the one-step probability of reaching a rentier configuration from $(s_0,x_0+\eps/2)$.

We now formalise the notion of rounding the numbers appearing in configurations of $M$.
Let $\lambda$ be a rational number.
We say that two configurations $(s,x)$, $(s',x')$ are {\em $\lambda$-equivalent}, denoted by $(s,x)\sim_{\lambda}(s',x')$, if $s=s'$ and one of the following conditions holds:
\begin{itemize}
 \item both $x$ and $x'$ are greater than $U(M,s)$ or less than or equal to $L(M,s)$; or
 \item $L(M,s) < x,x' \le U(M,s)$ and there is $k\in \Zset$ such that both $x, x' \in (k\lambda,(k+1)\lambda]$.
\end{itemize}

Clearly, $\sim_{\lambda}$ is indeed an equivalence on the set $S\times \Qset$, and every member of the quotient set $(S\times \Qset) / \mathord{\sim}_{\lambda}$ is a tuple of the form $(s,D)$, with $s\in S$ and $D$ being either a half-open interval of length at most $\lambda$ or one of the intervals $(U(M,s),+\infty)$, $(-\infty,L(M,s)]$. For such $D$, we denote by $w_{D}$ the maximal element of $D$ (putting $w_{(U(M,s),+\infty)}=+\infty$). We also denote by $[s,x]_{\lambda}$ the equivalence class of $(s,x)$.

Now let $n$ be as in Proposition $\ref{prop:finite-memory-reach}$.
We define an MDP $M_{\lambda, n}$ representing an unfolding of $M$ into a DAG of depth $n$, in which the current wealth $w$ is always rounded up to the least integer multiple of $\lambda$ greater than $w$, with configurations exceeding the upper or dropping below the lower threshold of Proposition~\ref{prop:lgug} being immediately recognized as winning or losing. The unfolded MDP $M_{\lambda,n}$ is formally defined as follows.

\begin{definition}\label{def:unfold}[Unfolded MDP]
Let $\mdp=(\states, \actions, \trans, \gain, \interest)$
be an solvency MDP, and $n>0$ and $\lambda>0$ two numbers. We define an
MDP $M_{\lambda,n}=(S',A,T')$ where $S'$ is $((S\times\Qset)/\mathord{\sim}_{\lambda})\times\{0,1,\dots,n\}$,
and the transition function $T'$ is the unique function satisfying the following:
 \begin{itemize}
  \item for all $(s,D,i)\in S'$ and $a\in A$ where $i< n$ and $D$ is a bounded interval, the distribution $T'((s,D,i),a)$ is defined iff $a\in A(s)$, and assigns $T(s,a)(s')$ to $([s',\rho\cdot w_{D} + F(s,a)]_{\lambda},i+1)$
  \item for every other vertex $(s,D,i)\in S'$ there is only a self loop on this vertex under every action, i.e., $T'((s,D,i),a)$ is given by $[(s,D,i)\mapsto 1]$ for every action $a\in \actions$. %
 \end{itemize}
\end{definition}
The size of $M_{\lambda,n}$ as well as the time needed to construct it is
\mbox{$(|S|\cdot|A| \cdot\log(p_{\min}^{-1})\cdot n\cdot\lambda^{-1})^{\calO(1)}$}.

Now we denote by $\Hit$ the set of all runs in $M_{\lambda,n}$ that contain a vertex of the form $(t,(U(M,t),\infty),i)$, and by $\Cover{z}$ (for ``almost rentier'') the set of all runs in $M$ that hit a configuration of the form $(t,y)$ with $y\geq U(M,t)-z$ in at most $n$ steps. In particular, $\Cover{0}$ is the event of hitting a rentier configuration in at most $n$ steps. The following lemma (proved in Appendix~\ref{sec:discretization}) shows that $M_{\lambda,n}$ adequately approximates the behaviour of $M$.

\begin{lemma}
\label{lem:discretization}
Let $(s,y)$ be an arbitrary configuration of $M$. Then the following holds:
\begin{enumerate}
 \item For every $\sigma\in \St[M]$ there is $\pi\in \St[M_{\lambda,n}]$ such that
  $\Prb^{\pi}_{M_{\lambda,n},([s,y]_{\lambda},0)}(\Hit) \geq \Prb^{\sigma}_{M,(s,y)}(\Cover{0})$.
 \item There is $\sigma\in \St[M]$ such that
  $\Prb^{\sigma}_{M,(s,y)} (\Cover{n\cdot\lambda\cdot\rho^{n}}) \geq \sup_{\pi}\,\Prb^{\pi}_{M_{\lambda,n},([s,y]_{\lambda},0)} (\Hit) \mydef v$,
  where the supremum is taken over $\St[M_{\lambda,n}]$. Moreover, the number $v$ and a finite representation of the strategy $\sigma$ can be computed in time $\size{M_{\lambda,n}}^{\calO(1)}$.
\end{enumerate}
\end{lemma}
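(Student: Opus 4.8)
The plan is to set up, for a fixed initial configuration, a faithful coupling between the runs of $M$ and the runs of $M_{\lambda,n}$ in which the two MDPs make the same choice of action and successor state at every step, and then to control the gap between the ``true'' wealth in $M$ and the rounded-up wealth in $M_{\lambda,n}$ along a coupled pair. Two observations drive the argument. First, the update $x\mapsto\rho x+F(s,a)$ is strictly increasing in $x$ (as $\rho>1>0$) and $M_{\lambda,n}$ always rounds the wealth \emph{up} (to $w_{D}$, the largest element of the current class); hence by induction along a coupled pair the rounded wealth never drops below the true wealth. Second, one rounding step enlarges the wealth by strictly less than $\lambda$, and this surplus is multiplied by $\rho$ at each subsequent step, so if $E_{k}$ bounds the surplus after $k\le n$ steps then $E_{0}\le\lambda$ and $E_{k}\le\rho E_{k-1}+\lambda$, whence $E_{k}\le\lambda(1+\rho+\dots+\rho^{k})\le(k+1)\lambda\rho^{k}$ and in particular $\rho E_{i-1}\le i\lambda\rho^{i}\le n\lambda\rho^{n}$ for every $1\le i\le n$. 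I would make the coupling precise via a rounding map $\phi$ on $M$-histories: $\phi((s,y))=([s,y]_{\lambda},0)$ and $\phi(h\cdot a\cdot(s',x'))=\phi(h)\cdot a\cdot([s',\rho\,w_{D}+F(s,a)]_{\lambda},i+1)$ whenever $\phi(h)$ ends in a bounded-interval vertex $(s,D,i)$ with $i<n$, continued by the self-loops of $M_{\lambda,n}$ once a threshold class is entered or step $n$ reached. Since the wealth along an $M$-history is a deterministic function of the initial configuration and the state/action sequence, $\phi$ is injective on $M$-histories that issue from a fixed configuration and have not yet entered a threshold class, and it puts those in bijection with the corresponding histories of $M_{\lambda,n}$ (the enabledness conditions and transition probabilities $T(s,a)(s')$ match on both sides). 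Consequently a strategy of either MDP pushes forward, resp.\ pulls back, along $\phi$ to a strategy of the other whose induced run distribution is the $\phi$-image of the original one up to the first entry of a threshold class or step $n$; this is the coupling I will use throughout.

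For the first claim, given $\sigma\in\St[M]$ I would take $\pi$ to be its pushforward along $\phi$ and couple the run of $M$ from $(s,y)$ under $\sigma$ with the run of $M_{\lambda,n}$ from $([s,y]_{\lambda},0)$ under $\pi$. If the $M$-run reaches a rentier configuration $(s_i,x_i)$ with $x_i\ge U(M,s_i)$ within $n$ steps, then by the monotonicity observation the rounded wealth at that step is $\ge U(M,s_i)$ as well, so the coupled $M_{\lambda,n}$-run has entered the class above the threshold of $s_i$ by step $i$ and hence lies in $\Hit$; integrating over runs gives $\Prb^{\pi}_{M_{\lambda,n},([s,y]_{\lambda},0)}(\Hit)\ge\Prb^{\sigma}_{M,(s,y)}(\Cover{0})$. (The borderline case $x_i=U(M,s_i)$ is harmless: it is consistent with Proposition~\ref{prop:lgug} to let the class above the threshold be $\{x\ge U(M,s)\}$, which makes the inclusion literal; I would adopt this convention.)

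For the second claim, $M_{\lambda,n}$ is a finite MDP and $\Hit$ a reachability objective, so the supremum $v=\sup_{\pi}\Prb^{\pi}_{M_{\lambda,n},([s,y]_{\lambda},0)}(\Hit)$ is attained by a memoryless deterministic strategy $\pi^{*}$, and both $v$ (a rational of bit-size polynomial in $\size{M_{\lambda,n}}$) and $\pi^{*}$ are computable in time $\size{M_{\lambda,n}}^{\calO(1)}$ by the standard linear-programming algorithm for maximal reachability in MDPs. I would then let $\sigma$ be the pullback $\pi^{*}\circ\phi$; since $\pi^{*}$ is memoryless, $\sigma$ is a finite-memory strategy for $M$ whose memory is exactly the current rounded wealth together with the step counter, so it has a representation of size $\size{M_{\lambda,n}}^{\calO(1)}$. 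Coupling as above, suppose the $\pi^{*}$-run of $M_{\lambda,n}$ enters a threshold-exceeding class at some step $i\in\{1,\dots,n\}$; then the overshooting value is $\rho\,w_{D_{i-1}}+F(s_{i-1},a_i)\ge U(M,s_i)$, while in $M$ the true wealth obeys $x_i=\rho x_{i-1}+F(s_{i-1},a_i)\ge(\rho\,w_{D_{i-1}}+F(s_{i-1},a_i))-\rho(w_{D_{i-1}}-x_{i-1})\ge U(M,s_i)-\rho E_{i-1}\ge U(M,s_i)-n\lambda\rho^{n}$, so the $M$-run lies in $\Cover{n\lambda\rho^{n}}$ (hitting at step $i\le n$). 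Integrating, $\Prb^{\sigma}_{M,(s,y)}(\Cover{n\lambda\rho^{n}})\ge v$, which is what is claimed, and the computability statement is precisely the computability of $v$ and $\pi^{*}$ recorded above.

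The step I expect to demand the most care is the error bookkeeping: one must carry the two invariants ``rounded wealth $\ge$ true wealth'' and ``rounded $-$ true $\le E_{k}$'' simultaneously through a single induction on the step, and keep in mind that the amplification factor at each step is $\rho$, not $1$ --- which is exactly why the slack in the second claim is the (still singly-exponential) quantity $n\lambda\rho^{n}$, and why $\lambda$ will eventually have to be chosen exponentially small while staying of polynomial bit-size. Everything else --- well-definedness of $\phi$, the measure-preservation of the strategy pushforward/pullback along $\phi$, and the existence and polynomial-time computability of the optimal memoryless $\pi^{*}$ in the finite MDP $M_{\lambda,n}$ --- is routine and I would only sketch it.
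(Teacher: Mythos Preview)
Your proposal is correct and follows essentially the same route as the paper: your rounding map $\phi$ is the paper's map $g$ from $M$-histories to $M_{\lambda,n}$-histories, the pushforward/pullback of strategies along $\phi$ matches the paper's definitions $\pi(u)=\sigma(f(u))$ and $\sigma(v)=\pi^*(g(v))$, and your inductive error bound $E_k\le(k+1)\lambda\rho^{k}$ is exactly the paper's estimate $w_{D_i}-x_i\le(i+1)\lambda\rho^{i}$. Your treatment of the borderline case $x_i=U(M,s_i)$ is in fact slightly more careful than the paper's, which silently uses the non-strict inequality $w_{D_i}\ge x_i$ without addressing whether equality can occur at the threshold; your suggested convention of taking the upper class to be $\{x\ge U(M,s)\}$ is a clean way to close that gap.
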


We can now finish the proof of Theorem \ref{thm:quantitative-algorithm}. Let us put
$\lambda=\lceil(64\cdot n\cdot(U(M)-L(M))^2)/\eps^3\rceil^{-1}$.
An easy computation (shown in Appendix~\ref{app:neps}) proves that $n \cdot \lambda \cdot \rho^{n}\leq \frac{\eps}{2}$ thanks to our assumption that $\log\rho< \log(U(M)-L(M))+\log(\eps^{-1})+2$.

\SetKwRepeat{DoWhile}{do}{while}
\begin{algorithm}[t]
\small
  \KwIn{MDP $\mdp$, state $s$, number $p\in [0,1]$, $\delta>0$}
  \KwOut{number and strategy satisfying conditions of Theorem~\ref{thm:approx}}
  $a:= L(\mdp,s)$; $b:=U(\mdp,s)$
   \tcp*{\footnotesize see Proposition~\ref{prop:lgug}}
  \DoWhile{$(b-a) /4> \delta$} 
  {
   $\varepsilon := (b-a)/4$; \quad
   $n :=\lceil (\log{(U(M)-L(M))}+\log{\eps^{-1}}+2)/\log \rho\rceil$\;
   $y := a+(b-a)/2$; \quad
   $\lambda :=\lceil (64\cdot n\cdot(U(M)-L(M))^2) / \eps^3 \rceil^{-1}$\;
   $\mdp' :=M_{\lambda,n}$ and initial configuration $t:=[s,y]_{\lambda}$
    \tcp*{\footnotesize see Definition~\ref{def:unfold}}
   $v:= \sup_\pi \Prb^{\pi}_{\mdp',t} (\Hit)$;
    \tcp*{\footnotesize $\Val[M]{s,y}\le v \le \Val[M]{s,y+\varepsilon}$ by Theorem~\ref{thm:quantitative-algorithm}}
   \lIf{$v \le p$}{$a:= a+(b-a)/2$} \lElse{$b:=a+3(b-a)/4$}\;
  }
    Compute $\sigma$ s.t. $\Prb^{\sigma}_{M,(s,y)} (\Cover{n\cdot\lambda\cdot\rho^{n}})\ge v$
    \tcp*{\footnotesize see Lemma~\ref{lem:discretization}}
  \Return $a$ and the wealth-independent strategy (see p.~\pageref{page:strat-shift}) given by $\sigma$ and $(s,y)$\;
\caption{Algorithm approximating $\WR{s,p}$}\label{alg:wr}
\end{algorithm}

By Proposition \ref{prop:finite-memory-reach} there is a strategy $\sigma_{\eps}$ in $M$ with $\Prb^{\sigma}_{M,(s_0,x_0+\eps/2)}(\Cover{0})\geq \Val[M]{s_0,x_0}$, and so from Lemma \ref{lem:discretization} (1.) we get $\sup_{\pi}\Prb^{\pi}_{M_{\lambda,n},([s_0,x_0+\eps/2]_{\lambda},0)} (\Hit) \geq \Val[M]{s_0,x_0}$. By part (2.) of the same lemma we can compute, in time $\size{M_{\lambda,n}}^{\calO(1)}$, a strategy $\sigma$ in $M$ and a number $v$ such that $\Prb^{\sigma}_{M,(s_0,x_0+\eps/2)} (\Cover{\eps/2}) \geq v \geq \Val[M]{s_0,x_0}$. In other words, from $(s_0,x_0+\eps/2)$ the strategy $\sigma$ reaches with probability at least $v$ a configuration that is only $\eps/2$ units of wealth away from being rentier.
Note that once an initial configuration is fixed, any strategy can be viewed as being
\label{page:strat-shift}
{\em wealth-independent}, i.e. being only a function of a sequence of states and actions in the history, since the current wealth can be inferred from this sequence and the initial wealth. Suppose now that we fix the initial 
configuration $(
s_0,
x_0+\eps)$ instead of $(s_0,x_0+\eps/2)$, keeping the same strategy~$\sigma$ (i.e., we use a strategy that selects the same action as $\sigma$ after observing the same sequence of states and actions). It is then obvious that we reach a rentier configuration with probability at least $v$, i.e., $
\Prb^{\sigma}_{(s,x+\eps)} (\Win) \geq v$ as required.%

It remains to analyse the complexity of the construction. The analysis is merely technical and
is postponed to Appendix~\ref{sec:alg-complexity}.
\qedhere{\em(Thm.~\ref{thm:quantitative-algorithm})}
\end{proof}
The results described in this section are summarised in a form of pseudocode in
Algorithm~\ref{alg:wr}.

\subsection{Lower Bounds}
Now we complement the positive results given above with lower complexity bounds. 

\begin{theorem}\label{thm:hard}
\label{thm:quant-hardness}
The problem of deciding whether $\WR{s,p} \leq x$ for a given $x$ is $\NP$-hard. Furthermore, existence of any of the following algorithms is not possible unless P$=$NP:
 \begin{enumerate}
  \item An algorithm approximating $\WR{s,p}$ up to the absolute error $\delta$ in time polynomial in $|S|\cdot|A|\cdot\log\left(p_{\min}^{-1}\right)$ and $\log(|r_{\max}|/(\rho-1))$ and exponential in $\log(1/\delta)$.
  \item An algorithm approximating $\WR{s,p}$ up to the absolute error $\delta$ in time polynomial in $|S|\cdot|A|\cdot\log\left(p_{\min}^{-1}\right)$ and $\log(1/\delta)$ and exponential in $\log(|r_{\max}|/(\rho-1))$.
 \end{enumerate}
Above, the numbers $r_{\max}$ and $p_{\min}$ are as in Theorem~\ref{thm:quantitative-algorithm}.
\end{theorem}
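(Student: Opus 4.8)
The plan is to reduce from the \emph{Knapsack} problem: given items with integer weights $w_1,\dots,w_k$ and values $v_1,\dots,v_k$, a capacity $W$ and a target $V$, decide whether there is a set $S$ with $\sum_{i\in S}w_i\le W$ and $\sum_{i\in S}v_i\ge V$. From such an instance I would build a solvency MDP whose states form a chain of \emph{item-gadgets}: passing through the $i$-th gadget the controller decides whether $i\in S$, and this decision is immediately reflected in the wealth via the reward attached to the chosen action (together with the intervening interest). After the last gadget the MDP branches \emph{probabilistically} into two tests, so that the controller has already fixed its set $S$ before it learns which of the two Knapsack constraints it will be asked to satisfy. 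Tuning the constants so that, via Proposition~\ref{prop:discounted}, ``win in branch $b$'' translates into a threshold inequality on $\sum_{i\in S}w_i$ (resp.\ $\sum_{i\in S}v_i$), and using Proposition~\ref{prop:lgug} to equate ``win with probability $1$'' with ``reach a rentier configuration'', one gets that $\Val[M]{s_0,x}\ge p$ for a suitable fixed $p>\frac12$ holds if and only if there is a set $S$ that is feasible when the thresholds are relaxed by $x$. Consequently $\WR[M]{s_0,p}\le x_0$ for the appropriate $x_0$ iff the instance is a yes-instance, and yes- and no-instances produce values of $\WR[M]{s_0,p}$ that are separated by an explicit gap. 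This already yields the claimed $\NP$-hardness of deciding $\WR{s,p}\le x$.

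For the two non-approximability statements I would instantiate the \emph{same} reduction with two choices of parameters. For part~(1) take the interest rate very close to $1$, say $\rho=1+2^{-m}$ with $m$ polynomial in the instance size, and keep rewards of polynomial bit-length; since the horizon of the constructed MDP is polynomial, every discount factor $\rho^{-i}$ differs from $1$ by at most $2^{-m}\cdot\mathrm{poly}$, so for $m$ large enough the total rounding error stays below the unit spacing of the integer Knapsack sums, and the gap in $\WR$ between yes- and no-instances is a \emph{constant}. Thus approximating $\WR{s,p}$ up to a constant $\delta$ decides Knapsack, while $\log(1/(\rho-1))$ and $\log r_{\max}$ are polynomial --- exactly the regime forbidden by part~(1). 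For part~(2) take $\rho=2$ and rewards of bounded magnitude, but spread the weights, values and targets over polynomially many steps, using the positions in the discounted sum as the digit positions of a base-$2$ expansion; then $r_{\max}$ and $\rho-1$ are constant (so ``exponential in $\log(r_{\max}/(\rho-1))$'' is polynomial), whereas the relevant values of $\WR{s,p}$ are integer multiples of $2^{-N}$ for the polynomial horizon $N$, so distinguishing yes- from no-instances needs additive error $\delta<2^{-N-1}$, i.e.\ $\log(1/\delta)$ polynomial --- exactly the regime forbidden by part~(2). In both cases a hypothetical algorithm of the stated complexity would run in polynomial time and solve Knapsack, forcing $\mathrm{P}=\mathrm{NP}$.

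The delicate part is the gadget itself, and I expect it to be the main obstacle. Two points require care. First, since $\Val[M]{s,\cdot}$ is monotone, the solvency objective naturally certifies only \emph{lower} bounds on wealth, whereas one of the two Knapsack constraints ($\sum_{i\in S}w_i\le W$) is an \emph{upper} bound; one has to arrange the reward signs and ``pack'' the weight- and value-registers into a single wealth value with well-separated magnitudes so that each branch can read off its register through an ordinary $\ge$-comparison --- getting the scaling of the two digits right, and checking that no unintended overshoot or undershoot of the wealth lets the controller win a branch it should lose, is the main bookkeeping. Second, one must verify that the probabilistic branch genuinely enforces commitment: because it is reached only after the whole item-chain, a strategy cannot adapt $S$ to the outcome, so $\sup_\sigma\Prb^{\sigma}_{M,(s_0,x)}(\Win)$ is the maximum over sets $S$ of $\frac12[\text{branch }A\text{ won}]+\frac12[\text{branch }B\text{ won}]$, which for $p>\frac12$ equals $1$ exactly on feasible $S$ (this is where it is essential to use $p$ strictly between $\frac12$ and $1$). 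Finally, for part~(2) one must check that the positional encoding of the Knapsack numbers through the discount is exact --- in particular that ``out''-choices traverse the same number of steps as ``in''-choices so that digit positions stay fixed --- and that the resulting MDP still has polynomial size and polynomially bounded reward bit-length; this alignment, together with the packing of the two constraints into one wealth value, is where most of the technical work lies.
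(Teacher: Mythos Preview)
Your overall strategy---reduce from Knapsack, have the controller commit to a set $S$ via a chain of item-gadgets, then test feasibility---matches the paper's. The crucial divergence is in \emph{how} the two Knapsack constraints are tested, and this is precisely where your plan has a genuine gap, not just bookkeeping.

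In your scheme both constraints are wealth-threshold tests in two post-chain branches. But after the chain the wealth is a single number $x=x(S)$, and every branch, after any fixed sequence of rewards and interest, can only check an inequality of the form $\rho^{k}x + c \ge t$, i.e.\ $x \ge t'$. If you pack $\sum_{i\in S}v_i$ and $\sum_{i\in S}w_i$ into $x$ at separated magnitudes, the high-order register can be read by one threshold, but the low-order one cannot: the unknown high-order contribution swamps any threshold you set, and no fixed post-branch reward can cancel it since it depends on $S$. Concretely, with $x=A(\sum v_i)-\sum w_i$ and $A$ large, ``$x\ge AV-W_{\max}$'' correctly isolates the value constraint, but there is no threshold that isolates ``$\sum w_i\le W$'' because $A\sum v_i$ can be anything in $[0,Av_{\mathit{tot}}]$. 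This is a structural obstruction, not a matter of tuning constants; the ``$p>\tfrac12$ forces both branches'' argument is fine, but it presupposes that each branch really implements its constraint, and one of them cannot.

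The paper avoids this by putting the two constraints into \emph{different carriers}. Weights are encoded in rewards (choosing ``don't pack $i$'' yields reward $w_i\rho^{-2(n-i)}$), so the wealth on reaching the terminal state $s_{n+1}$ is $\sum_{i\notin S}w_i+\rho^{2n}x_0$, and the weight constraint becomes the single wealth test $x\ge U(M,s_{n+1})=w_{\mathit{tot}}-W$. Values, by contrast, are encoded in \emph{transition probabilities}: choosing ``pack $i$'' routes through a state from which the run escapes to an always-winning sink $t_1$ with probability proportional to $v_i$, so the total probability of hitting $t_1$ is exactly $\sum_{i\in S}v_i$. The value constraint then lives in the probability threshold $p=1-\tfrac{1}{n}+V$ (with $1-\tfrac1n$ the probability of surviving to $s_{n+1}$), not in a wealth test. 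This is the idea your proposal is missing.

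For the two inapproximability clauses the paper is also simpler than your plan. Part~(1) uses $\rho=1+\tfrac{1}{4n^2}$ and obtains a constant gap $\delta=\tfrac18$; part~(2) keeps the \emph{same} $\rho$ but divides all rewards by $w_{\mathit{tot}}$, so $r_{\max}=\calO(n^2)$ while the gap becomes $\delta=\tfrac{1}{8w_{\mathit{tot}}}$. No base-$2$ positional encoding with $\rho=2$ is needed.
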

\begin{proof}[Proof sketch]
 We show how to construct, for a given instance of the Knapsack problem, a solvency MDP $M$ in which the item values are suitably encoded into probabilities of certain transitions, while the item weights are encoded as rewards associated to some actions. We then show that the instance of Knapsack has a solution if and only if for a certain state $s$ of $M$ and a certain number $p$ (which can be computed from the instance) it holds that $\WR{s,p}\leq 0$. We also show that in order to decide this inequality it suffices (for the constructed MDP $M$) to approximate $\WR{s,p}$ up to the absolute error $\frac{1}{4}$. (Intuitively, this corresponds to the well-known fact that no polynomial approximation algorithm for Knapsack can achieve a constant absolute error.) To get part (2.) we use a slight modification of the same approach.
 
 Let us note that a crucial component of the aforementioned reductions is that $\Val[M]{t,\cdot}$ may not be a continuous function (see example~\ref{ex:running}). Intuitively, this allows us to recognise whether the current wealth, which in $M$ always encodes weight of some set of items, surpasses some threshold. The proof can be found in Appendix~\ref{app:hardness}.
\end{proof}
Note that thanks to the interreducibility from Proposition~\ref{prop:discounted}, the (suitably rephrased) results of Theorems~\ref{thm:approx} and~\ref{thm:quant-hardness} hold also for the value-at-risk approximation in discounted MDPs.

\section{Conclusions}
We have introduced solvency MDPs, a model
apt for analysis of systems
where interest is paid 
or received for the accumulated wealth.
We have analysed the complexity
of fundamental problems, and proposed algorithms
that approximate 
the minimum wealth needed to win 
with a given probability and compute a strategy that achieves the goal. 
As a by-product, we obtained new results for the \emph{value-at-risk} problem in discounted MDPs.

There are several important directions of future study. 
One question deserving attention is to find an algorithm computing
or approximating $\Val{s,x}$. 
The usual approaches of discretising the state 
space do not work in this case since the function $\Val{s,\cdot}$
is not continuous and thus it 
is difficult to  bound the error introduced by the discretisation. 
Another direction is the implementation
of the algorithms and their evaluation on case-studies.

\bibliographystyle{abbrv}
\bibliography{references}

\begin{thebibliography}{10}

\bibitem{Andersson2009}
D.~Andersson and P.~B. Miltersen.
\newblock The complexity of solving stochastic games on graphs.
\newblock In {\em Proceedings of the ISAAC '09}, pages 112--121, Berlin,
  Heidelberg, 2009. Springer-Verlag.

\bibitem{BU:book}
N.~B{\"a}uerle and U.~Rieder.
\newblock {\em Markov Decision Processes with Applications to Finance}.
\newblock Springer, 2011.

\bibitem{BKSV:Solvency-games}
N.~Berger, N.~Kapur, L.~Schulman, and V.~Vazirani.
\newblock Solvency games.
\newblock In {\em Proceedings of {FST\&TCS 2008}}, volume~2 of {\em LIPIcs},
  pages 61--72. Schloss Dagstuhl, 2008.

\bibitem{BodaF06}
K.~Boda and J.~A. Filar.
\newblock Time consistent dynamic risk measures.
\newblock {\em Math. Meth. of OR}, 63(1):169--186, 2006.

\bibitem{BodaFLS04}
K.~Boda, J.~A. Filar, Y.~Lin, and L.~Spanjers.
\newblock Stochastic target hitting time and the problem of early retirement.
\newblock {\em IEEE Trans. Automat. Contr.}, 49(3):409--419, 2004.

\bibitem{BBEK:OC-games-termination-approx-journal}
T.~Br{\'a}zdil, V.~Bro\v{z}ek, K.~Etessami, and A.~Ku\v{c}era.
\newblock Approximating the termination value of one-counter mdps and
  stochastic games.
\newblock {\em Inf. Comput.}, 222:121--138, 2013.

\bibitem{CdAHS03}
A.~Chakrabarti, L.~de~Alfaro, T.~A. Henzinger, and M.~Stoelinga.
\newblock Resource interfaces.
\newblock In {\em Proc.\ of EMSOFT 2003}, volume 2855 of {\em LNCS}, pages
  117--133, Heidelberg, 2003. Springer.

\bibitem{CHD:energy-games}
K.~Chatterjee and L.~Doyen.
\newblock Energy parity games.
\newblock In {\em Proceedings of ICALP 2010, Part II}, volume 6199 of {\em
  LNCS}, pages 599--610. Springer, 2010.

\bibitem{CHD:energy-MDPs}
K.~Chatterjee and L.~Doyen.
\newblock Energy and mean-payoff parity {Markov} decision processes.
\newblock In {\em Proceedings of MFCS 2011}, volume 6907 of {\em LNCS}, pages
  206--218. Springer, 2011.

\bibitem{CS87}
K.~Chung and M.~J. Sobel.
\newblock Discounted mdps: Distribution functions and exponential utility
  maximization.
\newblock {\em SIAM J. Contr. Optim.}, 25:49--62, 1987.

\bibitem{FKR95}
J.~Filar, D.~Krass, and K.~Ross.
\newblock Percentile performance criteria for limiting average markov decision
  processes.
\newblock {\em IEEE Trans. Automat. Contr.}, 40(1):2--10, 1995.

\bibitem{hamilton1994time}
J.~D. Hamilton.
\newblock {\em Time series analysis}, volume~2.
\newblock Cambridge Univ Press, 1994.

\bibitem{hull2009options}
J.~Hull.
\newblock {\em Options, futures, and other derivatives}.
\newblock Pearson, 2009.

\bibitem{Martin:Blackwell-determinacy}
D.~Martin.
\newblock The determinacy of {Blackwell} games.
\newblock {\em J. of Symb. Logic}, 63(4):1565--1581, 1998.

\bibitem{Schal:MDP-finance}
M.~Sch{\"a}l.
\newblock Markov decision processes in finance and dynamic options.
\newblock {\em International Series in Operations Research \& Management
  Science}, 40:461--487, 2002.

\bibitem{Sob82}
M.~J. Sobel.
\newblock The variance of discounted markov decision processes.
\newblock {\em J. Appl. Probab.}, 19:794--802, 1982.

\bibitem{White93}
D.~J. White.
\newblock Minimizing a threshold probability in discounted markov decision
  processes.
\newblock {\em J. Math. Anal. Appl.}, 173:634--646, 1993.

\bibitem{WL99}
C.~Wu and Y.~Lin.
\newblock Minimizing risk models in markov decision processes with policies
  depending on target values.
\newblock {\em J. Math. Anal. Appl.}, 231(1):47--67, 1999.

\bibitem{ZP:games-graphs}
U.~Zwick and M.~Paterson.
\newblock The complexity of mean payoff games on graphs.
\newblock {\em TCS}, 158(1--2):343--359, 1996.

\end{thebibliography}

\newpage
\appendix
\noindent
\begin{center}
{\huge{\bf Technical Appendix}}
\end{center}
\section{Proofs for Section~\ref{sec:qualitative}}
\label{app:qualitative}

\subsection{Proof of Proposition~\ref{prop:discounted}}
\label{app:discounted}
We show the missing part of the proof of Proposition~\ref{prop:discounted}. We need to show that the infimum wealth $x_n$ along $\omega$ is finite.
This follows from the fact that for every 
$n\geq 0$ we have
\begin{align*}
 x_n &= \rho^n\cdot(\discn{\omega}{n}-\disc{\omega})=-\rho^n\cdot(\sum_{i=n+1}^{\infty}\frac{1}{\rho^i}\cdot\gain(s_{i-1},a_{i})) \\ &\geq -\frac{\rho^n}{\rho^{n+1}}\cdot\frac{\max_{(s,a)\in\states\times\actions}|\gain(s,a)|}{1-\frac{1}{\rho}}=-\frac{\max_{(s,a)\in\states\times\actions}|\gain(s,a)|}{\rho-1}.
\end{align*}

\subsection{Proof of Lemma~\ref{lem:all}}
\begin{reflemma}{lem:all}
Let $\sigma$ be a strategy in $\dmdp$ such that $\Pr{\sigma}{\dmdp}{s}(\thr(x))=1$.
Then \emph{all} runs realisable under $\sigma$ are in $\thr(x)$.
\end{reflemma}
\begin{proof}
Suppose the lemma does not hold, and let $\omega=s_0a_0s_1\ldots$ be a run realisable under $\sigma$
such that $\disc{\omega} = x - \varepsilon$ for some $\varepsilon>0$.
Let $M{:=}\sum_{i=0}^\infty \beta^{i}\max_{s,a} |\gain(s,a)|=\frac{\max_{s,a} |\gain(s,a)|}{1-\beta}$ and let $k$ be such that $\beta^{k+1} \cdot 2\cdot M < \varepsilon$. Let
$\omega'$ be any run of the form $s_0a_1\ldots a_ks_kb_{k+1}t_{k+1}b_{k+2}\dots$ (i.e. $\omega_k = \omega'_k$). Then, denoting $t_{k}{:=}s_k$ we have
\[
 \disc{\omega'} = \disc{\omega} - \big(\sum_{i=k+1}^\infty \beta^{i} \cdot\gain(s_{i-1},a_i)\big) + \big(\sum_{i=k+1}^\infty \beta^{i} \cdot\gain(t_{i-1},b_i)\big) \le x-\varepsilon +
   \beta^{k+1} \cdot 2\cdot M < x.
\]
However, the probability of such runs is
nonzero, a contradiction with the assumption that \mbox{$\Pr{\sigma}{\dmdp}{s}(\thr(x))=1$.}
\end{proof}

\subsection{Interreducibility of discounted MDPs and discounted games}
Let us first fix a discounted MDP
$\dmdp=(\states,\actions,\trans,\gain,\discount)$
We define a game $\game = (S, (S\times A), s_0, \trans_\game, R_\game, \sqrt{\beta})$
with%
\begin{itemize}
 \item $(s,(s,a))\in T_\game$ whenever $T(s,a)$ is defined;
 \item $((s,a),s')\in T_\game$ whenever $T(s,a)(s') > 0$; and 
 \item $R_\game(s) = 0$ for all $s\in S$, and $R_\game((s,a)) = \gain(s,a)$
\end{itemize}
For the clarity of presentation we first assume that $\sqrt{\beta}$ is a rational number of polynomial encoding size. Then we will show how to get rid of this assumption.

Let $\sigma$ be a strategy in $\dmdp$ such that $\Pr{\sigma}{\dmdp}{s}(\thr(x))=1$.
We define a strategy $\sigma^\game$ for the player 1 in $\game$ by $\sigma^\game(s_0(s_0,a_1)s_1\ldots s_n) = (s_n,a_{n+1})$
where $a_n\in A$ is an arbitrary action satisfying $\sigma(s_0a_1\ldots s_n)(a_{n+1}) > 0$.
For all player 2 strategies $\zeta_2$ we have that
to $\gamerun{\sigma^\game,\zeta_2}=s_0(s_0,a_1)s_1\ldots$ corresponds the run $\omega=s_0a_1s_1\ldots$ in $\dmdp$ which is realisable under $\sigma$, and
$\disc{\gamerun{\sigma^\game,\zeta_2}} = \disc{\omega}$.
Because every run $\omega$ realisable under $\sigma$ is in
$\thr(x)$, we have that $\disc{\gamerun{\sigma^\game,\zeta_2}} \ge x$.
For the other direction, let $\zeta_1$ be a winning player 1 strategy, by~\cite{ZP:games-graphs} we can assume
that it is is memoryless.
We define a strategy $\sigma$ for $\dmdp$ by $\sigma(s)=\zeta_1(s)$ for all $s\in S$.
Assume $\Pr{\sigma}{\dmdp}{s}(\thr(x)<1)$, and let $\omega=s_0a_1s_1\ldots$ be a run realisable under $\sigma$ such that
$\disc{\omega}<x$. Then we can fix a strategy $\zeta_2$ for player 2 in $\game$ defined by $\zeta_2(s_0(s_0,a_1)\ldots s_n) = a_{n+1}$.
We can easily show that $\disc{\gamerun{\zeta_1,\zeta_2}} = \disc{\omega} < x$, which contradicts that $\zeta_1$ is winning.

Now we drop the assumption that $\sqrt{\beta}$ is a rational number of polynomial encoding size. In such a case we represent the number $\sqrt{\beta}$ symbolically, as a triplet $(P(x),0,1)$, where $P(x)=x^2 - \beta$ is the minimal polynomial of $\sqrt{\beta}$ over $\Qset$ and the numbers $0,1$ represent the fact that we are interested in the positive (i.e., the one lying in the interval $[0,1]$) root of $P(x)$. Now consider again the aforementioned game $\game$ with $\sqrt{\beta}$ represented as above. Surely, determining the winner in such a game is at least as hard as determining the winner in ``standard'' discounted games, since all rational numbers can be also represented in the triplet form. It thus remains to show that the problem of determining the winner is in $\NP \cap \coNP$. Let us recall how the $\NP$-algorithm for standard games (see~\cite{ZP:games-graphs}) works: first, it guesses a winning memoryless deterministic strategy of player $1$ and then it verifies, using linear programming 
techniques, that against this strategy the player 2 cannot decrease the discounted reward below $x$. Now linear programs with coefficients represented in the triplet form can be solved on a Turing machine in time polynomial in the encoding size of the triplets and in the degree of the algebraic extension defined by adjoining all the coefficients in the program to $\Qset$ 
(see \citeadd[Theorem~21]{Beling}).
The linear program obtained by guessing the strategy in $\game$ contains only one coefficient which may be irrational, namely $\sqrt{\beta}$, which generates an extension of degree at most 2. Thus, we can again verify that the guessed strategy is winning in polynomial time. For the $\coNP$ upper bound we proceed similarly.

For the other direction of interreducibility, for a discount game $\game = (S_1, S_2, s_0, T_\game, R, \beta)$
we define a discounted MDP $\dmdp=(S_1,S_2,\trans,\gain,\beta^2)$
where
\begin{itemize}
 \item $\trans$ is an arbitrary function satisfying that $\trans(s,t)(s') > 0$ iff $(s,t)\in T$ and $(t,s')\in T$;
 \item $\gain(s,t) = R(s)/\beta + R(t)$
\end{itemize}

The rest of the proof proceeds in the same way as above.

\begin{remark}
If $\sqrt{\beta}$ is a rational number, then the pseudopolynomial algorithm for the qualitative problem in solvency MDPs can be immediately obtained from the pseudopolynomial algorithm for discounted games in \cite{ZP:games-graphs}. The algorithm in that paper iterates, for a pseudopolynomial number of steps, a suitable Bellman functional, where each iteration performs a polynomial number of additions, multiplications, divisions and comparisons, which involve the discount (i.e., in our reduction, the number $\sqrt{\beta})$. Since all of these operations can be computed in polynomial time for algebraic numbers in the triplet form \citeadd[Proposition 16]{Beling}, and all the intermediate results lie in the extension generated by $\sqrt{\beta}$ over $\Qset$, the algorithm is pseudopolynomial even for games with symbolically represented discounts. We note that in our game $\game$, the optimal value resulting from this algorithm is rational even if $\sqrt{\beta}$ is irrational, because it corresponds to the 
minimal threshold achievable with probability 1 in a discounted MDP with rational discount $\beta$. Rationality of this minimal threshold can be shown by devising a suitable Bellman functional (we omit this argument, because it is not essential for our paper).
\end{remark}

\section{Proofs for Section~\ref{sec:quantitative}}
\label{app:quantitative}

\subsection{Proof of Proposition~\ref{prop:lgug}}
\label{app:lgug}
Here we present an extended version of
Proposition~\ref{prop:lgug}.
\begin{proposition}\label{prop:lgug-ext}
For every state $s$ of the solvency MDP $\mdp$ there are rational numbers
 $U(\mdp,s)$ and $L(\mdp,s)$, such that
 \begin{align*}
  U(\mdp,s) &\mydef \arg\inf_{x\in \Rset} \forall \sigma \,.\,\mathbb{P}^{\sigma}_{\mdp,(s,x)}(\Win) = 1,\\
 L(\mdp,s) &\mydef \arg\sup_{x\in \Rset} \forall \sigma \,.\,\mathbb{P}^{\sigma}_{\mdp,(s,x)}(\Win) = 0,
 \end{align*}
 of encoding size polynomial in $\size{\mdp}$, and they are solutions of the following linear programs:
 \[
\begin{tabular}{lll}
   max                  & & $ \sum_{s\in\states} L(\mdp,s)$\\
   s.t.                 & &
   $ L(\mdp,s) \le \frac{1}{\interest}
  (L(\mdp,t) - \gain(s,a))$ \\
  & &(for all $s\in \states$, $a\in \actions(s)$, and $t\in\Succ(s,a)$)
\end{tabular}
\]
and
\[
\begin{tabular}{lll}
   min                  & & $ \sum_{s\in\states} U(\mdp,s)$\\
   s.t.                 & &
   $ U(\mdp,s) \ge \frac{1}{\interest}
  (U(\mdp,t) - \gain(s,a))$ \\
  & &(for all $s\in \states$, $a\in \actions(s)$, and $t\in\Succ(s,a)$).
\end{tabular}
\]

Moreover, we have $\mathbb{P}^{\sigma}_{\mdp,(s,U(M,s))}(\Win) = 1$ for every strategy $\sigma$.
\end{proposition}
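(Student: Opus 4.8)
The plan is to prove that the numbers $U(\mdp,s),L(\mdp,s)$ defined by the two linear programs are exactly the infimum of the wealths from which $s$ is surely winning and the supremum of the wealths from which $s$ is surely losing, respectively. The two cases are symmetric, so I describe $U$ in detail. Throughout write $K=\max_{(s,a)}|F(s,a)|/(\rho-1)$, and let $P=\{\vec U\in\Rset^{\states}\mid \rho\,U(s)+F(s,a)\ge U(t)\text{ for all }s,\ a\in\actions(s),\ t\in\Succ(s,a)\}$ be the feasible region of the minimization LP.

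\textbf{Step 1: the LP has a rational, componentwise least optimum.} The constant vector $K\cdot\mathbf 1$ lies in $P$, so $P\neq\emptyset$. If $\vec U\in P$ and $s^{*}$ minimizes $U(s^{*})$, then for any $a\in\actions(s^{*})$, $t\in\Succ(s^{*},a)$ we get $\rho\,U(s^{*})+F(s^{*},a)\ge U(t)\ge U(s^{*})$, hence $U(s^{*})\ge -K$; thus every coordinate of every member of $P$ is at least $-K$ and the objective is bounded below, so the LP has a finite optimum attained at a vertex, which is rational of encoding size polynomial in $\size{\mdp}$ by the standard size bound for vertices of rational polyhedra (hence computable in polynomial time). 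Finally $P$ is closed under componentwise minimum: if, say, $U_{1}(s)\le U_{2}(s)$, then $\rho\min(U_{1}(s),U_{2}(s))+F(s,a)=\rho\,U_{1}(s)+F(s,a)\ge U_{1}(t)\ge\min(U_{1}(t),U_{2}(t))$. Hence $P$ has a componentwise least element $\vec U$, which is therefore the unique optimal solution of the LP, and we set $U(\mdp,s)\mathrel{:=}\vec U(s)$.

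\textbf{Step 2: soundness and the ``moreover''.} I claim that if $x\ge U(\mdp,s)$ then, under \emph{every} strategy, each configuration $(t,y)$ reachable from $(s,x)$ satisfies $y\ge U(\mdp,t)$; this follows by a one-step induction from the LP constraint, since from $(t,y)$ with $y\ge U(\mdp,t)$ an action $a$ and successor $t'$ yield wealth $\rho y+F(t,a)\ge\rho\,U(\mdp,t)+F(t,a)\ge U(\mdp,t')$. Since all coordinates of $\vec U$ are $\ge -K$ (Step 1), every run from $(s,x)$ has $\liminf_{n}y_{n}\ge -K>-\infty$ and hence lies in $\Win$, so $\mathbb P^{\sigma}_{\mdp,(s,x)}(\Win)=1$ for every $\sigma$. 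Taking $x=U(\mdp,s)$ gives the ``moreover'' statement; allowing all $x\ge U(\mdp,s)$ shows that the set $G_{s}=\{x\mid\forall\sigma.\ \mathbb P^{\sigma}_{\mdp,(s,x)}(\Win)=1\}$ contains $[U(\mdp,s),\infty)$, so its infimum $\hat U(s)$ satisfies $\hat U(s)\le U(\mdp,s)$. Moreover $\hat U(s)$ is finite: for $x<-K$ one computes that the wealth strictly decreases with its gap below $-K$ multiplied by at least $\rho$ each step, so $y_{n}\to-\infty$ on every run and $x\notin G_{s}$. (The analogous ``gap grows by a factor $\rho$'' computation, with $-K$ replaced by $L(\mdp,s)$, handles the soundness half of the $L$-claim.)

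\textbf{Step 3: completeness, and the main obstacle.} It remains to show $\hat U(s)\ge U(\mdp,s)$; since $\vec U$ is the least element of $P$, it suffices to prove that the vector $\hat U$ itself lies in $P$. First I record a monotonicity lemma: $G_{s}$ is upward closed, because any strategy may be regarded as a wealth-independent function of the sequence of states and actions, and then for a fixed such run skeleton the wealth started from $x'$ dominates the one started from $x\le x'$ by $\rho^{n}(x'-x)\ge 0$, so $\mathbb P^{\sigma}_{\mdp,(s,x')}(\Win)\ge\mathbb P^{\sigma}_{\mdp,(s,x)}(\Win)$; thus $G_{s}$ equals $[\hat U(s),\infty)$ or $(\hat U(s),\infty)$. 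Now suppose some constraint fails, i.e.\ $\rho\,\hat U(s)+F(s,a)<\hat U(t)$ for some $s$, $a\in\actions(s)$, $t\in\Succ(s,a)$, and pick $\eps>0$ with $\rho(\hat U(s)+\eps)+F(s,a)<\hat U(t)$. Then $\hat U(s)+\eps\in G_{s}$, yet the strategy that plays $a$ from $(s,\hat U(s)+\eps)$ reaches, with probability $T(s,a)(t)>0$, the configuration $(t,y)$ with $y=\rho(\hat U(s)+\eps)+F(s,a)<\hat U(t)$, which by upward closedness is not in $G_{t}$ and hence admits some $\sigma_{t}$ with $\mathbb P^{\sigma_{t}}_{\mdp,(t,y)}(\Win)<1$; concatenating (``play $a$, then $\sigma_{t}$'') gives a strategy with $\mathbb P_{\mdp,(s,\hat U(s)+\eps)}(\Win)<1$, contradicting $\hat U(s)+\eps\in G_{s}$. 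Hence $\hat U\in P$, so $\hat U=\vec U$ by Step 2, the infimum defining $U(\mdp,s)$ is attained, and $U(\mdp,s)=\arginf_{x}[\,\forall\sigma.\ \mathbb P^{\sigma}_{\mdp,(s,x)}(\Win)=1\,]$. The reasoning for $L$ is symmetric, using closure of the maximization LP's feasible region under componentwise maximum and downward closedness of the surely-losing sets; the only asymmetry is that the supremum defining $L(\mdp,s)$ need not be attained (which is why the ``moreover'' is stated for $U$ only). The genuinely delicate point is precisely this Step 3 — showing that the threshold vector $\hat U$ satisfies the LP constraints — which relies both on the monotonicity lemma and on the fact that any wealth strictly below the threshold at a state admits a strategy that is not almost-surely winning, so that a violated constraint can be refuted by steering into the offending successor; the remaining ingredients (LP feasibility and boundedness, polynomial vertex size, the soundness induction) are routine.
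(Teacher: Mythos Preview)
Your proof is correct and complete. The soundness direction (Step~2) matches the paper's argument. The completeness direction, however, is organized differently: the paper argues directly that from any wealth $x_0<U(\mdp,s)$ (where $U(\mdp,s)$ is the LP optimum) one can build a strategy that is not almost-surely winning, by always selecting the action/successor pair on which the LP constraint is \emph{tight}, so that the deficit $U(\mdp,\cdot)-x_n$ is multiplied by $\rho$ at every step and the wealth eventually drops below $L(\mdp)$ with positive probability. You instead first establish that the feasible region is a meet-semilattice with a componentwise least element, and then prove the ``intrinsic'' threshold vector $\hat U$ is itself feasible via a one-step contradiction (steer into the violating successor). Your route avoids relying on the existence of a tight constraint at each state, at the cost of the extra lattice argument in Step~1 (whose passage from closure under pairwise minima to a genuine least element is terse but works because $|\states|$ is finite and $P$ is closed and bounded below). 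The paper's route is more constructive and yields the losing strategy explicitly; yours is more structural and makes the LP characterization transparent. Both are sound.
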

\begin{proof}

First we show that these are actually  real numbers and not $\pm\infty$.
Let $g_{\max}=\max_{(s,a) \in \states\times \actions} \gain(s,a)$
be the maximal gain that occurs in the solvency MDP, and
fix arbitrary $x$ such that $g_{\max} + \interest\cdot x < 0$,
denoting $\tau := g_{\max} + \interest\cdot x$.
Then {\em any} run starting in $(s,x)$ is of the form
$(s_0,x_0)\cdot a_1\cdot (s_1, x_1)\cdot a_2\cdots$
where $x_i \le x + i\cdot \tau$. Hence we get that $L(\mdp,s) > x$.
For $U(\mdp,s)$ we take the minimal gain $g_{\min}=\min_{(s,a) \in \states\times \actions}
\gain(s,a)$ and proceed similarly.

We proceed by proving that the above values satisfy the
optimality conditions.
We first present the proof for value $U(\mdp,s)$.

Assume that the initial wealth is $\xinit \ge U(\mdp,s)$.
From LP it follows that for any action $a$
for all $t\in \Succ(s,a)$ we have
$\interest \cdot \xinit + \gain(s,a) \ge U(\mdp,t)$.
So no matter how the strategy picks actions, in the following states
the accumulated
wealth never falls below $\min_{s\in \states} U(\mdp,s)$, and because
we know that $U(\mdp,s)\in \Rset$, this ensures that any strategy
wins almost surely.

For the other direction assume that $\xinit < U(\mdp,s)$ and let
$\delta = U(\mdp,s) - \xinit$. We construct a strategy $\sigma$
which loses with positive probability. Let $\sigma$ pick an action
$a = \arg\max_{a\in\actions(s)}\max_{t\in\Succ(s,a)}
 \frac{1}{\interest}
  (U(\mdp,t) - \gain(s,a))$, and let
  $t = \arg\max_{t\in\Succ(s,a)}
 \frac{1}{\interest}
  (U(\mdp,t) - \gain(s,a))$, i.e., such that
  action $a$ and state $t$ is a bounding constraint in the LP.
It follows that
$U(\mdp,t) - (\interest \cdot \xinit + \gain(s,a)) = \interest \cdot \delta$.
  The strategy continues by picking actions the same way as in $s$
  and ensures that after $k$ steps, there exists a run ending in some state $t$
  which has a nonzero measure and the difference between $U(\mdp,t)$ and
  wealth  is equal to $\interest^k\cdot\delta$, and so
  for any value $X>-\infty$ we can find a $k$ such that the wealth $<X$ will
  be accumulated on some finite path having nonzero probability. This implies
  that wealth $< \min_{s\in\states}L(\mdp,s)$ will be eventually reached and thus the strategy will
  lose with positive probability.

Now we prove that the values $L(\mdp,s)$  satisfy the 
optimality conditions.
First, assume that the initial wealth $\xinit$ satisfies $\xinit < L(\mdp,\sinit)$
and let 
$\delta = L(\mdp,\sinit) - \xinit$. From the linear program
we know that for all actions $a$ and successors $t\in\Succ(\sinit,a)$ 
we have that 
$L(\mdp,t) - (\interest \cdot \xinit + \gain(\sinit,a)) 
\ge \interest \cdot \delta$. 
Hence, no matter what is the choice of the strategy, in the next 
step the difference between wealth and $L(\mdp,t)$ will be at least
$\interest \cdot \delta$ for any successor $t$. 
We can show by induction that after $k$ steps 
the difference between the wealth and $L(\mdp,t)$ is
at least $\interest^k\cdot \delta$; and because $\interest > 1$
we have that as $k\rightarrow \infty$ we have wealth
going to $-\infty$.

Now let the initial wealth be $\xinit > L(\mdp,\sinit)$ and let $\delta=\xinit-L(\mdp,\sinit)$.
We construct a strategy, which is winning with positive probability.
Consider the strategy $\sigma$ which picks an action
$a = \arg\min_{a\in\actions(\sinit)}\min_{t\in\Succ(\sinit,a)}
 \frac{1}{\interest}
  (L(\mdp,t) - \gain(\sinit,a))$, and let
  $t = \arg\min_{t\in\Succ(\sinit,a)}
 \frac{1}{\interest}
  (L(\mdp,t) - \gain(\sinit,a))$, i.e., such that
  action $a$ and state $t$ is a bounding constraint in the LP.
  Hence, it follows that
  $(\interest\cdot \xinit + \gain(\sinit,a)) - L(\mdp,t) = 
  \interest \cdot \delta$.
  The strategy continues by picking actions the same way as in $\sinit$
  and ensures that after $k$ steps, there exists a run ending in some state $t$ 
  which has a nonzero measure and the difference between wealth
  and the $L(\mdp,t)$ is equal to $\interest^k\cdot\delta$, and so
  for any value $X<\infty$ we can find a $k$ such that the wealth $>X$ will
  be accumulated on some finite path having nonzero probability. This implies
  that wealth $> \max_{s\in\states}U(\mdp,s)$ will be eventually reached and thus the strategy will
  win with positive probability.
  
  The bound on the encoding size of the numbers follows from the standard results on linear programming.
   \end{proof}
   
   \subsection{Supplement to Example~\ref{ex:infstate}}
\label{app:infstate}
  We distinguish two cases. Firstly, if $3k+2^{n}< 2^{n+1}$, we put $x_n=3k+2^{n}$.
  Secondly, if $3k+2^{n}\geq 2^{n+1}$, we have $3k-2^n = 3k + 2^n - 2^{n+1} \geq 0$.
  We argue that $3k - 2^n < 2^{n+1}$, which allows us to put $x_{n+1}=3k - 2^n$.
  Suppose the opposite, i.e. $3k - 2^n \geq 2^{n+1}$. This gives us
  $3k \geq 2^{n+1} + 2^n = 3\cdot 2^n$ and thus $k \geq 2^n$, again a contradiction.%

\subsection{Proof of Proposition~\ref{prop:finite-memory-reach}}
\label{app:bellman-gen}

Let us recall the fact that thanks to the Proposition~\ref{prop:lgug} we are now restricted to deterministic strategies.

\begin{refproposition}{prop:finite-memory-reach}
For every initial configuration $(s,x)$ and every $\eps>0$ there is a strategy $\sigma_{\eps}$ such that starting in $(s,x+\eps/2)$, $\sigma_{\eps}$ ensures hitting of a rentier configuration in at most $n=\big\lceil \frac{\log{(U(M)-L(M))}+\log{\eps^{-1}}+2}{\log \rho} \big\rceil$ steps with probability at least $\Val[M]{s,x}$. In particular, $\Prb^{\sigma_{\eps}}_{(s,x+\eps/2)}(\Win)\geq \Val[M]{s,x}$.
\end{refproposition}

In the proof of Proposition \ref{prop:finite-memory-reach} we proceed by a series of lemmas. First we show that the vector $\WRvec = (\WR{s,p})_{ p\in [0,1]}^{ s\in S} \in \Rset^{S\times[0,1]}$ is a unique fixed point of a suitable Bellman operator $\calL$. %

Let $s$ be any state of $M$. For an action $a\in A(s)$ and number $p \in [0,1]$ we denote by $B(s,a,p)$ the set of all vectors $\vec{q}\in [0,1]^{\Succ(s,a)}$ that satisfy $\sum_{s'\in \Succ(s,a)} \vec{q}{(s')}\cdot(T(s,a)(s'))\geq p$. The intuition behind the $B(s,a,p)$ vectors is that if a strategy $\sigma$ is $p$-winning in $(s,x)$ and it picks an action
$a$, then there must be a vector $\vec{q}\in B(s,a,p)$ such that for all $s'\in\Succ(s,a)$, the probability of winning from the successor of $(s,x)$ that is of the form $(s',x')$ for some $x'$ must be at least $\vec{q}(s')$.
Consider now the Bellman operator $\calL$ defined on the uncountably-dimensional space $\Rset^{S\times [0,1]}$ as follows:
\[
 \calL(\vec{V}){(s,p)} \quad = \quad \min_{a \in A(s)} \adjustlimits \inf_{\vec{q}\in B(s,a,p)} \max_{s' \in Succ(s,a)} \frac{1}{\rho}\cdot (\vec{V}{(s',\vec{q}{(s')})}-F(s,a)),
\]
for all vectors $\vec{V}\in\Rset^{S\times[0,1]}$ and all $(s,p)\in S\times [0,1]$.

\begin{lemma}
\label{lem:bellman-fixed-point}
 The vector $\WRvec$ is a fixed point of the operator $\calL$. 
\end{lemma}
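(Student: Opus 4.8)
The plan is to show the two inequalities $\calL(\WRvec) \le \WRvec$ and $\calL(\WRvec) \ge \WRvec$ separately, each by a direct strategy-based argument exploiting the recursive structure of the winning objective after one step. The underlying intuition is that winning with probability $p$ from $(s,x)$ means: choose an action $a$, then distribute the ``winning budget'' $p$ among the successors as a vector $\vec q \in B(s,a,p)$, and from each successor $(s',\rho x + F(s,a))$ win with probability at least $\vec q(s')$; optimizing over $a$ and $\vec q$ and translating ``wealth needed at the successor'' back to ``wealth needed now'' via the affine map $x \mapsto \rho x + F(s,a)$ (whose inverse is $y \mapsto \tfrac1\rho(y - F(s,a))$) yields exactly the operator $\calL$.

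For the inequality $\WR{s,p} \ge \calL(\WRvec)(s,p)$, I would fix $s$, $p$, and $x > \WR{s,p}$, so there is a deterministic strategy $\sigma$ that is $p$-winning from $(s,x)$. Let $a = \sigma(s)$ be its first action, and for each $s' \in \Succ(s,a)$ let $\sigma_{s'}$ be the residual strategy after the history $s\,a\,s'$; it is $q_{s'}$-winning from $(s', \rho x + F(s,a))$ where $q_{s'} \mydef \Prb^{\sigma}_{(s,x)}(\Win \mid \text{first step goes to } s')$. By the law of total probability $\sum_{s'} q_{s'} \cdot T(s,a)(s') = \Prb^{\sigma}_{(s,x)}(\Win) \ge p$, so $\vec q = (q_{s'})_{s'} \in B(s,a,p)$. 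Since $\sigma_{s'}$ witnesses $q_{s'}$-winning from wealth $\rho x + F(s,a)$, we get $\rho x + F(s,a) \ge \WR{s', q_{s'}}$, i.e. $x \ge \tfrac1\rho(\WR{s', \vec q(s')} - F(s,a))$ for every $s'$, hence $x \ge \max_{s'} \tfrac1\rho(\WR{s',\vec q(s')} - F(s,a)) \ge \adjustlimits\inf_{\vec q \in B(s,a,p)}\max_{s'} \tfrac1\rho(\WRvec(s',\vec q(s')) - F(s,a)) \ge \calL(\WRvec)(s,p)$. Letting $x \downarrow \WR{s,p}$ finishes this direction.

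For the reverse inequality $\WR{s,p} \le \calL(\WRvec)(s,p)$, I would fix $\eps > 0$, pick an action $a$ and a vector $\vec q \in B(s,a,p)$ that come within $\eps$ of realising the $\min$–$\inf$ defining $\calL(\WRvec)(s,p)$, and set $x \mydef \calL(\WRvec)(s,p) + \eps$. Then for each $s' \in \Succ(s,a)$ we have $\rho x + F(s,a) \ge \WR{s', \vec q(s')} + \rho\eps > \WR{s',\vec q(s')}$, so there is a strategy $\sigma_{s'}$ that is $\vec q(s')$-winning from $(s', \rho x + F(s,a))$. Assemble these into a strategy $\sigma$ for $(s,x)$ that first plays $a$ and then, upon reaching successor $s'$, follows $\sigma_{s'}$; by total probability $\Prb^{\sigma}_{(s,x)}(\Win) = \sum_{s'} \vec q(s')\, T(s,a)(s') \ge p$, hence $\WR{s,p} \le x = \calL(\WRvec)(s,p) + \eps$. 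Letting $\eps \downarrow 0$ completes the argument.

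The main technical care needed — and the step I expect to be the principal obstacle — lies in handling the edge cases of the objective precisely: when $x \ge U(M,s)$ or $x \le L(M,s)$ the values $\WR{s,p}$ may be ``saturated'' ($\WR{s,0}$ could be $-\infty$ or at least $L(M)$, $\WR{s,1}$ bounded by $U(M)$), so I must check that the affine translations and the $B(s,a,p)$-optimisation still behave monotonically at these boundaries and that the suprema/infima in the definitions of $\WRvec$ and of $\calL$ are attained or approached consistently (in particular that $B(s,a,p)$ is non-empty, compact, and that $\WR{s',\cdot}$ is non-decreasing, which is needed to pass from ``$\ge$ some $\vec q(s')$'' to the bound on $x$). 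Monotonicity of $p \mapsto \WR{s,p}$ and of $\vec V \mapsto \calL(\vec V)$, together with Proposition~\ref{prop:lgug} to keep all wealths in the relevant bounded regime, are the facts I would invoke to close these gaps cleanly.
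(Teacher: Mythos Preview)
Your proposal is correct and follows the same one-step strategy decomposition/composition idea as the paper: for one inequality you assemble a $p$-winning strategy from $(s,\calL(\WRvec)(s,p)+\eps)$ by choosing a near-optimal $a$ and $\vec q\in B(s,a,p)$ and patching in successor strategies; for the other you unwrap a $p$-winning strategy from $(s,x)$ with $x>\WR{s,p}$ into its first action and conditional winning probabilities.

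One point worth flagging in your favour: your argument for $\WR{s,p}\geq \calL(\WRvec)(s,p)$ --- reading off $a=\sigma(s)$ and $\vec q(s')=\Prb^{\sigma}_{(s,x)}(\Win\mid\text{first successor }s')$, then using $\rho x+F(s,a)\geq \WR{s',\vec q(s')}$ --- is precisely the argument that this direction requires, and it is \emph{not} what the paper's written proof does. Both paragraphs of the paper's proof construct a strategy starting from the $\calL$-side and hence both establish only $\WR{s,p}\leq\calL(\WRvec)(s,p)$; the ``it suffices to show'' claim in the second paragraph actually yields the inequality already proved in the first. Your decomposition step is exactly what closes this gap.

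The edge cases you worry about are harmless and need no special treatment: $B(s,a,p)$ is always nonempty (e.g.\ $\vec q\equiv 1$), $p\mapsto\WR{s,p}$ is nondecreasing straight from the definition, and Proposition~\ref{prop:lgug} keeps all relevant wealths bounded, so the $\inf$/$\max$ manipulations go through without further case analysis.
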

\begin{proof}%
 Assume, for the sake of contradiction, that there are $s\in S$, $p\in[0,1]$ such that $\calL(\WRvec){(s,p)}< \WR{s,p}$. Pick an arbitrary $\delta>0$ such that $\calL(\WRvec)(s,p)+\delta<\WR{s,p}$, and denote by $x$ the left-hand side of this inequality. From the definition of $\calL$ it follows, that there are $a^* \in A(s)$ and $\vec{q}^* \in B(s,a^*,p)$ such that for all $s' \in \Succ(s,a^*)$ we have $\frac{1}{\rho}\cdot (\WR{s',\vec{q}^*{(s')}}-F(s,a^*)) \leq \calL(\WRvec){(s,p)} + \delta = x$, or in other words,
\begin{equation}\label{eq:one-step-optimal}
\rho \cdot x + F(s,a^*)> \WR{s',\vec{q}^*{(s')}}.
\end{equation}
Now, starting in $(s,x)$ the strategy can choose the action $a^*$ in the first step. If in the second step the current vertex is $s'$ (where $s'\in\Succ(s,a^*)$), we switch to a strategy that ensures winning from $(s',\rho\cdot x + F(s,a^*))$ with probability at least $\vec{q}^*{(s')}$ (such a strategy must exist, due to \eqref{eq:one-step-optimal}). Using this 
approach, the probability of winning from $(s,x)$ is at least $\sum_{
s'\in \Succ(s,a^*)} \vec{q}{(s')}\cdot(T(s,a^*)(s'))\geq p$, where the last inequality holds because $\vec{q}^*\in B(s,a^*,p)$. Since $x < \WR{s,p}$, we get a contradiction with the definition of $\WRvec$.

It remains to show that $\calL(\WRvec){(s,p)}\leq \WR{s,p}$, for an arbitrary fixed $(s,p)\in S\times [0,1]$. It suffices to show that for every $\eps>0$ there is a strategy $\sigma$ such that $\Prb^{\sigma}_{(s,\calL(\WRvec){(s,p)}+\eps)}\geq p$.
Similarly to the previous paragraph, there must be $a^*\in A(s)$ and $\vec{q}^*\in B(s,a^*,p)$ such that $\rho \cdot(\calL(\WRvec){(s,p)} +\eps)  + F(s,a^*)\geq \WR{s',\vec{q}^*{(s')}}$, for all $s'\in \Succ(s,a^*)$.
So if the strategy $\sigma$ chooses $a^*$ in the first step, then in the second step the play will be in some configuration $(s',\rho \cdot(\calL(\WRvec){(s,p)}+\eps)  + F(s,a^*))$, from which a strategy winning with probability at least $\vec{q}^*{(s')}$ exists, and $\sigma$ will behave as this strategy from the second step onwards. Since $\vec{q}^*\in B(s,a^*,p)$, it follows that indeed $\Prb^{\sigma}_{(s,\calL(\WRvec){(s,p)}+\eps)} (\Win)\geq p$.   %
\end{proof}
We denote by $\LU$ the set of all vectors $\vec{V}\in \Rset^{S\times [0,1]}$ that satisfy $L(M,s) \leq \vec{V}{(s,p)} \leq U(M,s)$, for all $(s,p)\in S\times[0,1]$. We also denote $\norm{\vec{V}}=\sup_{(s,p)\in \states\times[0,1]}|\vec{V}(i)|$.

\begin{lemma}
\label{lem:contraction}
 If $\vec{V}\in \LU$, then also $\calL(\vec{V})\in \LU$.
 Moreover, for every pair of vectors $\vec{V},\vec{V}'$ we have $\norm{\calL(\vec{V})-\calL(\vec{V}')} \leq \frac{1}{\rho}\norm{\vec{V}-\vec{V}'}$.
\end{lemma}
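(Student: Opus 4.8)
The plan is to prove the two assertions separately, both by routine monotonicity manipulations of the expression $\min_{a}\inf_{\vec{q}}\max_{s'}$ defining $\calL$, feeding in the linear-programming characterisations of $L(M,s)$ and $U(M,s)$ from Proposition~\ref{prop:lgug-ext}. One preliminary remark is used throughout: for every $p\in[0,1]$ the set $B(s,a,p)$ is non-empty, since the all-ones vector $\vec{1}\in[0,1]^{\Succ(s,a)}$ satisfies $\sum_{s'}\vec{1}(s')\cdot T(s,a)(s')=1\geq p$; hence every infimum occurring in $\calL$ is over a non-empty set, and the finite $\max$ over $\Succ(s,a)$ and $\min$ over $A(s)$ cause no trouble.

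For $\calL(\LU)\subseteq\LU$ I would fix $\vec{V}\in\LU$ and $(s,p)$ and establish the two bounds. For the lower bound: for every $a\in A(s)$, $\vec{q}\in B(s,a,p)$ and $s'\in\Succ(s,a)$, combine $\vec{V}(s',\vec{q}(s'))\geq L(M,s')$ with the LP inequality $L(M,s)\leq\frac1\rho\bigl(L(M,s')-F(s,a)\bigr)$ to get $\frac1\rho\bigl(\vec{V}(s',\vec{q}(s'))-F(s,a)\bigr)\geq L(M,s)$; since this holds uniformly in $s'$, $\vec{q}$, $a$, applying in turn $\max_{s'}$, $\inf_{\vec{q}}$, $\min_{a}$ (each of which preserves a uniform lower bound) yields $\calL(\vec{V})(s,p)\geq L(M,s)$. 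For the upper bound it is enough to exhibit one witness keeping the value below $U(M,s)$: take an arbitrary $a\in A(s)$ and $\vec{q}=\vec{1}\in B(s,a,p)$; then the LP inequality $U(M,s)\geq\frac1\rho\bigl(U(M,s')-F(s,a)\bigr)$ together with $\vec{V}(s',1)\leq U(M,s')$ gives $\frac1\rho\bigl(\vec{V}(s',1)-F(s,a)\bigr)\leq U(M,s)$ for every $s'\in\Succ(s,a)$, so the $\max$ over $s'$ is $\leq U(M,s)$, hence so is the $\inf$ over $\vec{q}$, hence so is the $\min$ over $a$, i.e. $\calL(\vec{V})(s,p)\leq U(M,s)$.

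For the contraction estimate, write $\Delta:=\norm{\vec{V}-\vec{V}'}$; if $\Delta=\infty$ there is nothing to prove, so assume $\Delta<\infty$ and fix $(s,p)$. The single inequality that drives everything is that, for all $a$, $\vec{q}$ and $s'$,
\[
\frac1\rho\bigl(\vec{V}(s',\vec{q}(s'))-F(s,a)\bigr)\ \leq\ \frac1\rho\bigl(\vec{V}'(s',\vec{q}(s'))-F(s,a)\bigr)+\frac{\Delta}{\rho},
\]
which follows from $\vec{V}(s',q)\leq\vec{V}'(s',q)+\Delta$. Because the additive constant $\Delta/\rho$ survives unchanged through $\max_{s'}$, $\inf_{\vec{q}}$ and $\min_{a}$, this gives $\calL(\vec{V})(s,p)\leq\calL(\vec{V}')(s,p)+\Delta/\rho$; swapping the roles of $\vec{V}$ and $\vec{V}'$ gives the reverse bound, hence $|\calL(\vec{V})(s,p)-\calL(\vec{V}')(s,p)|\leq\Delta/\rho$, and taking the supremum over $(s,p)$ finishes the proof.

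There is no genuine obstacle here; the only points deserving attention are the non-emptiness of $B(s,a,p)$ (so the infima are legitimate), getting the directions of the two LP inequalities right, and the elementary observation that $\min$, $\max$ and $\inf$ all commute with adding a constant and preserve uniform bounds — which is precisely what transfers the leading factor $\frac1\rho$ in the definition of $\calL$ into the contraction constant and makes the $\LU$-invariance fall out of the LP feasibility conditions.
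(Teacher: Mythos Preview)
Your proof is correct and, on both parts, cleaner than the paper's own argument.

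For the invariance $\calL(\LU)\subseteq\LU$, the paper argues by contradiction: assuming $\calL(\vec{V})(s,p)>U(M,s)$, it invokes the semantic meaning of $U(M,\cdot)$ (every strategy wins from such a configuration, so after one step the wealth must still exceed $U(M,s')$) and then extracts a witness $(s',p')$ from the definition of $\calL$ to force $\vec{V}(s',p')>U(M,s')$. You instead feed the LP constraints from Proposition~\ref{prop:lgug-ext} directly into the expression for $\calL$, which is shorter and avoids the somewhat implicit step of picking the maximising $s'$ after the fact. Your observation that $\vec{q}=\vec{1}\in B(s,a,p)$ always, and that this single witness suffices for the upper bound, is exactly what the paper's argument uses underneath but does not state explicitly.

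For the contraction, the paper fixes $\eps>0$, picks $\eps/2$-near-optimal actions and vectors $(a,\vec{q})$ for $\vec{V}$ and $(b,\vec{r})$ for $\vec{V}'$, and compares the resulting maxima; letting $\eps\to 0$ closes the argument. You instead use the one-line fact that $\min$, $\inf$ and $\max$ all commute with adding a constant, so the pointwise bound $\vec{V}\leq\vec{V}'+\Delta$ propagates through to $\calL(\vec{V})\leq\calL(\vec{V}')+\Delta/\rho$. This is the standard slick proof; it dispenses with the $\eps$-bookkeeping entirely and makes transparent why the factor $1/\rho$ appears. The paper's version gains nothing over yours except perhaps making the near-optimal witnesses explicit, at the cost of extra notation.
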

\begin{proof}
Let $\vec{V}\in\LU$, $s\in S$ and $p\in[0,1]$ be arbitrary. Assume, for the sake of contradiction, that $\calL(\vec{V}){(s,p)} > U(M,s)$. By definition, any strategy wins from $(s,\calL(\vec{V}){(s,p)})$ with probability $1$. Thus, for every $a\in A(s)$ end every $s'\in \Succ(s,a)$ we have $\rho\calL(\vec{V}){(s,p)} + F(s,a) \geq U(M,s')$. But at the same time, by definition of $\calL$ we have $\calL(\vec{V}){(s,p)} \leq \rho^{-1}(\vec{V}{(s',p')}-F(s,a))$ for suitable $p' \in [0,1]$. Combining these two inequalities we get $\vec{V}{(s',p')}\geq U(M,s')$, a contradiction with $\vec{V}\in \LU$. The inequality $\calL(\vec{V}){(s,p)} \geq L(M,s)$ can be established in a similar way.

For the second part, fix arbitrary vectors $\vec{V}$, $\vec{V}'\in \Rset^{S\times [0,1]}$ and some $(s,p)\in S\times [0,1]$. We have to show that $|\vec{V}(s,p)-\vec{V}'{(s,p)}|\leq \rho^{-1}\norm{\vec{V}-\vec{V}'}$. Let us choose an arbitrary $\eps>0$. From the definition of $\calL$ it follows that there are $a,b \in A(s)$, $\vec{q} \in B(s,a,p)$ and $\vec{r} \in B(s,b,p)$ such that
\begin{align}
\label{eq:contraction1}
 y_1 &\mydef \max_{s' \in \Succ(s,a)} \frac{1}{\rho}\big(\vec{V}{(s',\vec{q}{(s')})} - F(s,a)\big) \leq \vec{V}{(s,p)} + \frac{\eps}{2}\\
\label{eq:contraction2}
 y_2 &\mydef \max_{s' \in \Succ(s,b)} \frac{1}{\rho}\big(\vec{V}'{(s',\vec{r}{(s')})} - F(s,b)\big) \leq \vec{V}'{(s,p)} + \frac{\eps}{2}.
\end{align}
Assume that $y_1 \geq y_2$, the other case can be handled in a symmetric way. We have
\begin{align}
 0 &\leq y_1 - y_2 \leq \max_{s' \in \Succ(s,b)} \frac{1}{\rho}\big(\vec{V}{(s',\vec{r}{(s')})} - F(s,b)\big) + \frac{\eps}{2} - y_2 \notag\\
 &= \frac{1}{\rho} \max_{s'\in \Succ(s,b)}(\vec{V}{(s',\vec{r}{(s')})} - \vec{V}'{(s',\vec{r}{(s')})}) + \frac{\eps}{2} \leq \frac{1}{\rho}\norm{\vec{V}-\vec{V}'} + \frac{\eps}{2},\label{eq:contraction3}
\end{align}
where the second inequality follows from the facts that $\vec{V}{(s,p)} \leq y_1 \leq \vec{V}{(s,p)} + \eps/2$ and $\vec{V}{(s,p)}\leq \max_{s' \in \Succ(s,b)} \frac{1}{\rho}\big(\vec{V}{(s',\vec{r}{(s')})} - F(s,b)\big)$. Putting \eqref{eq:contraction1}, \eqref{eq:contraction2} and \eqref{eq:contraction3} together, and using the fact that $y_2 \geq \vec{V}'{(s,p)}$, we obtain
\begin{align*}
|\vec{V}{(s,p)}-\vec{V}'{(s,p)}| \leq |y_1 - y_2| + \frac{\eps}{2} \leq \frac{1}{\rho}\norm{\vec{V}-\vec{V}'} + \eps.
\end{align*}
Since $\eps>0$ was chosen arbitrarily, the result follows.
\end{proof}

The previous lemma shows, that the operator $\calL$ is a contraction mapping on $\LU$. Now the set $\LU$ equipped with the supremum norm $\norm{\cdot}$ is a Banach space. By the Banach fixed-point theorem, $\calL$ has a unique fixed point in $\LU$, which must be equal to $\WRvec$ by Lemma \ref{lem:bellman-fixed-point}. Moreover, for every vector $\vec{V}\in \LU$ the sequence $\calL^{n}(\vec{V})$ converges to this fixed point as $n$ approaches infinity.

Consider now a vector $\vec{V}^0 \in \LU$ such that $\vec{V}^0{(s,p)} = U(M,s)$ for every $(s,p)$.

\begin{lemma}
\label{lem:n-step-values}
Consider any $(s,p)\in S\times[0,1]$ and any $y>\calL^{n}(\vec{V}^0){(s,p)}$. Then there is a strategy $\sigma$ such that starting in $(s,y)$, the strategy $\sigma$ ensures hitting a rentier configuration in at most $n$ steps with probability at least $p$.
\end{lemma}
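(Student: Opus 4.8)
The plan is to prove the lemma by induction on $n$, mirroring the one-step argument already used in the proof of Lemma~\ref{lem:bellman-fixed-point}. First I would dispose of the base case $n=0$: there $\calL^0(\vec{V}^0)(s,p)=\vec{V}^0(s,p)=U(M,s)$, so $y>U(M,s)$ means $(s,y)$ is already a rentier configuration, which is ``hit'' in $0$ steps with probability $1\ge p$ under any strategy whatsoever.

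For the inductive step I would assume the claim for $n-1$, fix $(s,p)$ and $y>\calL^{n}(\vec{V}^0)(s,p)=\calL(\vec{V})(s,p)$ with $\vec{V}:=\calL^{n-1}(\vec{V}^0)$, and then read off an action and a probability vector from the definition of $\calL$. Since $A(s)$ is finite, the outer minimum is attained at some $a^*\in A(s)$, so $\inf_{\vec{q}\in B(s,a^*,p)}\max_{s'\in\Succ(s,a^*)}\tfrac{1}{\rho}\big(\vec{V}(s',\vec{q}(s'))-F(s,a^*)\big)=\calL^{n}(\vec{V}^0)(s,p)<y$. Because this infimum need not be attained but $y$ is \emph{strictly} above it, I would pick a witness $\vec{q}^*\in B(s,a^*,p)$ with $\max_{s'\in\Succ(s,a^*)}\tfrac{1}{\rho}\big(\vec{V}(s',\vec{q}^*(s'))-F(s,a^*)\big)<y$. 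Noting that from $(s,y)$ the action $a^*$ always reaches wealth $y':=\rho y+F(s,a^*)$ (only the state component $s'\in\Succ(s,a^*)$ is random), this inequality rearranges to $\vec{V}(s',\vec{q}^*(s'))<y'$ for every $s'\in\Succ(s,a^*)$, which is exactly the form needed to invoke the induction hypothesis at level $n-1$.

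Then I would take, for each $s'\in\Succ(s,a^*)$, a strategy $\sigma_{s'}$ supplied by the induction hypothesis (with parameters $n-1$, $s'$, $\vec{q}^*(s')$, $y'$) that from $(s',y')$ hits a rentier configuration within $n-1$ steps with probability at least $\vec{q}^*(s')$, and define $\sigma$ to play $a^*$ first and then continue as $\sigma_{s'}$ once $(s',y')$ is reached (well defined, since the current state is determined by the history). Starting from $(s,y)$, this $\sigma$ reaches a rentier configuration within $n$ steps with probability at least $\sum_{s'\in\Succ(s,a^*)}T(s,a^*)(s')\cdot\vec{q}^*(s')\ge p$, the last step being the defining property of $B(s,a^*,p)$; this closes the induction.

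I expect the only genuinely delicate point to be that the infimum over $B(s,a^*,p)$ in the definition of $\calL$ need not be achieved, since the maps $\vec{V}(s',\cdot)$ are not known to be continuous, so there may be no ``optimal'' $\vec{q}^*$. This is precisely why the statement uses the strict inequality $y>\calL^{n}(\vec{V}^0)(s,p)$: it furnishes a witness $\vec{q}^*$ strictly below $y$ and lets the strictness, and hence the applicability of the induction hypothesis, propagate to every successor. Everything else is routine bookkeeping with the dynamics $(s,y)\mapsto(s',\rho y+F(s,a))$ and the definition of $B(s,a,p)$.
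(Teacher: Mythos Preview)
Your proposal is correct and follows essentially the same inductive argument as the paper's proof: both proceed by induction on $n$, choose in the inductive step an action $a^*$ and a vector $\vec{q}^*\in B(s,a^*,p)$ witnessing the strict inequality $y>\calL^{n}(\vec{V}^0)(s,p)$, rearrange to get $\rho y+F(s,a^*)>\calL^{n-1}(\vec{V}^0)(s',\vec{q}^*(s'))$ for all successors $s'$, and then apply the induction hypothesis at each successor and combine via the defining inequality of $B(s,a^*,p)$. If anything, your write-up is more careful than the paper's in explicitly noting that the outer minimum over $A(s)$ is attained while the inner infimum over $B(s,a^*,p)$ need not be, and that the strict inequality on $y$ is precisely what supplies a usable witness $\vec{q}^*$.
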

\begin{proof}
We proceed by induction on $n$. The case $n=0$ is trivial. So assume that and that the lemma holds for some $n\in \Nset_0$. Consider any $(s,p)\in S\times[0,1]$. Then there must be an action $a\in A(s)$ and vector $\vec{r} \in B(s,a,p)$ such that 
\begin{equation}
\label{eq:nstep1}
y > \max_{s'\in \Succ(s,a)} \frac{1}{\rho} \big(\calL^{n}(\vec{V}^0){(s',\vec{r}{(s')})} - F(s,a) \big).
\end{equation}
Then, in order to reach a rentier configuration in at most $n+1$ steps with probability at least $p$, the strategy can proceed as follows: in the first step, it chooses the aforementioned action $a$. In the second step, the play is in some state $s'$ with probability $T(s,a)(s')$ and the current wealth is greater than $\calL^{n}(\vec{V}^0){(s',\vec{r}{(s')})}$ (by \eqref{eq:nstep1}). By induction, the strategy can then switch to a strategy that reaches a rentier configuration in at most $n$ steps with probability at least $\vec{r}{(s')}$. Because $\vec{r}\in B(s,a,p)$, it follows that this strategy ensures reaching a rentier configuration from $(s,y)$ in at most $n+1$ steps with probability at least $p$.
\end{proof}

We can now finish the proof of Proposition \ref{prop:finite-memory-reach}. 
We have $\norm{\WRvec-\calL^n(\vec{V}^0)} \leq \frac{1}{\rho^n}\cdot \norm{\WRvec-\vec{V}^0}\leq \frac{1}{\rho^n}\cdot(U(M)-L(M))$ (where the first inequality follows from Lemmas \ref{lem:bellman-fixed-point} and \ref{lem:contraction}). It follows that  for $n=
\big\lceil \frac{\log{(U(M)-L(M))}+\log{\eps^{-1}}+2}{\log \rho} \big\rceil$ it holds $\norm{\WRvec-\calL^{n}(\vec{V}^0)} \leq \eps/4$. In particular, $x+\eps/2 \geq \WR{s,\Val[M]{s,x}}+\eps/2> \calL^{n}(\vec{V}^0){(s,\Val[M]{s,x})}$.
Thus, the strategy $\sigma_{\eps}$ can be chosen to be the strategy $\sigma$ from Lemma~\ref{lem:n-step-values} for $p=\Val{s,x}$, $n$ and $y = x+\eps/2$.

\subsection{Proof of Lemma~\ref{lem:discretization}}
\label{sec:discretization}

\begin{reflemma}{lem:discretization}
Let $(s,y)$ be an arbitrary configuration of $M$. Then the following holds:
\begin{enumerate}
 \item For every $\sigma\in \St[M]$ there is $\pi\in \St[M_{\lambda,n}]$ such that
  $$\Prb^{\pi}_{M_{\lambda,n},([s,y]_{\lambda},0)}(\Hit) \geq \Prb^{\sigma}_{M,(s,y)}(\Cover{0}).$$
 \item There is $\sigma\in \St[M]$ such that
 $$\Prb^{\sigma}_{M,(s,y)} (\Cover{n\cdot\lambda\cdot\rho^{n}}) \geq \sup_{\pi}\,\Prb^{\pi}_{M_{\lambda,n},([s,y]_{\lambda},0)} (\Hit) \mydef v,$$
  where the supremum is taken over $\St[M_{\lambda,n}]$. Moreover, the number $v$ and a finite representation of the strategy $\sigma$ can be computed in time $\size{M_{\lambda,n}}^{\calO(1)}$.
\end{enumerate}
\end{reflemma}
\begin{proof}
Again, we remind the reader that we are now restricted to deterministic strategies (because of Proposition~\ref{prop:lgug}).
For the purpose of this proof, let us define the \emph{absorbing vertices} of $M_{\lambda,n}$ to be all vertices of this MDP that are not of the form $(s,D,i)$ with $D$ bounded interval and $i < n-1$. Moreover, we denote by $B$ the set of all histories in which the last vertex is of the form $(t,(U(M,t),\infty),i)$ while all the previous vertices are non-absorbing. We also denote by $C$ the set of all histories in $M$ that contain exactly one rentier configuration (and thus, their every proper prefix does not contain any such configuration). Note that $\Hit = \bigcup_{u \in B} \Cone(u)$ and $\Cover{0}=\bigcup_{v \in C} \Cone(v)$. In the following we assume that the initial configuration $(t_0,x_0)$ of all runs in $M$ is equal to $(s,y)$ and that the initial vertex $(s_0,D_0,0)$ of all runs in $M_{\lambda,n}$ is equal to $([s,y]_{\lambda},0)$.

First we describe certain natural correspondence between runs in $M$ and $M_{\lambda,n}$, which will be used throughout the proof. Let $X$ be a set of all histories in $M_{\lambda,n}$ that do contain at most one absorbing vertex. For any history $X \ni u = (s_0,D_0,0)a_1 \cdots a_{k}(s_k,D_k,k)$ in $M_{\lambda,n}$ there is exactly one history $f(u)=(t_0,x_0)b_1\cdots b_{k}(t_k,x_k)$ in $M$ such that, $s_i = t_i$ for all $0\leq i \leq k$ and $a_i=b_i$ for all $1\leq i \leq k$. Note that $f$, viewed as a function from $X$ to the histories of $M$, is injective.%

On the other hand, for every history $v=(t_0,x_0)b_1\cdots b_{k}(t_k,x_k)$ in $M$  there is a unique history $g(v)=(s_0,D_0,0)a_1 \cdots a_{k}(s_k,D_k,k)$ in $M_{\lambda,n}$ such that for all $0 \leq i \leq k$ the vertex $(s_i,D_i,i)$ is either absorbing, or $s_i = t_i$ and $a_{i+1}=b_{i+1}$.\footnote{The equality of actions is considered only if $i\neq k$} A straightforward induction reveals, that for all $0 \leq i \leq k$ we have $w_{D_i}\geq x_i$ (we always round the numbers \emph{up} to the nearest multiple of $\lambda$). It follows that $\Hit \supseteq \bigcup_{v \in C} \Cone(g(v))$. The function $g$ viewed as a mapping from $C$ to histories in  ${M_{\lambda},n}$, may not be injective. Below, we use $\kernel\, g$ to denote the kernel of $g$, and $[v]_g$ to denote the equivalence class of $v$ in $C/\kernel\, g $. %

(1.) Fix some strategy $\sigma$ in $M$. We define strategy $\pi$ as follows: for every history $u$ that does not contain an absorbing vertex we set $\pi(u)=\sigma(f(u))$. For other histories (i.e. those that reach an absorbing vertex from which there is no escape), $\pi$ can choose any action. It is easy to verify that for every history $v$ in $M$ we have $$\sum_{u\in[v]_g}\Prb^{\sigma}_{M,(s,y)} \big(\Cone(u)\big)= \Prb^{\pi}_{M_{\lambda,n},[s,y]_{\lambda}} \big(\Cone(g(v))\big).$$ It follows that
\begin{align*}
 \Prb^{\pi}_{[s,y]_{\lambda}}(\Hit) &\geq \Prb^{\pi}_{[s,y]_{\lambda}}\left( \bigcup_{v \in C} \Cone(g(v))\right)=  \Prb^{\pi}_{[s,y]_{\lambda}}\left(\bigcup_{[v]_g \in C/\kernel g } \Cone(g(v)) \right)\\&= \sum_{[v]_g \in C/\kernel g} \Prb^{\pi}_{[s,y]_{\lambda}} \big(\Cone(g(v))\big) =\sum_{v \in C}\sum_{u\in [v]_g} \Prb^{\sigma}_{(s,y)} \big(\Cone(u)\big)\\ &= \sum_{v \in C}\Prb^{\sigma}_{(s,y)} \big(\Cone(v)\big)= \Prb^{\sigma}_{(s,y)}(\Cover{0}),
\end{align*}
where the first equality on the second line follows from the fact that $\Cone(h)$ and $\Cone(h')$ are disjoint events for $h\neq h'$ when $h$ is not a prefix of $h'$ and vice versa.

(2.) By standard results on MDPs with reachability objectives there is a memoryless strategy $\pi^*$ such that $\Prb^{\pi^*}_{[s,y]_{\lambda}} (\Hit)= \sup_{\pi}\Prb^{\pi}_{[s,y]_{\lambda}} (\Hit)$, where the maximum is taken over all strategies. It thus suffices to prove that there is a strategy $\sigma$ in $M$ satisfying $\Prb^{\sigma}_{(s,y)} (\Cover{n\cdot\lambda\cdot\rho^{n+1}}) \geq \Prb^{\pi^*}_{[s,y]_{\lambda}} (\Hit)$.

Let $u = (s_0,D_0,0)a_1 \cdots a_{k}(s_k,D_k,k)$, $u\in B$ be a history in $M_{\lambda,n}$ and let $f(u) = (t_0,x_0)b_1\cdots b_{k}(t_k,x_k)$ be the corresponding history in $M$. We prove by induction on $i$ that for every $0 \leq i < k$ we have $w_{D_k}\leq x_k + (i+1)\lambda\rho^i$. The case $i=0$ is trivial, since $(s_0,D_0)=[t,y]_{\lambda}$ and $(t_0,x_0)=(t,y)$, so $w_{D_0}-x_0 \leq \lambda$. Suppose now that $i>0$ and that $w_{D_{i-1}} - x_{i-1} \leq i\lambda\rho^{i-1}$.
\begin{align}
 w_{D_i} - x_i &\leq \rho w_{D_{i-1}} + F(s_{i-1},a_{i-1}) + \lambda - x_i \notag\\ &= \rho w_{D_{i-1}} + F(s_{i-1},a_{i-1}) + \lambda - \rho x_{i-1} - F(t_{i-1},b_{i-1})
 \notag\\ &=\rho(w_{D_{i-1}}-x_{i-1}) + \lambda \leq i\lambda\rho^i + \lambda \leq (i+1)\lambda\rho^i.\label{eq:discretization1}
\end{align}
Here, the first inequality follows from the fact that $D_i$ is an interval of length $\lambda$ containing $\rho w_{D_{i-1}} + F(s_{i-1},a_{i-1})$, the first equality on the third line follows from $t_{i-1}=s_{i-1}$ and $b_{i-1}=a_{i-1}$, while the next inequality follows from the induction hypothesis. As a consequence, we have
\begin{align*}
 x_k &= \rho x_{k-1} + F(t_{k-1},b_k) = \rho x_{k-1} + F(s_{k-1},a_{k-1}) \\&\geq \rho w_{D_{k-1}}  + F(s_{k-1},a_{k-1})- k\lambda\rho^{k+1} \geq U(M,s_k) - k\lambda\rho^{k},
\end{align*}
where the first inequality on the second line follows from \eqref{eq:discretization1}, while the next inequality follows from the fact that the last vertex of history $u$ is of the form $(t,(U(M,t),\infty),k)$.

Thus, $\Cover{n\lambda\rho^{n}}\supseteq \bigcup_{u \in B}\Cone(f(u))$. We now proceed similarly as in (1.). We define strategy $\sigma$ as follows: for a given history $v$, if $g(v)\in X$, $\sigma(v)=\pi^*(g(v))$, while for the other histories, $\sigma$ chooses an arbitrary action. Note that this representation of $\sigma$ can be computed in time polynomial in $\size{M_{\lambda,n}}$, since we can use standard polynomial-time algorithm for MDPs with reachability objectives to compute $\pi^*$, and $g(v)$ can be computed in time linear in length of $v$.\footnote{It can be actually computed online as new configurations are visited during the play.} It is easy to verify, that for every history $u \in X$ we have $\Prb^{\pi^*}_{M_{\lambda,n},[s,y]_{\lambda}} (\Runs(u))= \Prb^{\sigma}_{M,(s,y)}\big(\Runs(f(u))\big)$. Combining the previous observations we get
\begin{align*}
 \Prb^{\sigma}_{(s,y)} \big(\Cover{n\lambda\rho^{n}}\big) &\geq \Prb^{\sigma}_{(s,y)}\left( \bigcup_{u \in B} \Cone(f(u)) \right)= \sum_{u \in B} \Prb^{\sigma}_{(s,y)}\big(\Cone(f(u)) \big) \\
 &= \sum_{u \in B}\Prb^{\pi^*}_{[s,y]_{\lambda}} (\Cone(u)) = \Prb^{\pi^*}_{[s,y]_{\lambda}} (\Hit),
\end{align*}
where in the last equality on the second line we use the fact that $f$ is injective.
\end{proof}

\subsection{Proof that $n\cdot \lambda \cdot \rho^{n} \le \frac{\eps}{2}$}
\label{app:neps}
We have
\begin{align*}
 n\cdot \lambda \cdot \rho^{n}
  &\leq \frac{\eps^3}{64(U(M)-L(M))^2}\cdot\rho^{ \frac{\log{(U(M)-L(M))}+\log{\eps^{-1}}+2}{\log \rho} +1}\\
  &\leq \frac{\eps^3}{64(U(M)-L(M))^2}\cdot 2^{\log{(U(M)-L(M))}+\log{\eps^{-1}}+2}\cdot\varrho\\
  &\leq {\frac{\rho\cdot\eps}{4(U(M)-L(M))}} \cdot\frac{\eps}{2}\\
  &\leq \frac{\eps}{2}
\end{align*}
 where the last inequality holds because we assumed that $\log\rho< \log(U(M)-L(M))+\log(\eps^{-1})+2$.

\subsection{Complexity Analysis for Theorem~\ref{thm:quantitative-algorithm}}
\label{sec:alg-complexity}
Here we conclude the proof of Theorem~\ref{thm:quantitative-algorithm}.
From the previous observations we have that the complexity is $\size{M_{\lambda,n}}^{\calO(1)}$, which can be rewritten as
\begin{align*}
\left( |S|\cdot|A| \cdot\log\left(p_{\min}^{-1}\right)\cdot n\cdot\lambda^{-1} \right)^{\calO(1)} 
= \left( |S|\cdot|A|\cdot\log\left(p_{\min}^{-1}\right) \cdot \frac{U(M)-L(M)}{\eps\cdot\log\rho} \right)^{\calO(1)}.
\end{align*}
Noting that $U(M)-L(M)\leq 2 r_{\max}/(\rho-1)$ and $1/\log\rho \leq (1+1/(\rho-1))$, we conclude that the complexity is indeed polynomial in $|S|\cdot|A| \cdot\log\left(p_{\min}^{-1}\right)$ and exponential in $\log(|r_{\max}|/(\rho-1))$ and $\log(1/\eps)$.

\subsection{Proof of Theorem~\ref{thm:quant-hardness}}
\label{app:hardness}

\begin{reftheorem}{thm:quant-hardness}
The problem of deciding whether $\WR{s,p} \leq x$ for a given $x$ is $\NP$-hard. Furthermore, existence of any of the following algorithms is not possible unless P$=$NP:
 \begin{enumerate}
  \item An algorithm approximating $\WR{s,p}$ up to the absolute error $\delta$ in time polynomial in $|S|\cdot|A|\cdot\log\left(p_{\min}^{-1}\right)$ and $\log(|r_{\max}|/(\rho-1))$ and exponential in $\log(1/\delta)$.
  \item An algorithm approximating $\WR{s,p}$ up to the absolute error $\delta$ in time polynomial in $|S|\cdot|A|\cdot\log\left(p_{\min}^{-1}\right)$ and $\log(1/\delta)$ and exponential in $\log(|r_{\max}|/(\rho-1))$.
 \end{enumerate}
Above, the numbers $r_{\max}$ and $p_{\min}$ are as in Theorem~\ref{thm:quantitative-algorithm}.
\end{reftheorem}
\begin{proof}
We begin with part (1.), the second part is very similar.
We give the proof by reduction from the Knapsack problem. Let us have an instance of a Knapsack problem with items $1,\ldots, n$ (we assume $n\geq 2$), where the weight and value of the $i$th item is $w_i$ and $v_i$, respectively, and where the bound on the weight and value of the items to be put in the knapsack are $W$ and $V$, respectively. We denote $w_{\tot}=\sum_{1 \leq i \leq n} w_i$ and $v_{\tot}=\sum_{1 \leq i \leq n} v_i$. Without loss of generality we assume that: all the numbers $v_i$, $w_i$ are nonzero, and that the item weights are integers (this restriction of Knapsack is still NP-hard); and that $v_{\tot} < 1/n^2$ (otherwise we can transform the instance by dividing all the numbers $v_i$ and number $V$ by $v_{\tot}\cdot n^2$, without influencing the existence of a solution). %

We show how to compute, in time polynomial in the encoding size of the Knapsack instance, a solvency MDP \mbox{$M=(S, A,T, F, \rho)$} with an interest rate $\rho=1+\frac{1}{4n^2}$, and a number $p$ such that there is a solution to the instance of Knapsack if and only if $\WR{s_1,p}=0$ (for some distinguished state $s_1$ of $M$). The interest rate $\rho$ is chosen in such a way that the inequality $\frac{\rho^{2n}}{4} \leq \frac{1}{2}$ holds.

First, we put $S =\{s_i,s^+_i,s^-_i\mid 1 \leq i \leq n\}\cup\{s_{n+1},t_1,t_2,t_3\}$ and $A = \{a^+_i,a^-_i\mid 1\leq i \leq n\}\cup\{b\}$. %

Next, we set $\alpha=1/n^2$ and define the transitions as follows:
\begin{itemize}
 \item $T(s_i, a^+_i) = [s^+_i\mapsto 1]$, for all $1 \leq i \leq n$;
 \item $T(s_i, a^-_i) = [s^-_i\mapsto 1]$, for all $1 \leq i \leq n$;
 \item $T(s^+_i, b) = [t_1 \mapsto \frac{v_i}{1-(i-1)\alpha},\,t_2 \mapsto \frac{\alpha-v_i}{1-(i-1)\alpha},\,s_{i+1} \mapsto 1 - \frac{\alpha}{1-(i-1)\alpha}]$ for all $1\leq i\le n$ (note that this indeed defines a probability distribution, since $v_{\tot} \leq \alpha < 1- n\alpha$);
 \item $T(s^-_i, b) = [t_3 \mapsto \frac{\alpha}{1-(i-1)\alpha},\,s_{i+1}  \mapsto 1-\frac{\alpha}{1-(i-1)\alpha}]$ for all $1\le i\le n$;
 \item $T(t_1,b)=[t_1 \mapsto 1]$, $M(t_2,b)=[t_2 \mapsto 1]$, and $M(t_3,b)=[t_3 \mapsto 1]$;  %
 \item $T(s_{n+1},b)=[s_{n+1}\mapsto 1]$. 
\end{itemize}

The rewards are defined in the following way:
\begin{itemize}
 \item $F(s^-_{i},b)=w_i\cdot\rho^{-2(n-i)}$, for all $1\leq i \leq n$;
 \item $F(t_1,b)=1$, $F(t_2,b)=F(t_3,b)=-2(w_{\tot}+1)$;
 \item $F(s_{n+1},b)=-(w_{\tot}-W)/4n^2$. This ensures that $U(M,s_{n+1})=w_{\tot} - W$, so a run that visits $s_{n+1}$ is winning if and only if the current wealth upon entering $s_{n+1}$ is at least $w_{\tot} - W$.
 \item All other rewards are zero.
\end{itemize}

Figure \ref{fig:hardnessp} illustrates the construction on a simple example.
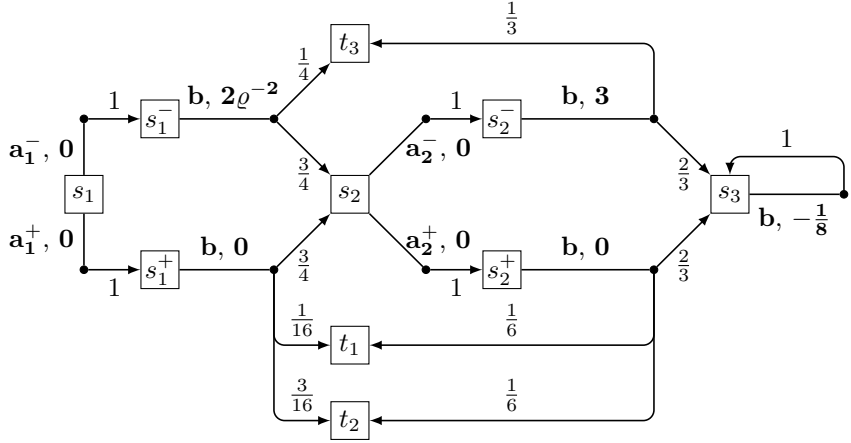
\begin{figure}
\centering

\begin{tikzpicture}
\node[good state] at (0,0) (s1) {$s_1$};
\node[good state] at (3.5,0) (s2) {$s_2$};
\node[good state] at (8.5,0) (s3) {$s_3$};

\node[good state] at (1,1) (sm1) {$s_1^-$};
\node[good state] at (5.5,1) (sm2) {$s_2^-$};
\node[good state] at (1,-1) (sp1) {$s_1^+$};
\node[good state] at (5.5,-1) (sp2) {$s_2^+$};
\node[good state] at (3.5,2) (t3) {$t_3$};
\node[good state] at (3.5,-2) (t1) {$t_1$};
\node[good state] at (3.5,-3) (t2) {$t_2$};

\node[distr] at (0,1) (d1-) {};
\node[distr] at (0,-1) (d1+) {};
\node[distr] at (2.5,1) (d1--) {};
\node[distr] at (2.5,-1) (d1++) {};
\node[distr] at (4.5,1) (d2-) {};
\node[distr] at (4.5,-1) (d2+) {};
\node[distr] at (7.5,1) (d2--) {};
\node[distr] at (7.5,-1) (d2++) {};
\node[distr] at (10,0) (df) {};

\draw[trarr] (s1) -- (d1-) node[midway, left] {$\mathbf{a^-_1}$, $\mathbf{0}$} -- (sm1) node[midway, above] {$1$};
\draw[trarr] (s2) -- (d2-) node[midway, right] {$\mathbf{a^-_2}$, $\mathbf{0}$} -- (sm2) node[midway, above] {$1$};
\draw[trarr] (s1) -- (d1+) node[midway, left] {$\mathbf{a^+_1}$, $\mathbf{0}$} -- (sp1) node[midway, below] {$1$};
\draw[trarr] (s2) -- (d2+) node[midway, right] {$\mathbf{a^+_2}$, $\mathbf{0}$} -- (sp2) node[midway, below] {$1$};
\draw[trarr] (sp1) -- (d1++) node[midway, above] {$\mathbf{b}$, $\mathbf{0}$} -- (s2) node[midway, below] {$\frac{3}{4}$};
\draw[trarr] (sp2) -- (d2++) node[midway, above] {$\mathbf{b}$, $\mathbf{0}$} -- (s3) node[midway, below] {$\frac{2}{3}$};
\draw[trarr] (sm1) -- (d1--) node[pos=0.6, above] {$\mathbf{b}$, $\mathbf{2\rho^{-2}}$} -- (s2) node[midway, below] {$\frac{3}{4}$};
\draw[trarr] (sm2) -- (d2--) node[midway, above] {$\mathbf{b}$, $\mathbf{3}$} -- (s3) node[midway, below] {$\frac{2}{3}$};
\draw[trarr] (d1--) -- (t3) node[midway, above] {$\frac{1}{4}$};
\draw[trarr] (d2--) -- (t3-|d2--) -- (t3) node[midway, above] {$\frac{1}{3}$};
\draw[trarr] (d1++) -- (t1-|d1++) -- (t1) node[midway, above] {$\frac{1}{16}$};
\draw[trarr] (d2++) -- (t1-|d2++) -- (t1) node[midway, above] {$\frac{1}{6}$};
\draw[trarr] (d1++) -- (t2-|d1++) -- (t2) node[midway, above] {$\frac{3}{16}$};
\draw[trarr] (d2++) -- (t2-|d2++) -- (t2) node[midway, above] {$\frac{1}{6}$};
\draw[trarr] (s3) -- (df) node[midway, below] {$\mathbf{b}$, $\mathbf{-\frac{1}{8}}$} -- (10,0.5) -- (8.5,0.5) node[midway, above] {$1$} -- (s3);
\end{tikzpicture}
\caption{Solvency MDP $M$ constructed for a simple Knapsack instance with items $1,2$ such that $w_1 = 2$, $w_2 = 3$, $v_1 = 1/16$, $v_2=1/8$, $W=3$ and $V=1/8$. We have $\rho = 1+1/16$ and $p=7/8$. To achieve greater compactness the picture omits loops under action $b$ on states $t_1$, $t_2$ and $t_3$. Action $b$ is rewarded with $-12$ in states $t_2$ and $t_3$.}
\label{fig:hardnessp}
\end{figure}

Finally, we set $p=1+V-1/n$. Clearly, the MDP $M$ and number $p$ can be computed in polynomial time. Note that $p>1$ would imply $V>v_{\tot}$, so in this case the Knapsack does not admit a solution. Thus, we assume that $p\in [0,1]$. %

Under any strategy the state $s_{i+1}$ is reached with probability $(1-i\alpha)$. This can be shown by induction on $i$,
since
\begin{eqnarray*}
 (1-i\alpha)(1-\frac{\alpha}{1-i\alpha})
 &=& 1 - \frac{\alpha}{1-i\alpha} - i\alpha + \frac{i\alpha^2}{1-i\alpha}
 = 1- \frac{\alpha + i\alpha - i^2\alpha^2 - i\alpha^2}{1-i\alpha}\\
 &=& 1-\frac{(i+1)\alpha(1-i\alpha)}{1-i\alpha}
 = 1-(i+1)\alpha
\end{eqnarray*}
Hence we get that the state $s_{n+1}$ is reached with probability $(1-1/n)$.
Further, the probability that
a run contains both $s_i$ and $t_1$ is $v_i$
if the strategy picks $a^+_i$ in the state $s_{i}$, and $0$ otherwise. So let us choose any (deterministic
 ) strategy $\sigma$ and an arbitrary initial configuration $(s_1,x)$ with $x\in [0,1/4]$. Obviously, we interpret the choice of action $a^+_i$ by $\sigma$ in $s_i$ as choosing the item $i$ to be packed into the knapsack. Denote $I^{+}_{\sigma}\subseteq \{1,\dots,n\}$ the set of all indexes $i$ such that $\sigma$ chooses action $a^{+}_i$ in $s_i$ (i.e. the set of items chosen to be packed into the knapsack). A straightforward induction on $i$ reveals that conditional on reaching the state $s_i$, the current wealth upon reaching this state is
$$ \sum_{\substack{j \in \{1,\dots,n\}\setminus I^+_{\sigma}\\ j<i }} w_j\cdot \rho^{-2(n-i+1)} +  \rho^{2(i-1)}\cdot x.$$
This has two consequences. First, until the run reaches one of the states $t_1$, $t_2$ or $s_{n+1}$ the current wealth is always bounded by $0$ from below and by $w_{\tot}+1$ from above. Thus, a run starting in $(s_1,x)$ is winning if it reaches $t_1$ and losing if it reaches $t_2$ or $t_3$. Second, if the run reaches the state $s_{n+1}$ (this happens with probability $(1-1/n)$), then the current wealth upon reaching this state is equal to 
\begin{equation}\label{eq:hardness1}\sum_{i \in \{1,\dots,n\}\setminus I^{+}_{\sigma}} w_i + \rho^{2n}\cdot x.\end{equation}

Thus, under $\sigma$ the conditional probability of winning on condition that the play reaches $s_{n+1}$ is 1 if and only if the items in $\{1,\dots,n\}\setminus I^+_{\sigma}$ have total weight at least $w_{\tot} - W - \rho^{2n}\cdot x$, or in other words, if the items in $I^{+}_{\sigma}$have total weight at most $W+\rho^{2n}\cdot x \leq W+1/2$. Otherwise, this conditional probability is 0. Since we assume that the item weights are integers, the aforementioned probability is 1 iff the items contained in $I^{+}_{\sigma}$ have total weight at most $W$. 

We claim that for any strategy $\sigma$ the set $I^+_{\sigma}$ is a solution of the original instance of Knapsack (i.e. set of items selected for inclusion into the knapsack, which satisfies the usual constraints on weight and value) if and only if $\Prb^{\sigma}_{(s_1,x)}(\Win)\geq p$.

First suppose that $I^+_{\sigma}$ is a solution to the Knapsack instance. Since the total weight of items in $I^+_{\sigma}$ is at most $W$, from \eqref{eq:hardness1} it follows that under $\sigma$, once the play reaches $s_{n+1}$ (this happens with probability $1-1/n$) the current wealth is at least $w_{\tot} - W = U(M,s_{n+1})$, so conditional on reaching $s_{n+1}$ the probability of winning is 1. Moreover, we know that every run that reaches $t_1$ is winning as well. We know that this happens with probability $\sum_{i \in I^{+}_{\sigma}} v_i \geq V$. We conclude that by using strategy $\sigma$ we win from $(s_1,x)$ with probability at least $1-1/n+V = p$.

On the other hand, if $I^{+}_{\sigma}$ is not a solution of the original instance, there are two cases to consider. Either  $\sum_{i \in \{1,\dots,n\}\setminus I^{+}_{\sigma}} w_i < w_{\tot} - W$, in which case the probability of winning under $\sigma$ is at most $v_{\tot}$, because the conditional probability of winning upon reaching $s_{n+1}$ is $0$, and the only other way to win is to reach $t_1$. Or $\sum_{i \in I^{+}_{\sigma}} v_i < V$, in which case the probability of winning is strictly smaller than $1-1/n+V$ (probability of reaching $s_{n+1}$, where the conditional probability of winning can be 1, plus the probability of visiting $t_1$ which is $\sum_{i \in I^{+}_{\sigma}} v_i < V$). In either case, the probability of winning is less than $p$.

It follows that the instance of Knapsack admits a solution if and only if there is $\sigma$ such that $\Prb^{\sigma}_{(s_1,x)} (\Win) \geq p$, or in other words, iff $\Val[M]{s_1,x}\geq p$ for every $x \in [0,1/4]$. To recognize whether this is the case, it suffices to approximate the value $\WR{s_1,p}$ up to the absolute error $\delta = 1/8$. Furthermore, for the constructed MDP $M$ we have $\log(|r_{\max}|/(\rho-1))=\log(\mathrm{poly}(w_{\tot}\cdot n^2))$ for some polynomial $\mathrm{poly}$. Thus, the existence of an algorithm approximating $\WR{s_1,p}$ in time polynomial in $|S|\cdot|A|\cdot\log\left(p_{\min}^{-1}\right)\cdot\log(|r_{\max}|/(\rho-1))\cdot\delta^{-1}$ would imply the existence of an polynomial-time algorithm for Knapsack.

(2.) For the second part of the theorem, the proof is almost the same, we divide all the rewards in $M$ by some sufficiently large number, forcing $r_{\max}$ to be polynomial in the encoding size of the Knapsack instance. We then show 
that in order to decide whether $\WR{s,p}\leq 0$ (and thus, whether the instance admits a solution), it suffices to approximate $\WR{s,p}$ up to the absolute error $\calO(1/w_{\tot})$, where $w_{\tot}$ is the sum of weights over all items in the instance.

Formally, the only difference is in the gain function $F$ of the constructed MDP $M$. We put 
\begin{itemize}
 \item $F(s^-_{i},b)=w_i\cdot\rho^{-2(n-i)}/w_{\tot}$, for all $1\leq i \leq n$;
 \item $F(t_1,b)=1$, $F(t_2,b)=F(t_3,b)=-2$;
 \item $F(s_{n+1},b)=-4(1-W/w_{\tot})\cdot n^2$. This ensures that a run that hits $s_{n+1}$ is winning if and only if the current wealth upon entering $s_{n+1}$ is at least $1-W/w_{\tot}$.
 \item Other rewards are zero.
\end{itemize}
It is again easy to verify that the states $t_1$ and $t_2$ are always winning and losing, respectively, and that for a given strategy $\sigma$ the current wealth upon entering $s_{n+1}$ is equal to $\sum_{i \in \{1,\dots,n\}\setminus I^{+}_{\sigma}} w_i/w_{\tot} + \rho^{2n}\cdot x$, where $x$ is the initial wealth. Now let $x$ be any number from the interval $[0,1/(4w_{\tot})]$ and $\sigma$ be any strategy. Then the conditional probability of winning on condition that $s_{n+1}$ is reached is 1 iff $\sum_{i \in \{1,\dots,n\}\setminus I^{+}_{\sigma}} w_i/w_{\tot} + \rho^{2n}\cdot x \geq (w_{\tot}-W)/w_{\tot}$, and 0 otherwise. Since $\rho^{2n}\cdot x\leq 1/(2w_{\tot})$, this conditional probability is 1 if and only if $\sum_{i \in \{1,\dots,n\}\setminus I^{+}_{\sigma}} w_i/w_{\tot} \geq (w_{\tot}-W-1/2)/w_{\tot}$, i.e. iff $I^{+}_{\sigma}$ contains items of weight at most $W$ (since the item weights are integers). Now we can argue in exactly the same way as in the previous part, that $I^{\sigma}_{+}$ is a 
solution to the Knapsack instance iff $\sigma$ ensures winning with probability at 
least $p=1-1/n + V$ from $(s_1,x)$. It follows that the Knapsack instance has a solution if and only if it is possible to win with probability at least $p$ from $s_1$ with any initial wealth between $0$ and $1/(4w_{\tot})$. To check whether this is the case it suffices to approximate $\WR{s_1,p}$ up to the absolute error $\delta=1/(8{w_{\tot}})$. The constructed MDP $M$ has $\log(|r_{\max}|/(\rho-1))=\log(\mathrm{poly}(n))$ for some polynomial $\mathrm{poly}$ (since $r_{\max}\leq 4n^2$). Thus, an approximation algorithm running in time polynomial in $|S|\cdot|A|\cdot\log\left(p_{\min}^{-1}\right)\cdot\log(\delta^{-1})$ and exponential in $\log(|r_{\max}|/(\rho-1))$
would yield a polynomial-time algorithm for knapsack.
\end{proof}

\bibliographystyleadd{abbrv}
\bibliographyadd{references}

\end{document}